\documentclass[11pt]{article}
\usepackage{fullpage,amsthm,enumitem}
\usepackage{amsmath,amsfonts,amssymb,graphicx} 
\usepackage[T1]{fontenc}

\usepackage{mathpazo} 
\usepackage[dvipsnames]{xcolor} 
\usepackage{hyperref}
\usepackage{tikz-cd}
\usepackage{pgfplots}
\usepackage{doi}
\usepackage{esint}
\usepackage{multicol}
\usepackage{verbatim}
\usepackage{float}
\usepackage{subcaption}
\usepackage{slashed}
\usetikzlibrary{decorations.markings}
\newtheorem{thm}{Theorem}
\newtheorem{cor}[thm]{Corollary}
\newtheorem{lem}[thm]{Lemma}
\newtheorem{prop}[thm]{Proposition}
\newtheorem{defn}[thm]{Definition}
\newtheorem{ex}[thm]{Example}
\newtheorem{rem}[thm]{Remark}

\newtheorem{notation}[thm]{Notation}

\newcommand{\vertex}[1]{
	\filldraw (#1) circle (4pt);
	\filldraw[white] (#1) circle (2pt)
	}

\tikzset{edge/.style={decorate, decoration=snake}}

\newcommand{\ncvertexh}[1]{\filldraw (#1) circle (6.25pt);
	{\color{white}\filldraw (#1) circle (5.1pt);}
	\node at (#1) {\tiny$f$};}

\newcommand{\N}{\mathbb{N}}

\newcommand{\R}{\mathbb{R}}

\renewcommand{\O}{\mathcal{O}}

\newcommand{\bfr}{\mathfrak{b}}

\newcommand{\vect}[2]{\begin{pmatrix}#1\\#2\end{pmatrix}}

\newcommand{\f}{\textnormal{fi}}
\newcommand{\Efi}{E_\textnormal{fi}}
\newcommand{\Vfi}{V_\textnormal{fi}}

\newcommand{\sgn}{\mathrm{sgn}\;}

\DeclareMathOperator{\Tr}{Tr}

\newcommand{\Ampl}{\operatorname{Ampl}}
\newcommand{\diag}{\operatorname{diag}}
\def\multinom#1#2{\ensuremath{\left(\kern-.3em\left(\genfrac{}{}{0pt}{}{#1}{#2}\right)\kern-.3em\right)}}

\begin{document}
\author{Eva-Maria Hekkelman, Teun D. H. van Nuland, Jesse Reimann}
\title{Power counting in the spectral action matrix model}
\maketitle

\begin{abstract}
We derive power counting formulas for ribbon graph amplitudes that were recently independently discovered in two contexts, namely as a generalization of the Kontsevich model, and as corresponding to a matrix model approach to the spectral action. The Feynman rules are based on divided difference functions of eigenvalues of an abstract Dirac operator. We obtain formulas for the order of divergence, depending on the spectral dimension $d$, the order of decay of the test function $f$ of the spectral action, and the graph properties. Several consequences are discussed, such as the fact that all graphs with maximal order of divergence (at a given loop order and number of external vertices) are planar. 
To derive our main results we establish two-sided bounds for divided differences, and in particular generalize Hunter's positivity theorem to a larger class of functions. 
\end{abstract}

\section{Introduction}\label{sect: intro}
Motivated by the quest for a QFT description of the spectral action~\cite{ChamseddineConnes1997} we would like to analyze the large $N$ behavior of the correlation functions
\begin{align}\label{eq:path integral}
\frac{\int_{H_N}V_{i_1j_1}\cdots V_{i_mj_m}e^{-\hbar^{-1}\Tr(f(D+V)-f(D))}d V}{\int_{H_N}e^{-\hbar^{-1}\Tr(f(D+V)-f(D))}d V},
\end{align}
where $\hbar$ is a formal parameter, $f:\R\to\R$ is a sufficiently regular function, 
$D=\diag(\lambda_k)_{k=1}^\infty$ is a self-adjoint operator diagonalized with respect to some countable orthonormal basis, and $H_N$ is the space of hermitian matrices acting on the first $N$ basis elements\footnote{$H_N$ is equipped with the canonical measure coming from the real and imaginary parts of the components of $V$. At this point the integral is only formal.}.

Following \cite{vNvS21,vNvS22b,vNvS23}, we write 
$$\Tr(f(D+V)-f(D))=\sum_{n=1}^\infty\sum_{i_1,\ldots,i_n=1}^N \frac{1}{n}f'[\lambda_{i_1},\ldots,\lambda_{i_n}]V_{i_1i_2}\cdots V_{i_n i_1},$$ in terms of the divided differences of the derivative of $f$, defined inductively
by $f'[x]:=f'(x)$, $f'[x,y]:=\frac{f'(x)-f'(y)}{x-y}$, $f'[x,y,z]:=\frac{f'[x,z]-f'[y,z]}{x-y}=\frac{f'(x)-f'(z)}{(x-y)(x-z)}-\frac{f'(y)-f'(z)}{(x-y)(y-z)}$, \textit{et cetera}. By designating the second order term $\frac12\sum_{k,l=1}^Nf'[\lambda_k,\lambda_l]|V_{kl}|^2$ as the free theory, we can apply standard methods of Gaussian integration in order to express \eqref{eq:path integral} as a combinatorial series whose terms are ribbon graph amplitudes \cite{Eynard2016,vNvS23}. The respective ribbon graphs 
are generalized Kontsevich graphs in the terminology of~\cite{BCEG}. Although \cite{BCEG} considers polynomials $f$ (the order of which limits the valency of the vertices), the corresponding amplitudes turn out to be formally the same as in \cite{vNvS22b}: these amplitudes are certain fractions of divided differences of $f'$.

A stunning feature of these graph amplitudes, noted independently by \cite{BCEG,vNvS22b}, is the Ward--Takahashi identity, 
which \cite{vNvS22b} showed to imply one-loop renormalizability of the spectral action matrix model in the Gomis--Weinberg sense, raising the question what algebraic relations govern the graphs at arbitrary loop order, and if their renormalization flow can be understood. These questions, besides being inherently interesting \cite{Azarfar2024,BG2016,HKPV2022,Perez2022,Perez2025,Steinacker2010,tHooft1982}, prepare for replacing the integration space~$H_N$ in \eqref{eq:path integral} by a space of fields incorporating physical content such as the spectral Standard Model \cite{ChamseddineConnes1997,CC2012,CIS2020}, and beyond \cite{AMST2016,CCM2015,CCS2013,DM2017,Suijlekom2015}. Moreover, their positive answer would bring the spectral action more in line with noncommutative quantum field theory \cite{GW2005a,GW2005b,GW2014,Riv2007}.


\paragraph{Feynman rules}
We summarize the Feynman rules referred to above --
precise definitions are in Section \ref{sct:Definitions}. The Feynman diagrams are ribbon graphs for which not the edges but the faces are labeled by indices.
For each vertex bordered by faces with indices $i_1,\ldots,i_n$, we multiply by a factor $f'[\lambda_{i_1},\ldots,\lambda_{i_n}]$, e.g.,
\begin{align*}
\raisebox{-18pt}{
\begin{tikzpicture}[thick]
	\draw (-0.5,-0.5) to (0,0);
	\draw (-0.5,0.5) to (0,0);
	\draw (0.5,0.5) to (0,0);
	\draw (0.5,-0.5) to (0,0);
	\node at (-0.5,0) {$i$};
	\node at (0,0.5) {$j$};
	\node at (0.5,0) {$k$};
	\node at (0,-0.5) {$l$};
	\vertex{0,0};
\end{tikzpicture}}
\,\quad&=\quad f'[\lambda_i,\lambda_j,\lambda_k,\lambda_l],\\
\intertext{for each internal edge bordered by $i$ and $j$, we divide by a factor $f'[\lambda_i,\lambda_j]$, i.e.,}
\raisebox{-14pt}{
\begin{tikzpicture}[thick]
	\draw (-0.5,-0.5) to (0.5,0.5);
	\node at (-0.25,0.25) {$i$};
	\node at (0.25,-0.25) {$j$};
\end{tikzpicture}}
\,\quad&=\quad\frac{1}{f'[\lambda_i,\lambda_j]},\\
\intertext{and, finally, we sum over each unbroken face (face without external edges), e.g.,}
\raisebox{-18pt}{
\begin{tikzpicture}[thick]
\draw (-0.5,0) to[out=90,in=180] (0.5,0.5);
\draw (-0.5,0) to[out=-90,in=180] (0.5,-0.5);
\draw (0.5,0.5) to[out=-45,in=45] (0.5,-0.5);
\draw (-0.5,0) to (-0.8,0);
\draw (0.5,0.5) to (0.7,0.7);
\draw (0.5,-0.5) to (0.7,-0.7);
\vertex{-0.5,0};
\vertex{0.5,0.5};
\vertex{0.5,-0.5};
\end{tikzpicture}}
\,\quad&=\quad\sum_{k=1}^N
\raisebox{-18pt}{
\begin{tikzpicture}[thick]
\draw (-0.5,0) to[out=90,in=180] (0.5,0.5);
\draw (-0.5,0) to[out=-90,in=180] (0.5,-0.5);
\draw (0.5,0.5) to[out=-45,in=45] (0.5,-0.5);
\draw (-0.5,0) to (-0.8,0);
\draw (0.5,0.5) to (0.7,0.7);
\draw (0.5,-0.5) to (0.7,-0.7);
\node at (0.15,0) {$k$};
\vertex{-0.5,0};
\vertex{0.5,0.5};
\vertex{0.5,-0.5};
\end{tikzpicture}}\,\,.
\end{align*}

\begin{ex}
Two amplitudes in the spectral action matrix model are
\begin{align*}
	\raisebox{-20pt}{
\begin{tikzpicture}[thick]
\draw (0.45,-0.7) to (1,0);
\draw (1.55,-0.7) to (1,0);
\draw (1,0) arc (-90:270:0.5cm);
\node at (1,-0.55) {$i$};
\node at (0.2,0.8) {$j$};
\vertex{1,0};
\end{tikzpicture}
}
\quad=\quad\sum_{k=1}^N
	\raisebox{-20pt}{
\begin{tikzpicture}[thick]
	\draw (0.45,-.7) to (1,0);
	\draw (1.55,-.7) to (1,0);
	\draw (1,0) arc (-90:270:0.5cm);
	\node at (1,-.55) {$i$};
	\node at (0.2,0.8) {$j$};
	\vertex{1,0};
        \node at (1,.575) {$k$};
\end{tikzpicture}
}\quad=\quad\sum_{k=1}^N\frac{f'[\lambda_j,\lambda_k,\lambda_j,\lambda_i]}{f'[\lambda_j,\lambda_k]},
\end{align*}
and
\begin{align*}
	\raisebox{-20pt}{
\begin{tikzpicture}[thick]
	\draw (0.45,-.7) to (1,0);
	\draw (1.55,-.7) to (1,0);
	\draw (1,0.054) arc (-90:270:0.3cm);
	\draw (1,-0.035) arc (-90:270:.7cm);
	\node at (1,-.55) {$i$};
	\node at (0,1) {$j$};
	\vertex{1,0};
\end{tikzpicture}
}
\quad=\quad\sum_{k,l=1}^N
	\raisebox{-20pt}{
\begin{tikzpicture}[thick]
	\draw (0.45,-.7) to (1,0);
	\draw (1.55,-.7) to (1,0);
	\draw (1,0.054) arc (-90:270:0.3cm);
	\draw (1,-0.035) arc (-90:270:.7cm);
	\node at (1,-.55) {$i$};
	\node at (0,1) {$j$};
	\node at (1,1) {$k$};
	\node at (1,0.4) {$l$};
	\vertex{1,0};
\end{tikzpicture}
}\quad=\quad\sum_{k,l=1}^N\frac{f'[\lambda_j,\lambda_k,\lambda_l,\lambda_k,\lambda_j,\lambda_i]}{f'[\lambda_j,\lambda_k]f'[\lambda_k,\lambda_l]}.
\end{align*}
\end{ex}

For renormalization purposes, it is relevant \cite{CR2000,CR2001,GW2005a,GW2005b,ILV2012,KLV2014,LOR2015,Riv2007,RVW} to know the asymptotic behavior of these amplitudes as $N\to\infty$. 
The main objective of this paper is to prove the following formulas describing this asymptotic behavior. 

\paragraph{Main results}
Suppose that $\{\lambda_k\}_{k=1}^\infty\subseteq\R$ is a sequence of asymptotic behavior $\lambda_k\asymp k^{1/d}$ for $d\in\R_{>0}$. Suppose that $f$ is smooth, even, and satisfies, for $x\to\infty$, 
$f^{(n)}(x)\asymp (-1)^nx^{-p-n}$ ($n\in\N$) for some~$p>0$. Suppose moreover that the divided differences of $f$ do not vanish on $\{\lambda_k\}_{k=1}^\infty$. For a graph $G$ with $U$ unbroken faces, $E_{\f}$ edges which do not border a broken face, and $V_{\f}$ vertices which do not border a broken face, the amplitude of $G$ is bounded from above and below by a constant times $N^{\omega(G)}$, where
\begin{align}\label{eq:power counting formula}
\omega(G)= U+\frac{p}{d}(E_{\f}-V_{\f}).
\end{align}

However, when the assumption of vanishing divided differences is not satisfied, the divergences become (for certain graphs \textit{strictly}) larger. In this case, the amplitude is bounded from above by $N^{\tilde\omega(G)}$, where
\begin{align}\label{eq:power counting formula 2}
\tilde \omega(G)=\max(U^{\bfr}+\frac{p}{d}(E^{\bfr}_{\textnormal{fi}}-V^{\bfr}_{\textnormal{fi}})+\frac{p+1}{d}(E^{\bfr}_{10}-V^{\bfr}_{10})),
\end{align}
in which the maximum is taken over all subsets $\bfr$ of unbroken faces, $U^{\bfr},\Efi^b,\Vfi^b$ are as before when designating the elements of $\bfr$ as broken, and $V^{\bfr}_{10}/E^{\bfr}_{10}$ denotes the number of vertices/edges bordering only unbroken faces except for either exactly one element of $\bfr$ or exactly one external index $i_0$ satisfying~${f'(\lambda_{i_0})=0}$.


\paragraph{Consequences of \eqref{eq:power counting formula}}
Note that $\Efi-\Vfi\geq 0$ for any graph. A remarkable result is that a function~$f$ with \textit{faster} decay results in a \textit{higher} degree of divergence of the graph. This is not obvious from the definition of the Feynman rules, even for simple graphs like in Example 1, and even for concrete functions like $f(x)=x^{-p}$.
Because divided differences of such functions may well be negative, one might \textit{a priori} expect cancellations of terms damping the degree of divergence, but this does not happen.

Another corollary of our power counting formula is the following observation. Among the connected graphs with~$n\geq1$ external edges and $L$ loops, the maximal value of $U$ is $L-1$, and the maximal value of $\Efi-\Vfi$ is $L-1$. The maximal value $\omega(G)=L-1+\frac{p}{d}(L-1)$ is attained by a nonempty set of diagrams, by virtue of our lower bound. These diagrams with maximal divergence are precisely the planar diagrams that cannot be split into two connected components by removing one vertex and all external edges, and moreover have only one unbroken face -- cf.\ Remark \ref{rem:2-point 2-loop}.

\paragraph{Consequences of \eqref{eq:power counting formula 2}}
%
The situation where divided differences of $f'$ may vanish is more complicated (cf. Section \ref{sct:main2}), but \eqref{eq:power counting formula 2} leads us to conjecture that the diagrams with maximal divergence are of the same planar form as before -- cf.\ Remark \ref{rem:conjectures}. As we discuss in Remark \ref{rem:UV/IR}, the influence of modes $\lambda_i$ with $f'(\lambda_i)=0$ is reminiscent of the UV/IR behavior of scalar field theories on noncommutative spacetime, which underlines the need for a rigorous renormalization analysis of the spectral action beyond the weak field approximation.

\paragraph{Techniques} To prove the above power counting formulas, we introduce several techniques in the asymptotic analysis of divided differences, and prove upper and lower bounds for them, which appear to be novel. A pivotal concept throughout the proof is the weighted divided difference
$$f'\{x_1,\ldots,x_n\}:=(-1)^n x_1\cdots x_n f'[x_1,\ldots,x_n].$$
Its asymptotic behavior is more easily understood, because sending any of its variables to infinity yields another weighted divided difference.
Furthermore, a key result is that weighted divided differences are positive on large enough subsets of $\R^n$ for the functions $f$ we are considering. We thus generalize and give a new proof for Hunter's positivity theorem \cite{Hunter1977}, which is recovered by taking $f(x)=x^{-p}$ ($p\in 2\N$). Moreover, we show that for functions bounded above or below by $x^{-p}$, with similar bounds on their derivatives, the asymptotic behavior of the weighed divided differences away from the origin is determined by the variable with the smallest modulus, which is crucial in the proof of our main theorem.

Our main theorem is proved by combining this positivity and the mentioned upper and lower bounds with an interesting graph-theoretical argument.

\paragraph{Acknowledgements}
We are grateful to Martijn Caspers, Séverin Charbonnier, Harald Grosse, and Walter van Suijlekom for useful discussions. TvN thanks the Erwin Schr\"odinger Institute and the organizers and participants of the April 2023 conference 'Non-commutative Geometry meets Topological Recursion', where the motivation for this paper originated.
TvN was supported by NWO project ‘Noncommutative multi-linear harmonic analysis and higher order spectral shift’,
OCENW.M.22.070. EMH thanks the Max Planck Institute for Mathematics in Bonn for its financial support.


















\section{Definitions}
\label{sct:Definitions}

Throughout we fix a positive real number $d\in\R_{>0}$, and a sequence $\{\lambda_k\}_{k=1}^\infty\subseteq\R$ with the property that there exist numbers $K\in\N$, $c_1,c_2\in\R_{>0}$ such that
\begin{align}\label{eq:eigenvalues asymptotics assumption}
    c_1 k^{1/d}\leq|\lambda_k|\leq c_2 k^{1/d}
\end{align}
for all $k\geq K$.
We colloquially refer to $\{\lambda_k\}_{k=1}^\infty$ as \textbf{the spectrum}, because our main example of such a sequence is the spectrum of an abstract Dirac operator in a spectral triple. Typically, such operators satisfy Weyl's law,\footnote{Weyl's law is valid for the Laplacian on a bounded domain $\Omega$ in $\R^d$, and for the Laplace--Beltrami operator on a $d$-dimensional compact Riemannian manifold with or without boundary~\cite{MinakshisundaramPleijel1949}, see also~\cite{DuistermaatGuillemin1975,Ivrii1980} for refinements.
The validity of Weyl's law in noncommutative spaces is investigated in quite some generality in~\cite{EcksteinZajac2015, McDonaldSukochev2022WeylLaw, Vassilevich2007}.} which implies our assumption \eqref{eq:eigenvalues asymptotics assumption}.

\subsection{Notation} We write $\mathbb{N}$ for the set of natural numbers including zero, and $\mathbb{N}_{\ge 1}$ when zero is excluded. We write $A\lesssim B$ if there exists a constant $c>0$ such that $A\le cB$. 

\subsection{Divided differences}

\begin{defn}\label{defn:divdif}
Let $n\in\N$. Recursively, we define the $n^{\text{th}}$ order \textbf{divided difference} of a function ${f\in C^n(\R)}$ as the unique function $f^{[n]}\in C(\R^{n+1})$ such that
\begin{align*}
	f^{[0]}(x_0)&:=f(x_0),\\
	f^{[n]}(x_0,\ldots,x_n)&:=\frac{f^{[n-1]}(x_0,\ldots,x_{n-1})-f^{[n-1]}(x_1,\ldots,x_n)}{x_0-x_n},
\end{align*}
for $x_0,\ldots,x_n\in\R$ satisfying $x_0\neq x_n$. We write $f[x_0,\dotsc,x_n]:=f^{[n]}(x_0,\dotsc,x_n)$ for all $x_j\in\R$.
\end{defn}
For any $f\in C^n(\R)$, such a function $f^{[n]}\in C(\R^{n+1})$ always exists and is unique (cf.\ \cite{Dub20}) and so the divided difference is defined for possibly coinciding $x_j$'s. For $n=1$, this follows immediately from the differentiability of $f\in C^1(\R)$, and clearly $f[x,x]=f'(x)$. Furthermore, it is well-known that divided differences are invariant under permutation of variables, allowing us to occasionally assume $|x_0|\le \dotsc\le|x_n|$ without loss of generality.

There are many known expressions for divided differences. 
For example, writing out the inductive formula for divided differences yields \begin{equation}\label{eqn: divdiff_written_out}
    f[x_0,\dotsc,x_n]=\sum_{j=0}^n\frac{f(x_j)}{\prod_{k\neq j}(x_j-x_k)}\quad\text{if } x_j\neq x_k\text{ for all }j\neq k.
\end{equation}
We will furthermore be making use of a well-known integral expression for $f\in C^n(\R)$ (which follows by induction from Definition \ref{defn:divdif}, cf.~\cite[Proof of Lemma 5.1]{PSS2013} or \cite[Formula~(7.12)]{DevoreLorentz1993}), namely
\begin{equation}\label{eqn: divdiff_integral_form}
    f[x_0,\dotsc,x_n]=\int_0^1 dt_{n}\int_0^{t_{n}}dt_{n-1}\cdots \int_0^{t_2}dt_1 f^{(n)}(x_n+(x_{n-1}-x_n)t_{n}+\cdots+(x_0-x_1)t_1).
\end{equation}
\subsection{Ribbon graphs and their amplitudes}
\subsubsection{Feynman ribbon graphs}
Multigraphs are generalized graphs which are allowed to have self-loops and multiple edges between the same two vertices. 
\begin{defn}A \textbf{multigraph} is a tuple $G=(G^0,G^1)$ consisting of a set of vertices $G^0$ and a multiset $G^1$ of unordered pairs of elements of $G^0$ called edges.
\end{defn}
By definition of a multiset, an unordered pair $\{v_1,v_2\}\subseteq G^0$ may have multiple instances in~$G^1$, and those distinct instances $e\in G^1$ we call edges. For any edge $e$ that is an instance of an unordered pair $\{v_1,v_2\}$, we say that $e$ is \textit{incident} to $v_1$ and $v_2$.

\begin{figure}[H]
\centering
\raisebox{-20pt}
{\begin{tikzpicture}
	\draw[out=-90,in=-135] (0,0) to (1,0);
    \draw[out=90,in=135] (0,0) to (1,0);
	\draw (1,0) to (2,0);
    \draw (2,0) to[out=0,in=-100] (3,0.5);
    \draw (2,0) to[out=80,in=170] (3,0.5);
    \draw (2,0) to (3,-0.5);
    \draw[in=60, out = -60] (3,0.5) to (3,-0.5);
	\draw[in=90, out=90] (1,0) to (3,0.5);
    \draw[in = -90, out=-90] (1,0) to (3,-0.5);
    \filldraw (1,0) circle (2pt);
    \filldraw (2,0) circle (2pt);
    \filldraw (3,0.5) circle (2pt);
    \filldraw (3,-0.5) circle (2pt);
\end{tikzpicture}}
\caption{A multigraph.}
\end{figure}
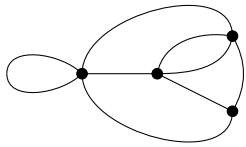

The objects called Feynman diagrams in QFT are essentially finite multigraphs, except that they may also have edges that attach on one side to a number in $\{1,\ldots,n\}$.
\begin{defn} A \textbf{Feynman diagram} is a triple $G=(G^0,n,G^1)$ consisting of a finite set of vertices $G^0$, a number $n\in\N$, and a finite multiset $G^1$ of unordered pairs of elements of $G^0\sqcup\{1,\ldots,n\}$ called edges, in such a way that for every $i\in\{1,\ldots,n\}$, $\{v,i\}$ occurs precisely once in $G^1$ for precisely one $v\in G^0$, and~$\{i,i\}$ does not occur in $G^1$.
An edge incident to an $i\in\{1,\ldots,n\}$ is called \textit{external}, and all other edges in~$G^1$ are called \textit{internal}. The set of internal edges is denoted by $G^1_{\textnormal{int}}$.
\end{defn}
 
\begin{figure}[H]
\begin{center}
\raisebox{-20pt}
{\begin{tikzpicture}
	\draw[out=-90,in=-135] (0,0) to (1,0);
    \draw[out=90,in=135] (0,0) to (1,0);
	\draw (1,0) to (2,0);
    \draw (2,0) to[out=0,in=-100] (3,0.5);
    \draw (2,0) to[out=80,in=170] (3,0.5);
    \draw (2,0) to (3,-0.5);
    \draw[in=60, out = -60] (3,0.5) to (3,-0.5);
    \draw (3,0.5) to (4,1);
    \draw (3,-0.5) to (4, -1);
	\draw[in=90, out=90] (1,0) to (3,0.5);
    \draw[in = -90, out=-90] (1,0) to (3,-0.5);
    \draw (2,0) to (2,-0.3);
    \node at (2,-0.5) {3};
    \filldraw (1,0) circle (2pt);
    \filldraw (2,0) circle (2pt);
    \filldraw (3,0.5) circle (2pt);
    \filldraw (3,-0.5) circle (2pt);
	\node at (4.2,1.1) {1};
    \node at (4.2,-1.1) {2};
\end{tikzpicture}}
\end{center}
\caption{A Feynman diagram $(G^0,3,G^1)$}
\end{figure}
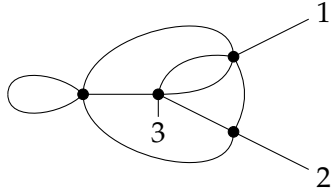

\begin{defn}
A \textbf{Feynman ribbon graph} is a Feynman diagram $G$ whose every vertex is decorated with a cyclic order on its incidences (i.e. on the multiset of edges that are incident to it, with multiplicities for self-loops). 
\end{defn}
\begin{figure}[H]
\begin{center}
\raisebox{-20pt}
{\begin{tikzpicture}[thick]
	\draw[out=-90,in=-135] (0,0) to (1,0);
    \draw[out=90,in=135] (0,0) to (1,0);
	\draw (1,0) to (2,0);
    \draw (2,0) to (3,0.5);
    \draw (2,0) to[out=0,in=-90] (2.5,0.2);
    \draw (2.5,0.3) to[out=90,in=170] (3,0.5);
    \draw (2,0) to (3,-0.5);
    \draw[in=60, out = -60] (3,0.5) to (3,-0.5);
    \draw (3,0.5) to (4,1);
    \draw (3,-0.5) to (4, -1);
	\draw[in=90, out=90] (1,0) to (3,0.5);
    \draw[in = -90, out=-90] (1,0) to (3,-0.5);
    \draw (2,0) to (2,-0.3);
    \node at (2,-0.5) {3};
    \vertex{1,0};
    \vertex{2,0};
    \vertex{3,0.5};
    \vertex{3,-0.5};
	\node at (4.2,1.1) {1};
    \node at (4.2,-1.1) {2};
\end{tikzpicture}}
\end{center}
\caption{A Feynman ribbon graph}
\end{figure}
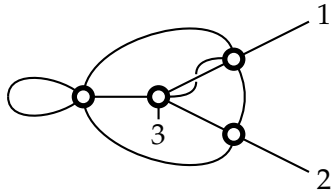

We only consider Feynman ribbon graphs up to isomorphism\footnote{This is implicit in \cite{vNvS22b,vNvS23}.}, defined as follows. 
\begin{defn}
For two Feynman ribbon graphs $G=(G^0,n,G^1)$ and $H=(H^0,n,H^1)$, a pair of functions
$$(\varphi_V,\varphi_E),\quad \varphi_V:G^0\to H^0,\quad\varphi_E:G^1\to H^1,$$
is called an isomorphism if:
\begin{itemize}
\item $\varphi_V$ and $\varphi_E$ are bijective;
\item if $(e_1,\ldots,e_k)\in (G^1)^{\times k}$ is a list of all edges incident to a vertex $v\in G^0$ written in the cyclic order associated to $v$, then $(\varphi_E(e_1),\ldots,\varphi_E(e_k))\in (H^1)^{\times k}$ is a list of all edges incident to $\varphi_V(v)\in H^0$ written in the cyclic order associated to $\varphi_V(v)$;
\item for all $i\in\{1,\ldots,n\}$, if $e\in G^1$ is incident to $i$ then $\varphi_E(e)\in H^1$ is incident to $i$
\end{itemize}
\end{defn}
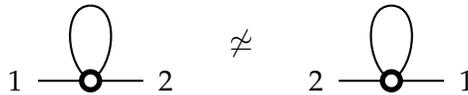
\begin{figure}[H]
\centering
\begin{tikzpicture}[thick]
\draw[out=30,in=0] (0,0) to (0,1);
\draw[out=150,in=180] (0,0) to (0,1);
\draw (-0.7,0) to (0,0);
\draw (0.7,0) to (0,0);
\vertex{0,0};
\node at (-1,0) {1};
\node at (1,0) {2};
\node at (2,0.5) {$\not\simeq$};
\draw[out=30,in=0] (4,0) to (4,1);
\draw[out=150,in=180] (4,0) to (4,1);
\draw (3.3,0) to (4,0);
\draw (4.7,0) to (4,0);
\vertex{4,0};
\node at (3,0) {2};
\node at (5,0) {1};
\end{tikzpicture}
\caption{Two nonisomorphic Feynman ribbon graphs}
\label{fig:enter-label}
\end{figure}

\subsubsection{Surface embeddings, genus, and faces}
Feynman ribbon graphs are defined abstractly, but can also be embedded without crossings on a surface, by which we mean an oriented compact two-dimensional manifold. By embedded without crossings we mean we can injectively map edges to open curves on the surface and vertices to copies of closed disks on the surface, in a way that i) preserves incidence (the union of the image of an edge $e$ and the image of a vertex $v$ is connected if and only if $e$ is incident to $v$), ii) preserves cyclic ordering (how the boundary of each disk hits the boundaries of the curves in clockwise direction), and iii) such that all disks and curves are disjoint, so in particular, none of the curves (edges) cross. This is possible because, out of an embedding with crossings, we can make an embedding without crossings by adding handles to the surface, cf.\ Figure \ref{fig:genus1b}. The \textit{genus} of a connected Feynman ribbon graph $G$ is the smallest number $g$ such that there exists a connected surface $X$ of genus $g$ on which $G$ can be embedded without crossings. If $\epsilon(G)\subseteq X$ denotes the corresponding image of~$G$ in $X$, the connected components of $X\setminus\epsilon(G)$ are called the \textit{faces} of $G$, and $F$ denotes the number of faces.
\begin{figure}
\centering
\begin{subfigure}[t]{0.23\textwidth}
\scalebox{.7}{
\begin{tikzpicture}[very thick]
\draw (-2.2,0) to (-1.5,0);
\draw (-1.5,0) arc (180:-180:1.5cm);
\draw (1.5,0) to (2.2,0);
\draw[out=45,in=135] (-1.5,0) to (0,0);
\draw[out=-45,in=-135] (0,0) to (1.5,0);
\draw[out=-45,in=-139] (-1.5,0) to (-0.1,-0.1);
\draw[out=41,in=135] (0.1,0.1) to (1.5,0);
\vertex{-1.5,0};
\vertex{1.5,0};
\draw[color=white] (-2.5,0) arc (180:-180:2.5cm);
\end{tikzpicture}
}
\caption{A Feynman ribbon graph $G$.}
\label{fig:genus1a}
\end{subfigure}
~~
\begin{subfigure}[t]{0.23\textwidth}
\scalebox{.7}{
\begin{tikzpicture}
\draw[very thick] (-2.2,0) to (-1.5,0);
\draw[very thick] (-1.5,0) arc (180:-180:1.5cm);
\draw[very thick] (1.5,0) to (2.2,0);
\draw (-0.5,0.1) to[out=75,in=180] (-0.4,0.2);
\draw (-0.4,0.2) to[out=0,in=90] (-0.2,0);
\draw (-0.2,0) to[out=-90,in=30] (-0.5,-0.5);
\draw (-0.5,0.7) to[out=-60,in=180] (-0.3,0.6);
\draw (-0.3,0.6) to[out=0,in=100] (0.25,0.25);
\draw (0.25,0.25) to[out=-80,in=110] (0.3,-0.25);
\draw (0.3,-0.25) to[out=-70,in=165] (0.55,-0.5);
\draw[very thick] (-1.5,0) to[out=45,in=180] (-0.4,0.4);
\draw[very thick] (-0.4,0.4) to[out=0,in=90] (0.05,0);
\draw[very thick] (0.05,0) to[out=-90,in=180] (0.4,-0.8);
\draw[very thick] (0.4,-0.8) to[out=0,in=-135] (1.5,0);
\draw[very thick] (-1.5,0) to[out=-45,in=-135] (-0.2,0);
\draw[very thick] (0.25,0.25) to[out=45,in=135] (1.5,0);
\vertex{-1.5,0};
\vertex{1.5,0};
\draw (-2.5,0) arc (180:-180:2.5cm);
\end{tikzpicture}
}
\caption{The graph $G$ embedded without crossings on a sphere with a handle.} 
\label{fig:genus1b}
\end{subfigure}
~~
\begin{subfigure}[t]{0.23\textwidth}
\scalebox{.7}{
\begin{tikzpicture}
\draw[very thick] (-1.5,0) arc (180:-180:1.5cm);
\draw (-0.5,0.1) to[out=75,in=180] (-0.4,0.2);
\draw (-0.4,0.2) to[out=0,in=90] (-0.2,0);
\draw (-0.2,0) to[out=-90,in=30] (-0.5,-0.5);
\draw (-0.5,0.7) to[out=-60,in=180] (-0.3,0.6);
\draw (-0.3,0.6) to[out=0,in=100] (0.25,0.25);
\draw (0.25,0.25) to[out=-80,in=110] (0.3,-0.25);
\draw (0.3,-0.25) to[out=-70,in=165] (0.55,-0.5);
\draw[very thick] (-1.5,0) to[out=45,in=180] (-0.4,0.4);
\draw[very thick] (-0.4,0.4) to[out=0,in=90] (0.05,0);
\draw[very thick] (0.05,0) to[out=-90,in=180] (0.4,-0.8);
\draw[very thick] (0.4,-0.8) to[out=0,in=-135] (1.5,0);
\draw[very thick] (-1.5,0) to[out=-45,in=-135] (-0.2,0);
\draw[very thick] (0.25,0.25) to[out=45,in=135] (1.5,0);
\draw (-2.5,0) arc (180:-180:2.5cm);
\draw[very thick] (-1.5,0) to[out=180,in=-100] (-1.9,0.8);
\draw[very thick,dotted] (-1.9,0.8) to[out=80,in=180] (0,2.2);
\draw[very thick] (1.5,0) to[out=0,in=-80] (1.9,0.8);
\draw[very thick,dotted] (1.9,0.8) to[out=100,in=0] (0,2.2);
\vertex{-1.5,0};
\vertex{1.5,0};
\node at (0,2.2) {\Large$\times$};
\end{tikzpicture}
}
\caption{The graph $G'$.}
\label{fig:genus1c}
\end{subfigure}
~~
\begin{subfigure}[t]{0.23\textwidth}
\scalebox{.7}{
\begin{tikzpicture}
\draw[very thick] (-1.5,0) arc (180:-180:1.5cm);
%
\draw (-0.5,0.1) to[out=75,in=180] (-0.4,0.2);
\draw (-0.4,0.2) to[out=0,in=90] (-0.2,0);
\draw (-0.2,0) to[out=-90,in=30] (-0.5,-0.5);
\draw (-0.5,0.7) to[out=-60,in=180] (-0.3,0.6);
\draw (-0.3,0.6) to[out=0,in=100] (0.25,0.25);
\draw (0.25,0.25) to[out=-80,in=110] (0.3,-0.25);
\draw (0.3,-0.25) to[out=-70,in=165] (0.55,-0.5);
\draw[very thick] (-1.5,0) to[out=45,in=180] (-0.4,0.4);
\draw[very thick] (-0.4,0.4) to[out=0,in=90] (0.05,0);
\draw[very thick] (0.05,0) to[out=-90,in=180] (0.4,-0.8);
\draw[very thick] (0.4,-0.8) to[out=0,in=-135] (1.5,0);
\draw[very thick] (-1.5,0) to[out=-45,in=-135] (-0.2,0);
\draw[very thick] (0.25,0.25) to[out=45,in=135] (1.5,0);
\draw[very thick] (-1.5,0) to[out=180,in=-100] (-1.9,0.8);
\draw[very thick,dotted] (-1.9,0.8) to[out=80,in=180] (0,2.2);
\draw[very thick] (1.5,0) to[out=0,in=-80] (1.9,0.8);
\draw[very thick,dotted] (1.9,0.8) to[out=100,in=0] (0,2.2);
\node at (0,2.2) {\Large$\times$};
\vertex{-1.5,0};
\vertex{1.5,0};
\draw (-2.5,0) arc (180:-180:2.5cm);
\node at (-1.65,0.6) {$1$};
\node at (-1,1.6) {$1$};
\node at (1,1.6) {$1$};
\node at (1.65,0.6) {$1$};
\node at (-1.9,-0.4) {$2$};
\node at (-1,-1.7) {$2$};
\node at (1,-1.7) {$2$};
\node at (1.9,-0.4) {$2$};
\node at (0,1) {$3$};
\node at (-0.9,0) {$3$};
\node at (0.9,0) {$3$};
\node at (-0.2,-1) {$3$};
\end{tikzpicture}
}
\caption{A choice of index numbers labeling the faces of $G'$. The vertices and edges of $G$ inherit these index numbers.}
\label{fig:genus1d}
\end{subfigure}
\caption{Surface embedding and attaching index numbers.}
\end{figure}
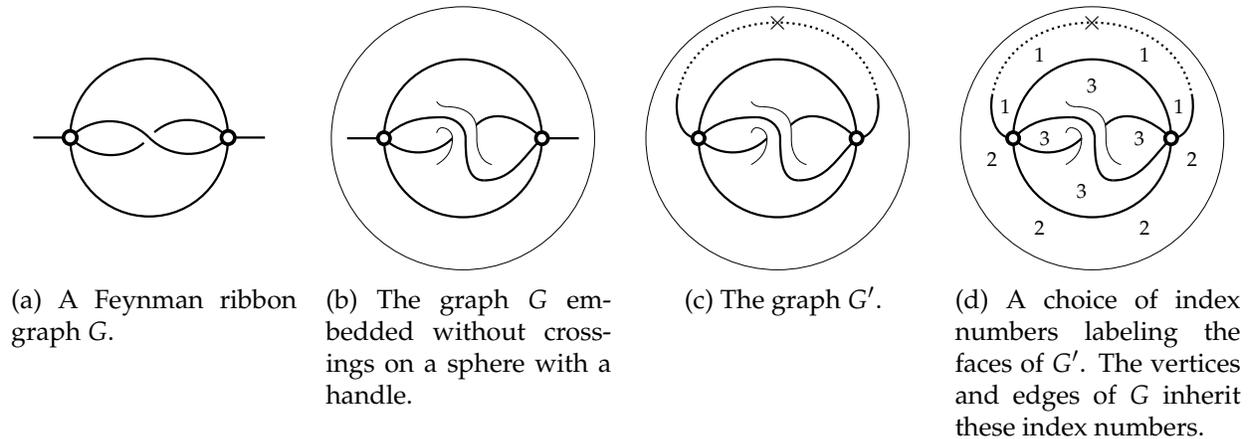
The set of faces of a nonconnected Feynman ribbon graph $G$ is the disjoint union of the sets of faces of the connected components of $G$.
A face is called \textit{broken} if at least one external edge lies in its boundary, and \textit{unbroken} otherwise. For instance, the graph of Figure \ref{fig:genus1a} has one broken face (outside) and one unbroken face (inside), as seen in Figure \ref{fig:genus1b}.

\subsubsection{Index numbers}\label{sect: Feynman_rules}
Given a Feynman ribbon graph $G$, we construct a Feynman ribbon graph $G'$ by adding to $G$ one vertex for every broken face of $G$, and attaching all external edges of that face, instead of to an element of $\{1,\ldots,n\}$, to the same vertex -- see Figure \ref{fig:genus1c}. There is a unique cyclic ordering to put on (the sets of incidences of) the new vertices in $G'$ such that the genus of $G'$ is equal to the genus of $G$, for each connected component. Thus, the Feynman ribbon graph $G'$ is determined uniquely by $G$. For bookkeeping purposes we number the faces of $G'$ by the \textit{index numbers} $1,\ldots,n+U$, where the faces $1,\ldots,n$ of~$G'$ correspond to the numbers of the external edges of $G$, and the faces $n+1,\ldots,n+U$ of $G'$ are the unbroken faces in $G$ -- see Figure \ref{fig:genus1d}.

This construction determines,
\begin{itemize}
\item for every edge $e\in G^1$, the two index numbers $\beta_1^e,\beta_2^e\in\{1,\ldots,n+U\}$ of the faces of $G'$ which border $e$, and,
\item for every vertex $v\in G^0$, the $\deg(v)$ index numbers $\alpha_1^v,\ldots,\alpha_{\deg(v)}^v\in\{1,\ldots,n+U\}$ of the faces of $G'$ which border $v$.
\end{itemize}
In the following, we assume that every Feynman ribbon graph $G$ carries a distinguished collection of index numbers $\{\alpha_j^v\}_{v\in G^0,j\in\{1,\ldots,\deg(v)\}}\sqcup\{\beta_j^e\}_{e\in G^1,j\in\{1,2\}}$.

\subsubsection{The amplitude}
In accordance with the Feynman rules introduced in Section~\ref{sect: intro}, we define the amplitude of a Feynman ribbon graph $G$ as a function of the \textit{cutoff scale} $N\in\N$ and the \textit{external indices}~${i_1,\ldots,i_n\in\N_{\geq1}}$ as follows:
\begin{equation}\label{eqn: ampl_normal_divdiff}
    \Ampl_{N,i_1,\ldots,i_n}(G):=\sum_{i_{n+1},\ldots,i_{n+U}=1}^N\frac{\prod_{v\in G^0} f'[i(\alpha^v_1),\ldots,i(\alpha^v_{\deg(v)})]}{\prod_{e\in G^1_{\textnormal{int}}} f'[i(\beta^e_1),i(\beta^e_2)]},
\end{equation}
where we write $i(\alpha):=i_\alpha$ and $f'[j_1,\ldots,j_{n+1}]:=(f')^{[n]}(\lambda_{j_1},\ldots,\lambda_{j_{n+1}})$. We assume that $f$ is smooth and real-valued (denoted $f\in C^\infty(\R)_\R$) and satisfies $f'[i,j]\neq 0$ for all $i,j\in\mathbb{N}_{\ge1}$. It should be understood that smoothness can be replaced by ${f\in C^{m+1}(\R)}$ for $m=\max_{v\in G^0}\deg(v)$. For brevity we do not divide by the symmetry factor.

\begin{ex}\label{Ex:BasicAmplitude}
Upon a choice of index numbers, the above defines, for instance,
\begin{align*}
\Ampl_{N,i_1,i_2}\Bigg(
\raisebox{-25pt}
{\begin{tikzpicture}[thick]
\draw (-1.4,0) to (-1,0);
\draw (1.4,0) to (1,0);
\draw (-1,0) arc(180:-180:1);
\draw[out=45,in=135] (-1,0) to (0,0);
\draw[out=-45,in=-135] (0,0) to (1,0);
\draw[out=-45,in=-139] (-1,0) to (-0.1,-0.1);
\draw[out=41,in=135] (0.1,0.1) to (1,0);
\vertex{-1,0};
\vertex{1,0};
\node at (-1.6,0) {1};
\node at (1.6,0) {2};
\end{tikzpicture}}
\Bigg)&=\sum_{k=1}^N
\raisebox{-25pt}{
\begin{tikzpicture}[thick]
\draw (-1.8,0) to (-1,0);
\draw (1.8,0) to (1,0);
\draw (-1,0) arc(180:-180:1);
\draw[out=45,in=135] (-1,0) to (0,0);
\draw[out=-45,in=-135] (0,0) to (1,0);
\draw[out=-45,in=-139] (-1,0) to (-0.1,-0.1);
\draw[out=41,in=135] (0.1,0.1) to (1,0);
\vertex{-1,0};
\vertex{1,0};
\node at (-1.4,0.4) {$i_1$};
\node at (-1.4,-0.4) {$i_2$};
\node at (0,0.5) {$k$};
\end{tikzpicture}}
=\sum_{k=1}^N\frac{f'[i_1,k,k,k,i_2]^2}{f'[i_1,k]f'[k,k]^2f'[k,i_2]}.
\end{align*}
\end{ex}
\begin{notation}
When an index is not specified, we sum over it from $1$ to $N$, e.g.,
\begin{align*}
\raisebox{-25pt}
{\begin{tikzpicture}[thick]
\draw (-1.8,0) to (-1,0);
\draw (1.8,0) to (1,0);
\draw (-1,0) arc(180:-180:1);
\draw[out=45,in=135] (-1,0) to (0,0);
\draw[out=-45,in=-135] (0,0) to (1,0);
\draw[out=-45,in=-139] (-1,0) to (-0.1,-0.1);
\draw[out=41,in=135] (0.1,0.1) to (1,0);
\vertex{-1,0};
\vertex{1,0};
\node at (-1.4,0.4) {$i_1$};
\node at (-1.4,-0.4) {$i_2$};
\end{tikzpicture}}
&=\sum_{k=1}^N
\raisebox{-25pt}{
\begin{tikzpicture}[thick]
\draw (-1.8,0) to (-1,0);
\draw (1.8,0) to (1,0);
\draw (-1,0) arc(180:-180:1);
\draw[out=45,in=135] (-1,0) to (0,0);
\draw[out=-45,in=-135] (0,0) to (1,0);
\draw[out=-45,in=-139] (-1,0) to (-0.1,-0.1);
\draw[out=41,in=135] (0.1,0.1) to (1,0);
\vertex{-1,0};
\vertex{1,0};
\node at (-1.4,0.4) {$i_1$};
\node at (-1.4,-0.4) {$i_2$};
\node at (0,0.5) {$k$};
\end{tikzpicture}}.
\end{align*}
\end{notation}

\subsubsection{Comparison to the literature}
\begin{rem}\label{rmk:matrix models}
Ribbon graphs appearing in other matrix models (e.g., \cite{BH2023,GW2005a,tHooft1974,Perez2022,Steinacker2010}) may generally have a different index attached to each side of each edge at both places of attachment to a vertex, like so:
\begin{align*}
\begin{tikzpicture}[decoration={
    markings,
    mark=at position 0.5 with {\arrow{>}}}
    ] 
\draw[postaction={decorate}] (2,1.96) to (0.5,1.96);
\draw[postaction={decorate}] (0.5,2.04) to (2,2.04);
\draw[postaction={decorate}] (3.95,2) to[out=90,in=90] (2.05,2);
\draw[postaction={decorate}] (1.95,2.05) to[out=80,in=100] (4.05,2.05);
\draw[postaction={decorate}] (2.05,2) to[out=-90,in=-90] (3.95,2);
\draw[postaction={decorate}] (4.05,1.95) to[out=260,in=280] (1.95,1.95);
\draw[postaction={decorate}] (4,2.04) to (5.5,2.04);
\draw[postaction={decorate}] (5.5,1.96) to (4,1.96);
\node at (1.5,2.3) {$i_1$};
\node at (2.1,2.7) {$i_2$};
\node at (2.5,2.25) {$i_3$};
\node at (2.5,1.75) {$i_4$};
\node at (2.1,1.4) {$i_5$};
\node at (1.5,1.7) {$i_6$};
\node at (4.5,2.3) {$j_1$};
\node at (3.9,2.7) {$j_2$};
\node at (3.5,2.25) {$j_3$};
\node at (3.5,1.75) {$j_4$};
\node at (3.9,1.4) {$j_5$};
\node at (4.5,1.7) {$j_6$};
\vertex{2,2};
\vertex{4,2};
\end{tikzpicture}.
\end{align*}
In the spectral action matrix model \cite{vNvS23}, a graph like the one above only contributes a nonzero amplitude if $i_1=i_2$, $i_3=i_4$, $i_5=i_6$ and $j_1=j_2$, $j_3=j_4$, $j_5=j_6$, and moreover only if $i_2=j_2$, $i_3=j_3$, $i_4=j_4$, and $i_5=j_5$.
If these equalities hold, we denote the graph as
\begin{align*}
\raisebox{-22pt}{
\begin{tikzpicture}[decoration={
    markings,
    mark=at position 0.5 with {\arrow{>}}}
    ] 
\draw[postaction={decorate}] (2,1.96) to (0.5,1.96);
\draw[postaction={decorate}] (0.5,2.04) to (2,2.04);
\draw[postaction={decorate}] (3.95,2) to[out=90,in=90] (2.05,2);
\draw[postaction={decorate}] (1.95,2.05) to[out=80,in=100] (4.05,2.05);
\draw[postaction={decorate}] (2.05,2) to[out=-90,in=-90] (3.95,2);
\draw[postaction={decorate}] (4.05,1.95) to[out=260,in=280] (1.95,1.95);
\draw[postaction={decorate}] (4,2.04) to (5.5,2.04);
\draw[postaction={decorate}] (5.5,1.96) to (4,1.96);
\node at (1.5,2.3) {$i_1$};
\node at (2.1,2.7) {$i_1$};
\node at (2.5,2.25) {$i_3$};
\node at (2.5,1.75) {$i_3$};
\node at (2.1,1.4) {$i_5$};
\node at (1.5,1.7) {$i_5$};
\node at (4.5,2.3) {$i_1$};
\node at (3.9,2.7) {$i_1$};
\node at (3.5,2.25) {$i_3$};
\node at (3.5,1.75) {$i_3$};
\node at (3.9,1.4) {$i_5$};
\node at (4.5,1.7) {$i_5$};
\vertex{2,2};
\vertex{4,2};
\end{tikzpicture}
}
\qquad\equiv\qquad
\raisebox{-19pt}{
\begin{tikzpicture}[thick]
\draw (0.5,2) to (2,2);
\draw (2,2) to[out=90,in=90] (4,2);
\draw (2,2) to[out=-90,in=-90] (4,2);
\draw (4,2) to (5.5,2);
\node at (1.5,2.5) {$i_1$};
\node at (3,2) {$i_3$};
\node at (1.5,1.5) {$i_5$};
\vertex{2,2};
\vertex{4,2};
\end{tikzpicture}
}
.
\end{align*}

\end{rem}
\begin{rem}
Our graphs and their amplitudes directly correspond to the ones in \cite{vNvS22b,vNvS23}, where however edges are denoted with wavy lines to emphasize that the graphs originate from the \textbf{bosonic} spectral action. We use straight lines for visual clarity. The other difference is that we focus on the \textbf{matrix elements} of the graphs, that is, in terms of the notation of \cite[Section 4]{vNvS23},
\begin{align*}
\sum_{i,j}V_{ij}W_{ji}\Big(
\raisebox{-15pt}{
\begin{tikzpicture}[thick]
\draw (-0.7,2) to (0,2);
\draw (0,2) to[out=90,in=90] (1.5,2);
\draw (0,2) to[out=-90,in=-90] (1.5,2);
\draw (1.5,2) to (2.2,2);
\node at (-0.3,2.3) {$i$};
\node at (-0.3,1.7) {$j$};
\vertex{0,2};
\vertex{1.5,2};
\end{tikzpicture}}
\Big)\quad\equiv\quad
\raisebox{-15pt}{
\begin{tikzpicture}
\draw[edge] (-0.8,2) to (0,2);
\draw[edge] (-0.09,1.95) to[out=90,in=90] (1.49,1.97);
\draw[edge] (0,2) to[out=-90,in=-90] (1.5,2);
\draw[edge] (1.5,2) to (2.3,2);
\node at (-1.1,2) {$V$};
\node at (2.7,2) {$W$~.};
\ncvertexh{0,2}
\ncvertexh{1.5,2}
\end{tikzpicture}}
\end{align*}
\end{rem}

\subsubsection{Spectral translational invariance}
\begin{rem}\label{rem:translational invariance}
Let us write $\Ampl_{N,i_1,\ldots,i_n}^{f,\lambda}(G)$ to indicate the dependence on $f$ and $\lambda=\{\lambda_k\}_{k=1}^\infty$. For~${\epsilon\in\R}$, defining $\lambda^\epsilon_k:=\lambda_k+\epsilon$ and $f^\epsilon(x):=f(x-\epsilon)$, we note that
$$\Ampl_{N,i_1,\ldots,i_n}^{f^\epsilon,\lambda^\epsilon}(G)=\Ampl_{N,i_1,\ldots,i_n}^{f,\lambda}(G).$$
\end{rem}

\subsubsection{Notation for graph properties}
Given a Feynman ribbon graph $G$ we write
\begin{align*}
&n&&\text{for the number of external edges;}\\
&V&&\text{for the number of vertices ($V=\#G^0$);}\\
&E&&\text{for the number of internal edges (the propagators);}\\
&\Vfi&&\text{for the number of `fully internal vertices': vertices that border only unbroken faces;}\\
&\Efi&&\text{for the number of `fully internal edges': edges that border only unbroken faces;}\\
&F&&\text{for the number of faces;}\\
&B&&\text{for the number of broken faces;}\\
&U&&\text{for the number of unbroken faces ($U=F-B$);}\\
&L&&\text{for the loop order: $L:=1+E-V$;}\\
&g&&\text{for the genus ($2g=L+1-F$).}
\end{align*}

\section{Estimates on divided differences} 
\label{sct:estimates}
\subsection{Weighted divided differences}

Rather than estimating divided differences directly, it turns out to be more convenient to work with the following objects.

\begin{defn}
The \textbf{weighted divided difference} of $f'$ of order $n$ is defined as
\begin{align}\label{eq:wdivdif}
f'\{x_1,\ldots,x_n\}:=(-1)^{n} x_1\cdots x_nf'[x_1,\ldots,x_n].
\end{align}
\end{defn}

In particular, for $f'(x)=-x^{-p-1}$ with $p$ even, an explicit formula for the weighted divided difference is
$f'\{x_1,\ldots,x_n\}=\sum_{1\leq i_1\leq\cdots\leq i_{p}\leq n}x_{i_1}^{-1}\cdots x_{i_{p}}^{-1}$ by \cite[Theorem 15]{Dub20}.

For even functions with positive derivatives on the negative real line, all weighted divided differences are positive when evaluated sufficiently far away from zero.  This is due to the following sign property of their classical divided differences.
\begin{lem}[Signs of divided differences]\label{lem: divdiffs_nonzero}
Let $f\in C^{\infty}(\mathbb{R})$ be such that for all $n\in\mathbb{N}$ there exists~${R_n>0}$ such that $f^{(n)}(x)>0$ for all $x<-R_n$ and $(-1)^nf^{(n)}(x)>0$ for all $x>R_n$. 
Then for all $n$ and all~${\max_{k\leq n}R_k}<|x_1|\leq\cdots\leq|x_n|$
we have $f'[x_1,\dotsc,x_n]\neq 0$ and\begin{equation}\label{eq:DivDifSign}\mathrm{sgn}\;f'[x_1,\dotsc,x_{n}]=(-1)^n\mathrm{sgn}\;(x_1\cdots x_n).\;
\end{equation}
Equivalently,
\begin{align*}
f'\{x_1,\dotsc,x_n\}>0.
\end{align*}
\end{lem}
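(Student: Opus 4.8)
The plan is to reduce the sign statement to a statement about $f'$ and the structure of divided differences under permutations, then argue by induction on $n$ using the integral representation \eqref{eqn: divdiff_integral_form}. First I would note that the ``Equivalently'' clause is immediate: by definition $f'\{x_1,\dots,x_n\}=(-1)^n x_1\cdots x_n\, f'[x_1,\dots,x_n]$, so $f'\{x_1,\dots,x_n\}>0$ is the same as $\mathrm{sgn}\, f'[x_1,\dots,x_n]=(-1)^n\mathrm{sgn}(x_1\cdots x_n)$, which is \eqref{eq:DivDifSign}; and once the sign is determined it is in particular nonzero. So the whole content is \eqref{eq:DivDifSign}. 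Since divided differences are permutation-invariant, the hypothesis $\max_{k\le n}R_k<|x_1|\le\cdots\le|x_n|$ can be replaced by: all $|x_j|$ exceed $R_n$ (equivalently exceed $\max_{k\le n} R_k$, since the $R_k$ need not be increasing), and we want to prove the sign formula for $(f')^{[n-1]}(x_1,\dots,x_n)$.

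The core induction: I would prove by induction on $m\ge 0$ that if $g$ satisfies the sign hypotheses of the lemma with constants $R_n$ (applied to $g=f'$, whose derivatives are $g^{(n)}=f^{(n+1)}$, so $(-1)^{n} g^{(n)}(x)=(-1)^{n+1}f^{(n+1)}(x)>0$ for $x>R_{n+1}$ and $g^{(n)}(x)=f^{(n+1)}(x)>0$ for $x<-R_{n+1}$), then for all $|y_0|,\dots,|y_m|$ large enough, $\mathrm{sgn}\, g^{[m]}(y_0,\dots,y_m)=(-1)^m\mathrm{sgn}(y_0\cdots y_m)$. The base case $m=0$ is $\mathrm{sgn}\, g(y_0)=\mathrm{sgn}(y_0)$ for $|y_0|$ large, which is exactly the hypothesis on $g$ itself ($g(x)>0$ for $x<-R_1$ and $-g(x)>0$ for $x>R_1$, i.e.\ $g(x)<0$ for $x$ large positive and $g(x)>0$ for $x$ large negative — wait, this says $\mathrm{sgn}\, g(y_0)=-\mathrm{sgn}(y_0)$; this is why the $(-1)^m$ appears, and I should track signs carefully here, reindexing so that $f'$ plays the role with its own constants). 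For the inductive step I would use the integral form \eqref{eqn: divdiff_integral_form}:
\[
g^{[m]}(y_0,\dots,y_m)=\int_0^1\!dt_m\int_0^{t_m}\!\!dt_{m-1}\cdots\int_0^{t_2}\!\!dt_1\; g^{(m)}\!\big(y_m+(y_{m-1}-y_m)t_m+\cdots+(y_0-y_1)t_1\big).
\]
The argument of $g^{(m)}$ is a convex combination of $y_0,\dots,y_m$ (reading the nested integral as integration over the ordered simplex); if all $|y_j|$ are large and of the \emph{same sign} $s$, the argument stays $\gtrsim \min|y_j|$ in absolute value with sign $s$, so $g^{(m)}$ has constant sign $(-1)^m s$ on the whole domain, hence the integral is nonzero with sign $(-1)^m s=(-1)^m\mathrm{sgn}(y_0\cdots y_m)\cdot s^{?}$ — here I must be careful that $\mathrm{sgn}(y_0\cdots y_m)=s^{m+1}$, so $(-1)^m s^{m+1}$ is what I get, matching $(-1)^m\mathrm{sgn}(y_0\cdots y_m)$. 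That handles the same-sign case.

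The main obstacle is the \emph{mixed-sign case}: some $y_j$ are large positive and others large negative, so the convex combination in the integrand can pass through zero and $g^{(m)}$ changes sign on the domain of integration — the crude bound fails. I would handle this by peeling off variables using the recursive definition rather than the integral form, splitting the ordered tuple at a sign change. Concretely, by permutation invariance assume $x_1,\dots,x_r$ have one sign and $x_{r+1},\dots,x_n$ the other. Use the identity expressing $f'[x_1,\dots,x_n]$ via a telescoping/Leibniz-type formula that separates a block of same-sign variables — or, more robustly, use the representation \eqref{eqn: divdiff_written_out} when the $x_j$ are distinct (the general case following by continuity, since $f'^{[n-1]}$ is continuous and we may perturb within the large-modulus region) and estimate $\sum_j f'(x_j)/\prod_{k\ne j}(x_j-x_k)$: the dominant terms are those with $|x_j|$ smallest, and one shows the contributions organize so the total sign is $(-1)^n\mathrm{sgn}(x_1\cdots x_n)$, using $|f'(x_j)|\asymp$ its asymptotics and that a factor $(x_j-x_k)$ with $x_j,x_k$ of opposite sign has $|x_j-x_k|\asymp |x_k|$ (the larger one). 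I expect the cleanest route is actually induction peeling off the variable of \emph{largest} modulus: writing $f'[x_1,\dots,x_n]=\big(f'[x_1,\dots,x_{n-1}]-f'[x_1,\dots,x_{n-2},x_n]\big)/(x_{n-1}-x_n)$ is not symmetric enough, so instead differentiate — note $f'[x_1,\dots,x_{n-1},x_n]\to f'[x_1,\dots,x_{n-1}]\cdot(\text{something})$ is false; rather, sending $x_n\to\pm\infty$, the term $f(x_n)/\prod_{k<n}(x_n-x_k)$ in \eqref{eqn: divdiff_written_out} vanishes like $f(x_n)/x_n^{n-1}$, leaving $f'[x_1,\dots,x_{n-1},x_n]\to 0$ but with a controllable rate, and comparing with $f'[x_1,\dots,x_{n-1}]$ one extracts the sign inductively. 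I would present the induction via the integral form for the same-sign case and a separate continuity-plus-\eqref{eqn: divdiff_written_out} estimate for the mixed case, flagging the latter as the technical heart of the argument.
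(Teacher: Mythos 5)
The reduction to \eqref{eq:DivDifSign} and the handling of the same-sign case are sound: your integral-form argument is essentially the same as the paper's appeal to the mean value form $f'[x_1,\dotsc,x_n]=f^{(n)}(\xi)/(n-1)!$ with $\xi$ between $\min x_i$ and $\max x_i$, so that part works.

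The mixed-sign case, however, is a genuine gap in your proposal, and you flag it yourself. None of the three routes you sketch (telescoping/Leibniz block decomposition, asymptotic estimation of the Lagrange form \eqref{eqn: divdiff_written_out}, sending the largest-modulus variable to infinity) is carried to completion, and each would require real work to control cancellations. The ingredient you are missing is precisely the recursion you dismiss as ``not symmetric enough.'' The point is not to peel off the pair $(x_{n-1},x_n)$ of largest modulus, but to use permutation invariance to peel off a pair $(x_k,x_n)$ of \emph{opposite signs}. Writing
\[
f'[x_1,\dotsc,x_n]=\frac{f'[x_1,\dotsc,x_{k-1},x_{k+1},\dotsc,x_n]-f'[x_1,\dotsc,x_{n-1}]}{x_n-x_k},
\]
two things happen at once. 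First, since $\sgn x_k\neq\sgn x_n$, the denominator does not cancel: $\sgn(x_n-x_k)=\sgn x_n$ and $|x_n-x_k|=|x_n|+|x_k|$. Second, and this is the crux, the two divided differences in the numerator are evaluated on tuples that differ by exactly one variable ($x_n$ versus $x_k$) of opposite sign; by the induction hypothesis (applicable because all remaining moduli exceed $\max_{j\le n-1}R_j$) their signs are therefore \emph{opposite}, so the numerator is a difference of two nonzero terms of opposite sign and is in fact nonzero with absolute value $|f'[x_1,\dotsc,\widehat{x_k},\dotsc,x_n]|+|f'[x_1,\dotsc,x_{n-1}]|$. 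Both the nonvanishing and the sign formula then drop out with no asymptotic estimates needed. Your proposal circles this idea but never lands on the observation that the induction hypothesis forces the two numerator terms to have opposite signs; without that, the numerator could in principle cancel, which is exactly the concern you raise. Until that step is supplied, the mixed-sign case — which you correctly identify as the technical heart — remains unproved.
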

\begin{proof}
By induction. For $n=1$, let $|x|>R_1$. Then $f'[x]=f'(x)$ is nonzero by assumption. Furthermore, the assumptions on $f$ imply $\mathrm{sgn}\;f'(x)=-\mathrm{sgn}\;x$.

Now let $n>1$ and assume~\eqref{eq:DivDifSign} holds up to order $n-1$. 
Fix ${\max_{k\leq n}R_k} < |x_1|\leq \cdots \leq |x_n|$. If  $\sgn x_1=\cdots=\sgn x_n$, then by the intermediate value theorem there exists~$\xi\in\mathbb{R}$ in between $x_1$ and $x_n$ such that $$f'[x_1,\dotsc,x_{n}]=\frac{f^{(n)}(\xi)}{(n-1)!}\ne 0.$$ To show the sign condition, first let all $x_i$ be negative. By assumption, this implies that~$f^{(n)}(\xi)$ is positive, hence
\begin{equation*}
\mathrm{sgn}\;f'[x_1,\dotsc,x_{n}]=\mathrm{sgn}\;f^{(n)}(\xi)= 1 = (-1)^{n}\cdot (-1)^{n} = (-1)^{n}\cdot\mathrm{sgn}\;(x_1\cdots x_{n}).
\end{equation*}
Now let all $x_i$ be positive. We obtain
\begin{equation*}
\mathrm{sgn}\;f'[x_1,\dotsc,x_{n}]=\mathrm{sgn}\;f^{(n)}(\xi) =(-1)^{n} = (-1)^{n}\mathrm{sgn}\;(x_1\cdots x_{n}).
\end{equation*} This concludes the proof in the case where all $x_i$ have the same sign.

Now assume that there exists $x_k$ such that $\mathrm{sgn}\;x_k\neq \mathrm{sgn}\;x_{n}$ and consider  
\begin{equation*}
f'[x_1,\dotsc,x_{n}] = \frac{f'[x_1,\dotsc,x_{k-1},x_{k+1},\dotsc,x_{n}]-f'[x_1, \dotsc,x_{n-1}]}{x_{n}-x_k}.
\end{equation*}
Since ${\max_{k\leq n-1}R_{k} \leq \max_{k\leq n}R_k} < |x_1| \leq \cdots \leq |x_n|$, we can use the induction hypothesis to control the divided differences in the numerator. These have opposite signs, because
\begin{align*}
\mathrm{sgn}\;f'[x_1,\dotsc,x_{k-1},x_{k+1},\dotsc,x_{n}] &=(-1)^{n-1}\mathrm{sgn}\;x_{n}\cdot \mathrm{sgn}\;(x_1\cdots x_{k-1}x_{k+1}\cdots x_{n-1})\\ 
&= (-1)^{n-1} (-\mathrm{sgn}\;x_{k})\cdot \mathrm{sgn}\;(x_1\cdots x_{k-1}x_{k+1}\cdots x_{n-1}) \\
&=-\mathrm{sgn}\;f'[x_{1},\dotsc,x_{n-1}].
\end{align*} Since $x_{n}$ and $x_k$ have opposite signs, we further have $\mathrm{sgn}\;(x_{n}-x_k)=\mathrm{sgn}\;x_{n}$.     We may therefore write
\begin{multline*}
f'[x_1,\dotsc,x_{n}] = \\ -(\sgn f'[x_{1},\dotsc,x_{n-1}])\cdot(\sgn x_n)\frac{|f'[x_1,\dotsc,x_{k-1}, x_{k+1},\dotsc, x_{n}]|+|f'[ x_{1},\dotsc, x_{n-1}]|}{| x_{n}|+| x_k|}.
\end{multline*}
Since by induction we know that both divided differences in the numerator are nonzero, this implies that the numerator itself is also nonzero, and hence $f'[ x_1,\dotsc, x_{n+1}]\ne0$. The claimed sign property follows from the formula above, together with the induction hypothesis.
\end{proof}

The above observation allows us to express the amplitude \eqref{eqn: ampl_normal_divdiff} of a Feynman graph as a sum where, asymptotically, all terms have the same sign, regardless of the signs of the eigenvalues.
This dramatically limits cancellations, which will permit us to give a lower bound for the divergence.

\begin{lem}\label{lem:Ampl for weighted divdifs} 
Let $f$ be as in Lemma~\ref{lem: divdiffs_nonzero} such that $f'[\lambda_i,\lambda_j]\neq0$ for all $i,j\in\N$, and let $G$ be a Feynman ribbon graph. For all $N,i_1,\ldots,i_n\in\N$, we have
$$\Ampl_{N,i_1,\ldots,i_n}(G)=(-1)^{\#\text{odd-degree vertices}} \lambda_{i_1}^{-1}\cdots \lambda_{i_n}^{-1}\sum_{i_{n+1},\ldots,i_{n+U}=1}^N\frac{\prod_{v\in G^0} f'\{i(\alpha^v_1),\ldots,i(\alpha^v_{\deg(v)})\}}{\prod_{e\in G^1_{\textnormal{int}}} f'\{i(\beta^e_1),i(\beta^e_2)\}},$$
where 
the numbers $\alpha^v_m,\beta^e_m\in\{1,\ldots,n+U\}$ and~$\deg(v)\in\N$ are determined by $G$ as in 
Section~\ref{sect: Feynman_rules}. 
\end{lem}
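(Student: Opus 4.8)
The plan is to substitute the definition of the weighted divided difference, $f'\{x_1,\ldots,x_n\}=(-1)^n x_1\cdots x_n f'[x_1,\ldots,x_n]$, into the right-hand side and bookkeep the resulting powers of $(-1)$ and of the $\lambda$'s, checking that everything collapses to the amplitude formula \eqref{eqn: ampl_normal_divdiff}. Concretely, for each vertex $v$ we have $f'\{i(\alpha^v_1),\ldots,i(\alpha^v_{\deg(v)})\}=(-1)^{\deg(v)}\lambda_{i(\alpha^v_1)}\cdots\lambda_{i(\alpha^v_{\deg(v)})}\,f'[i(\alpha^v_1),\ldots,i(\alpha^v_{\deg(v)})]$, and for each internal edge $e$ we have $f'\{i(\beta^e_1),i(\beta^e_2)\}=(-1)^2\lambda_{i(\beta^e_1)}\lambda_{i(\beta^e_2)}\,f'[i(\beta^e_1),i(\beta^e_2)]=\lambda_{i(\beta^e_1)}\lambda_{i(\beta^e_2)}\,f'[i(\beta^e_1),i(\beta^e_2)]$. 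So the ratio of products over $G^0$ and $G^1_{\textnormal{int}}$ acquires a sign $(-1)^{\sum_{v\in G^0}\deg(v)}=(-1)^{\#\text{odd-degree vertices}}$ (since even degrees contribute nothing to the parity), cancelling the prefactor sign, and a product of $\lambda$-factors that we must show equals exactly $\prod_{k=n+1}^{n+U}\lambda_{i_k}^{?}$ times $\lambda_{i_1}\cdots\lambda_{i_n}$ to the appropriate power — which, combined with the external $\lambda_{i_1}^{-1}\cdots\lambda_{i_n}^{-1}$, must reduce to $1$.

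The crux is therefore the following counting identity: for every index number $\alpha\in\{1,\ldots,n+U\}$, the number of times $\lambda_{i_\alpha}$ appears in the numerator product $\prod_{v}$ (i.e.\ the number of vertex-corners bordering face $\alpha$) minus the number of times it appears in the denominator product $\prod_e$ (i.e.\ twice the number of internal edges bordering face $\alpha$) equals $1$ if $\alpha\le n$ (so it is cancelled by the external prefactor $\lambda_{i_\alpha}^{-1}$) and equals $0$ if $\alpha>n$ (an unbroken face, summed over, so no leftover $\lambda$ is allowed). First I would make precise, from the $G\rightsquigarrow G'$ construction in Section~\ref{sect: Feynman_rules}, the combinatorial fact that walking along the boundary of a single face of $G'$ one alternately meets vertex-corners and edge-sides, each edge bordering the face being traversed along one of its two sides; hence for a face $\alpha$ of $G'$, the number of corners on its boundary equals the number of edge-sides on its boundary. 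For an unbroken face (which in $G'$ involves only internal edges of $G$, since unbroken faces have no external edges) this gives: (\#corners on $\alpha$) $=$ (\#edge-sides on $\alpha$) $=$ $2\cdot$(\#internal edges $e$ with $\beta^e_1=\alpha$ or $\beta^e_2=\alpha$, counted with multiplicity $2$ if $\beta^e_1=\beta^e_2=\alpha$) — wait, more carefully: each internal edge contributes a side to $\alpha$ once for each of its two sides lying on $\alpha$, so the total edge-side count is exactly $\#\{(e,m): e\in G^1_{\textnormal{int}},\ m\in\{1,2\},\ \beta^e_m=\alpha\}$, which is precisely the exponent of $\lambda_{i_\alpha}$ in the denominator. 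Hence numerator and denominator exponents of $\lambda_{i_\alpha}$ agree for $\alpha>n$, giving the desired $0$.

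For a broken face $\alpha\le n$, the corresponding face of $G'$ is bounded by a boundary walk that, when we delete the auxiliary vertex of $G'$ and reinstate the external edges of $G$, has one vertex-corner fewer accounted among $G^0$ than it has edge-sides — equivalently, the external edges of $G$ on that face each contribute one edge-side but are not in $G^1_{\textnormal{int}}$, and the count of $G^0$-corners on the face is one less than the total edge-side count on the face of $G'$; a short Euler-characteristic or direct boundary-walk argument (the boundary walk of a disk-face is a closed cycle alternating corners and sides, so \#corners $=$ \#sides on that cyclic walk, and removing the single auxiliary vertex removes one corner while the external edges still each contribute exactly one side to $\alpha$) yields exponent-in-numerator minus exponent-in-denominator $=1$. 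This leftover $\lambda_{i_\alpha}^{1}$ for each $\alpha\in\{1,\ldots,n\}$ is exactly cancelled by $\lambda_{i_1}^{-1}\cdots\lambda_{i_n}^{-1}$. Putting the three pieces together — sign cancellation, vanishing leftover for unbroken faces, unit leftover for broken faces cancelled by the prefactor — the right-hand side reduces term by term (for each fixed choice of $i_{n+1},\ldots,i_{n+U}$) to the summand of \eqref{eqn: ampl_normal_divdiff}, completing the proof. The main obstacle is getting the boundary-walk bookkeeping for broken faces airtight, since it hinges on the precise definition of $G'$ and the claim that exactly one auxiliary vertex sits on each broken face with all that face's external edges attached to it; I would either cite the structure of $G'$ from Section~\ref{sect: Feynman_rules} directly or prove the corner-versus-side equality via the observation that each face of $G'$ is an open disk whose boundary circuit visits corners and sides alternately.
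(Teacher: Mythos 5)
Your proposal is correct and follows essentially the same route as the paper's proof: substitute the definition $f'\{\cdot\}=(-1)^n\prod x_j\,f'[\cdot]$ into the right-hand side, collect the sign $(-1)^{\sum_v\deg(v)}=(-1)^{\#\text{odd}}$, and then check that for every face index the powers of $\lambda$ in numerator and denominator cancel except for one net $\lambda^{-1}$ per broken face. The paper simply asserts the two combinatorial facts (equal counts for running indices, a surplus of one for fixed indices), whereas you justify them via the alternation of corners and edge-sides along a face's boundary walk in $G'$ — a faithful elaboration of the same argument, modulo some imprecision in the phrasing for the case of a broken face with a single external edge (there the one external edge contributes two sides, not one, so the net surplus is still $2-1=1$).
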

\begin{proof}
We have 
\begin{align*}
\Ampl_{N,i_1,\ldots,i_n}(G)&=\sum_{i_{n+1},\ldots,i_{n+U}=1}^N\frac{\prod_{v\in G^0} f'[i(\alpha^v_1),\ldots,i(\alpha^v_{\deg(v)})]}{\prod_{e\in G^1_{\textnormal{int}}} f'[i(\beta^e_1),i(\beta^e_2)]},
\end{align*}
 By definition, it follows that
$$\Ampl_{N,i_1,\ldots,i_n}(G)=\sum_{i_{n+1},\ldots,i_{n+U}=1}^N\frac{\prod_{v\in G^0} (-1)^{\deg (v)} \lambda_{i(\alpha^v_1)}^{-1}\cdots \lambda_{i(\alpha^v_{\deg(v)})}^{-1}f'\{i(\alpha^v_1),\ldots,i(\alpha^v_{\deg(v)})\}}{\prod_{e\in G^1_{\textnormal{int}}} \lambda_{i(\beta^e_1)}^{-1} \lambda_{i(\beta^e_2)}^{-1}f'\{i(\beta^e_1),i(\beta^e_2)\}}.$$
We refer to an index 
$i(\gamma)$ as \textit{running} if it corresponds to an unbroken face and as \textit{fixed} if it corresponds to a broken face.
Notice that:
\begin{itemize}
\item For each running index $i(\gamma)$, i.e.\ $\gamma\in\{n+1,\ldots,n+U\}$, there are precisely as many $ \lambda_{i(\gamma)}^{-1}$ in the numerator as in the denominator.
\item For each fixed index $i(\gamma)$ i.e.\ $\gamma\in\{1,\ldots,n\}$, $ \lambda_{i(\gamma)}^{-1}$ occurs precisely one time more often in the numerator than in the denominator.
\end{itemize}
These remarks together yield the lemma.
\end{proof}



\subsection{Bounds for weighted divided differences with sufficiently large inputs}\label{sect:bounds_large_k}
In order to estimate the amplitude of a Feynman graph as given in Lemma~\ref{lem:Ampl for weighted divdifs}, we need suitable estimates for the weighted divided differences. In this work, we are considering the following class of functions (cf. \cite[Definition 5]{HMvN2024}).
\begin{defn}\label{def:precise order}
    We say that $f\in C^\infty(\R)$ is \textbf{of precise order} $r\in\R$ if for all $k\in\N$ there exist positive numbers $R,c_1,c_2>0$ (depending on $k$), such that
    \begin{equation}\label{eqn: precise_order}
        c_1|x|^{r-k}\leq |f^{(k)}(x)|\leq c_2|x|^{r-k}
    \end{equation}
    for all $x\in\R\setminus[-R,R]$. If only the upper bound is assumed, we say $f$ is \textbf{of order at most} $r\in\R$.
    
\end{defn}

Note that for $f$ of precise order $r$, these conditions imply that all derivatives $f^{(k)}$ are nonzero outside a compact region (which may depend on $k$). Furthermore, for $r<0$ they imply that all derivatives of $f$ have the same sign on $\mathbb{R}_-$ sufficiently far away from zero. This implies the conditions of Lemma \ref{lem: divdiffs_nonzero} up to a global sign.

A key result is that we can express the weighted divided difference of~$f'$ as a divided difference of a function with polynomial-like behaviour as follows. 
\begin{lem}\label{lem: g_n}
Let $f\in C^{\infty}(\mathbb{R})$ such that $f'$ is of order at most $-1$ (cf.\ Definition \ref{def:precise order}). For $n\ge 2$, we define $g_n:\mathbb{R}\to \mathbb{R}$ by 
$$g_n(x):=-x^{n-2}f'(\tfrac{1}{x}),$$ where we set $g_n(0):=0$. The following holds.
\begin{enumerate}
\item We have $g_n\in C^{n-2}(\R)\cap C^{\infty}(\R\setminus\{0\})$ and, for all $x\neq0$, 
\begin{equation*}
g_n^{(n-1)}(x)=(-1)^n\frac{1}{x^n}f^{(n)}(\tfrac{1}{x}).
\end{equation*}
\item If $f'$ is of order at most $-p-1$ for some $p\in\R_{\geq0}$, then, as $x\downarrow0$, we have a constant $c_2>0$ (depending on $n$) such that
\begin{equation*}
|g_n^{(n-1)}(x)|\le c_2 |x|^p.
\end{equation*}
\item If $f'$ is of precise order $-p-1$ for some $p\in\R_{\geq0}$, then, as $x\downarrow0$, we have constants $c_1,c_2>0$ (depending on $n$) such that
\begin{equation*}
c_1|x|^p\le|g_n^{(n-1)}(x)|\le c_2 |x|^p.
\end{equation*}

\item If $f'$ is of order at most $r<-1$, then $g_n\in C^{n-1}(\R)$.

\item For all $x_1,\dotsc,x_n\neq 0$ we have
\begin{equation} \label{eq:DivDifasgn}
f'\{ x_1,\dotsc, x_n\}=g_n[ x_1^{-1},\dotsc, x_n^{-1}]. 
\end{equation}
\end{enumerate}
\end{lem}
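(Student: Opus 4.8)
The plan is to prove the five parts essentially in the order stated, exploiting two structural facts: the multiplicative relation $g_{n+1}(x) = x\,g_n(x)$, which is immediate from the definition, and the substitution $y = 1/x$. Smoothness of $g_n$ away from $0$ is clear since $x \mapsto 1/x$ is smooth on $\R \setminus \{0\}$. For the derivative identity $g_n^{(n-1)}(x) = (-1)^n x^{-n} f^{(n)}(1/x)$ I would argue by induction on $n \ge 2$: the case $n = 2$ is the direct computation $g_2'(x) = x^{-2} f''(1/x)$, and for the step one writes $g_{n+1} = x\,g_n$, applies the Leibniz rule (so that $g_{n+1}^{(n)} = x\,g_n^{(n)} + n\,g_n^{(n-1)}$, the remaining terms vanishing since $x$ has no higher derivatives), and substitutes the inductive hypothesis together with its derivative $g_n^{(n)}(x) = (-1)^n\big(-n\,x^{-n-1}f^{(n)}(1/x) - x^{-n-2}f^{(n+1)}(1/x)\big)$; the leading pieces cancel and what remains is exactly $(-1)^{n+1}x^{-n-1}f^{(n+1)}(1/x)$.

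To finish part (1), i.e.\ to see that $g_n \in C^{n-2}(\R)$, I would first show by an easy induction on $j$ that for $0 \le j \le n-1$ the derivative $g_n^{(j)}$ is, on $\R \setminus \{0\}$, a finite sum of terms of the form $c\,x^a f^{(b)}(1/x)$ with integer exponents satisfying $b \ge 1$ and $a + b = n-1-j$; indeed differentiating one such term produces two terms of the same shape, each with $a+b$ decreased by one and with $b \ge 1$ preserved. Since $f'$ is of order at most $-1$, one has $|f^{(b)}(y)| \lesssim |y|^{-b}$ for large $|y|$ and every $b \ge 1$, so each term obeys $|x^a f^{(b)}(1/x)| \lesssim |x|^{a+b} = |x|^{n-1-j}$ near the origin --- the key point being that although $a$ itself may be very negative, the combination $a+b$ is non-negative for $j \le n-1$. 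Hence $g_n^{(j)}(x) \to 0$ as $x \to 0$ for each $j \le n-2$, and by the standard fact that a function which is $C^m$ off a point and all of whose derivatives up to order $m$ extend continuously across it is globally $C^m$ (proved by iterating the fundamental theorem of calculus), we conclude $g_n \in C^{n-2}(\R)$ with $g_n^{(j)}(0) = 0$ for $0 \le j \le n-2$.

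Parts (2)--(4) are then quick consequences of the derivative formula. Writing $|g_n^{(n-1)}(x)| = |x|^{-n}|f^{(n)}(1/x)|$: if $f'$ is of order at most $-p-1$ then $|f^{(n)}(y)| \lesssim |y|^{-p-n}$ for large $|y|$, giving $|g_n^{(n-1)}(x)| \lesssim |x|^p$ as $x \downarrow 0$, which is (2); if moreover the matching lower bound on $|f^{(n)}|$ holds (precise order), the same substitution yields the two-sided bound of (3); and if $f'$ is of order at most $r < -1$ then $|f^{(n)}(y)| \lesssim |y|^{r-n+1}$, so $|g_n^{(n-1)}(x)| \lesssim |x|^{-r-1} \to 0$ since $-r-1 > 0$, which together with the limits $g_n^{(j)}(x) \to 0$ ($j \le n-2$) from part (1) and the same extension fact (now at order $m = n-1$) gives $g_n \in C^{n-1}(\R)$, establishing (4). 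For part (5) I would take $x_1,\dots,x_n \in \R \setminus \{0\}$ pairwise distinct and plug $g_n(1/x_j) = -x_j^{2-n}f'(x_j)$ into the explicit divided-difference formula \eqref{eqn: divdiff_written_out}; simplifying $\prod_{k \ne j}(x_j^{-1} - x_k^{-1})$ produces exactly the global factor $x_1\cdots x_n$ together with a sign $(-1)^n$, turning $g_n[x_1^{-1},\dots,x_n^{-1}]$ into $(-1)^n x_1\cdots x_n\, f'[x_1,\dots,x_n] = f'\{x_1,\dots,x_n\}$. Since all points $1/x_j$ lie in $\R \setminus \{0\}$, where $g_n$ is smooth by part (1), both sides are continuous (indeed smooth) functions of $(x_1,\dots,x_n) \in (\R\setminus\{0\})^n$, so the identity extends from the dense subset of tuples with distinct entries to all of $(\R\setminus\{0\})^n$.

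I expect the main obstacle to be part (1): the derivative formula is a clean induction and parts (2)--(5) follow readily from it, but controlling $g_n$ and its lower-order derivatives at the origin requires the bookkeeping of the terms $c\,x^a f^{(b)}(1/x)$ and, in particular, the observation that divergent powers $x^a$ are always compensated by the decay of $f^{(b)}(1/x)$ so that the products vanish (or stay bounded) at $0$; the concluding step then rests on the elementary but somewhat delicate criterion for promoting $C^m$-smoothness off a point to $C^m$-smoothness everywhere.
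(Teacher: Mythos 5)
Your proposal is correct and follows essentially the same route as the paper: the same induction via $g_{n+1}(x)=x\,g_n(x)$ and the Leibniz rule for the derivative identity, reduction of (2)--(4) to that identity, and substitution into formula~\eqref{eqn: divdiff_written_out} for (5). The one place where you go further than the paper is the regularity claim in part (1): the paper dispatches $g_n\in C^{n-2}(\R)$ in one line (``the order of $f'$ guarantees well-definedness of $g_n^{(k)}(0)$''), whereas you supply the missing bookkeeping --- showing $g_n^{(j)}$ is a sum of terms $c\,x^a f^{(b)}(1/x)$ with $b\ge1$, $a+b=n-1-j$, so that the decay $|f^{(b)}(y)|\lesssim|y|^{-b}$ forces $g_n^{(j)}(x)\to 0$ as $x\to 0$ for $j\le n-2$, and then invoking the standard extension criterion. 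That is a genuine and useful fleshing-out, but it does not change the overall strategy.
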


\begin{rem} An example of an even function $f\in C^\infty(\R)$ with $f'$ of precise order~$-1$ but~${g_n\notin C^{n-1}(\R)}$ is obtained by choosing $f$ such that $f'(x)=\frac{3+\cos\log(|x|)}{x}$ on $\mathbb{R}\setminus[-1,1]$, the proof of which does not fit in a small remark.
\end{rem}

\begin{proof}[Proof of Lemma \ref{lem: g_n}.] 
We prove the first claim by induction. For $n=2$, we immediately see that \begin{equation*}
        g_2'(x)=\frac{1}{x^2}f^{(2)}(\tfrac{1}{x}).
    \end{equation*}
    Now note that \begin{equation*}
        g_{n+1}(x)=xg_n(x),
    \end{equation*}
    and hence \begin{equation*}
        g^{(n)}_{n+1}(x)=\frac{d^n}{dx^n}xg_n(x) = xg^{(n)}_n(x)+ng_n^{(n-1)}(x).
    \end{equation*}
    By induction hypothesis, the derivatives can be calculated explicitly and yield \begin{align*}
        xg^{(n)}_n(x)+ng_n^{(n-1)}(x) &= x\frac{d}{dx} \left((-1)^n\frac{1}{x^n}f^{(n)}(\tfrac{1}{x})\right) + n\left((-1)^n\frac{1}{x^n}f^{(n)}(\tfrac{1}{x})\right) \\
        &= (-1)^n \left( -n\frac{x}{x^{n+1}}f^{(n)}(\tfrac{1}{x}) + \frac{x}{x^n} \left(-\frac{1}{x^2}\right) f^{(n+1)}(\tfrac{1}{x}) + n\frac{1}{x^n} f^{(n)}(\tfrac{1}{x}) \right)\\ 
        &= (-1)^{n+1}\frac{1}{x^{n+1}}f^{(n+1)}(\tfrac{1}{x}).
    \end{align*}
    The fact that $g_n\in C^{n-2}(\mathbb{R})\cap C^{\infty}(\mathbb{R}\setminus\{0\})$ follows from $g_n(x)=x^n\frac{d}{dx}f(\tfrac{1}{x})$, where the order of $f'$ guarantees well-definedness of $g^{(k)}_n(0)$, $0\le k\le n-2$.

The second and third claim follow directly from the first statement. Note also that if $f'$ is of order at most $r<-1$, then the second claim implies that $g_n^{(n-1)}(0)=0$ and that $g_n^{(n-1)}$ is continuous at zero, proving the fourth claim.
   
Finally, to show~\eqref{eq:DivDifasgn},  we may assume that $ x_j \neq  x_k$ for all $j \neq k$, as the general case follows by continuity of the divided differences. Combining~\eqref{eqn: divdiff_written_out} and \eqref{eq:wdivdif}, we may write
\begin{equation*}
f'\{ x_1,\ldots, x_n\}
=(-1)^n  \sum_{j=1}^n\frac{x_1\cdots x_nf'( x_j)}{\prod_{k\neq j}( x_j- x_k)}.
\end{equation*} 
By using that $\frac{x_k}{ x_j- x_k}=\frac{x_j^{-1}}{ x_k^{-1}- x_j^{-1}}$, this implies
\begin{align*}
f'\{ x_1,\ldots, x_n\}
=(-1)^n \sum_{j=1}^n\frac{ x_j^{2-n}f'( x_j)}{\prod_{k\neq j} (x_{k}^{-1} - x_j^{-1})}
=\sum_{j=1}^n\frac{g_n(x_j)}{\prod_{k\neq j}( x_j^{-1}- x_{k}^{-1})},
    \end{align*}
    which yields equation \eqref{eq:DivDifasgn}.
\end{proof}

In order to prove our two-sided asymptotic estimates (Theorem \ref{thm:weighted_divdiff_bound}), we first assume that the inputs of the divided difference all lie on the positive half-line, before extending our estimates to inputs with arbitrary signs.
On the positive half-line, expressing weighted divided differences of~$f'$ as divided differences of $g_n$ allows us to obtain the desired bounds. 
\begin{prop}
\label{prop:weighted_divdiff_same_sign_bound} Let $n\in\N,$ $p\in\R_{\ge0}$, and let $f\in C^{\infty}(\mathbb{R})_\R$ be 
such that $f'$ is of precise order $-p-1$. Then there exist numbers $R,c_1,c_2\in\R_{>0}$ such that for all $x_1,\ldots,x_n\in\R$ satisfying $R< x_1\leq\cdots\leq x_n$ we have
$$c_1 x_1^{-p}\leq |f'\{x_1,\ldots,x_n\}|\leq c_2x_1^{-p}.$$
Indeed, one may use the number $R$ from Definition \ref{def:precise order}. {If $f'$ is only assumed to be of order at most $-p-1$ then the upper bound still holds.}
\end{prop}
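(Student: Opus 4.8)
The plan is to translate the problem, via identity~\eqref{eq:DivDifasgn}, into a statement about the divided difference $g_n[x_1^{-1},\ldots,x_n^{-1}]$, where now the arguments $y_j := x_j^{-1}$ lie in a small interval $(0,R^{-1})$ close to zero. Using the integral representation~\eqref{eqn: divdiff_integral_form} we write
\begin{equation*}
g_n[y_1,\ldots,y_n]=\int_0^1 dt_n\int_0^{t_n}dt_{n-1}\cdots\int_0^{t_2}dt_1\, g_n^{(n-1)}\big(y_n+(y_{n-1}-y_n)t_n+\cdots+(y_1-y_2)t_2\big),
\end{equation*}
where the inner argument ranges over the interval between $y_1$ and $y_n$, hence stays in $(0,R^{-1})$. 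By Lemma~\ref{lem: g_n}(3), $|g_n^{(n-1)}(y)|$ is squeezed between $c_1' y^p$ and $c_2' y^p$ for $y\in(0,R^{-1})$; by Lemma~\ref{lem: divdiffs_nonzero} (applied to $g_n$, whose derivatives have a definite sign near $0$ — this is exactly the content of $f'$ being of precise order, giving $g_n^{(n-1)}$ a fixed sign), the integrand does not change sign, so the absolute value of the integral equals the integral of the absolute value. Thus $|g_n[y_1,\ldots,y_n]|$ is comparable to the iterated integral of $\big(y_n+(y_{n-1}-y_n)t_n+\cdots\big)^p$ over the standard simplex $\{0<t_1<\cdots<t_n<1\}$.

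The remaining task is purely a matter of estimating that simplex integral. The convex combination appearing inside is bounded above by $\max_j y_j = y_1 = x_1^{-1}$ and below by $\min_j y_j = y_n = x_n^{-1}$, which is not yet good enough for the lower bound since we want $x_1^{-p}\asymp y_1^p$, not $y_n^p$. To get the matching lower bound I would restrict the region of integration: on the subsimplex where $t_n \le 1/2$, the coefficient of $y_1$ in the convex combination, namely $t_2$ (or more carefully the total weight on the large arguments), is bounded below... actually the cleanest route is to note that the convex combination is at least $(1-t_n)y_n + \ldots$; but to isolate $y_1$ one uses that $y_1 = x_1^{-1} \ge x_n^{-1} = y_n$ and more importantly that the coefficient of $y_1$ is $t_2$, so restricting to $t_2 \ge 1/2$ (a region of positive measure within the simplex) forces the argument to be $\ge \tfrac12 y_1$, giving the integral $\gtrsim y_1^p = x_1^{-p}$ up to a dimensional constant. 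The upper bound is immediate: the argument is $\le y_1$ everywhere, so the integral is $\le \tfrac{1}{n!} y_1^p = \tfrac{1}{n!} x_1^{-p}$.

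For the final sentence of the statement — if $f'$ is only of order at most $-p-1$, then only the upper bound survives — the argument is the same but now uses only Lemma~\ref{lem: g_n}(2), i.e.\ $|g_n^{(n-1)}(y)| \le c_2' y^p$ on $(0,R^{-1})$; applying the triangle inequality to the iterated integral (rather than sign-definiteness) gives $|g_n[y_1,\ldots,y_n]| \le c_2' \int_{\text{simplex}} (\text{convex comb.})^p \le \tfrac{c_2'}{n!} y_1^p$, and we do not need the sign property of the derivatives. I would also remark that $R$ can be taken to be the constant from Definition~\ref{def:precise order} (for the relevant value of $k$), since that is precisely the threshold beyond which the two-sided bounds on $f^{(n)}$, hence on $g_n^{(n-1)}$ near $0$, hold.

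The main obstacle I anticipate is the lower bound bookkeeping: making sure one extracts $x_1^{-p}$ (governed by the \emph{smallest} input) rather than $x_n^{-p}$, which requires carefully choosing the subregion of the simplex on which the coefficient of $y_1$ in the convex combination is bounded away from zero, and checking that this subregion has measure bounded below by a constant depending only on $n$. Everything else — the sign-definiteness needed to pull the absolute value inside the integral, and the trivial upper bound — follows directly from Lemma~\ref{lem: g_n} and Lemma~\ref{lem: divdiffs_nonzero}.
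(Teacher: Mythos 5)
Your proposal is correct and follows essentially the same strategy as the paper: pass to $g_n[x_1^{-1},\ldots,x_n^{-1}]$ via \eqref{eq:DivDifasgn}, use the integral representation \eqref{eqn: divdiff_integral_form}, and exploit the two-sided power bounds on $g_n^{(n-1)}$ near zero from Lemma \ref{lem: g_n}(3). The difference lies only in how the simplex integral is handled. For the lower bound you restrict the domain to the subsimplex where the coefficient of $y_1$ is at least $\tfrac12$ (a region of positive measure depending only on $n$), so that the convex combination is bounded below by $\tfrac12 y_1$; the paper instead integrates out the innermost variable explicitly via the substitution $r_1:=y_n+(y_{n-1}-y_n)t_{n-1}+\cdots+(y_1-y_2)t_1$, computes $\int_{\tau_2}^{\tau_1}r_1^p\,dr_1$ in closed form, and then bounds the result by $\tfrac{t_2^{p+1}}{p+1}y_1^p$. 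Both yield $\gtrsim y_1^p=x_1^{-p}$ and are essentially equivalent; your version is a little more elementary, the paper's a little cleaner. For the upper bound the paper uses the intermediate value form $g_n[\cdot]=g_n^{(n-1)}(\xi)$ for some $\xi$ between the inputs, while you bound the integrand pointwise and integrate over the simplex -- again equivalent.

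Two small points to clean up. First, the integral representation is written with an off-by-one shift: since $g_n[y_1,\ldots,y_n]$ is an $(n-1)$st-order divided difference, the integration variables are $t_1,\ldots,t_{n-1}$ and the coefficient of $y_1$ in the convex combination is $t_1$, not $t_2$; correspondingly the simplex volume is $\tfrac{1}{(n-1)!}$, not $\tfrac{1}{n!}$. Second, the citation of Lemma \ref{lem: divdiffs_nonzero} for the sign-definiteness of the integrand is off the mark: that lemma concerns sign properties of $f^{(n)}$ outside a compact set, whereas what you actually need is that $g_n^{(n-1)}$ has a fixed sign on $(0,R^{-1})$, which follows directly from the explicit formula $g_n^{(n-1)}(x)=(-1)^n x^{-n}f^{(n)}(1/x)$ of Lemma \ref{lem: g_n}(1) combined with the precise-order hypothesis (which forces $f^{(n)}$ to be sign-definite on $(R,\infty)$). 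With those two edits the argument is fully rigorous.
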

\begin{proof} 
  Assume that the derivatives of $f$ are negative on $(-\infty,-R)$; the case where they are positive follows similarly. 
  By Lemma~\ref{lem: g_n}, this proof reduces to providing an upper and lower bound on~$g_n[ x_1^{-1},\ldots,  x_n^{-1}]$.

The upper bound is a straightforward consequence of Lemma~\ref{lem: g_n} 
and the intermediate value theorem, since for some $\xi\in[ x_n^{-1}, x_1^{-1}]$ we have \begin{equation}\label{eq:gnupper}
    g_n[ x_1^{-1},\dotsc, x_n^{-1}]=g_n^{(n-1)}(\xi)\lesssim \xi^p\le  x_1^{-p}.
\end{equation}

Towards a lower bound we again assume that $ x_j \neq  x_k$ for all $j \neq k$, the general case then follows by continuity of the divided differences. We use the integral expression in~\eqref{eqn: divdiff_integral_form}, yielding 
    \begin{multline*}
        g_n[ x_1^{-1},\ldots, x_n^{-1}]\\= \int_0^1dt_{n-1}\int_0^{t_{n-1}}dt_{n-2}\cdots\int_0^{t_2}dt_1 g_n^{(n-1)}( x_{n}^{-1}+( x_{n-1}^{-1}- x_n^{-1})t_{n-1}+\cdots+( x_1^{-1}- x_2^{-1})t_1).
    \end{multline*}
    We substitute $r_1:= x_{n}^{-1}+( x_{n-1}^{-1}- x_n^{-1})t_{n-1}+\cdots+( x_1^{-1}- x_2^{-1})t_1$ in the innermost integral to obtain \begin{equation*}
         \int_0^1dt_{n-1}\int_0^{t_{n-1}}dt_{n-2}\cdots\int_0^{t_3}dt_2\frac{1}{ x_1^{-1}- x_2^{-1}}\int_{\tau_2}^{\tau_1}dr_1 g_n^{(n-1)}(r_1),
    \end{equation*}
    where for $i=1,2$ we define \begin{align*}
        \tau_i&=  x_{n}^{-1}+( x_{n-1}^{-1}- x_n^{-1})t_{n-1}+\cdots+( x_i^{-1}- x_3^{-1})t_2.
    \end{align*}
    We use Lemma~\ref{lem: g_n} to obtain the lower bound \begin{multline}\label{eq:MultiIntegral}
          \int_0^1dt_{n-1}\int_0^{t_{n-1}}dt_{n-2}\cdots\int_0^{t_3}dt_2\frac{1}{ x_1^{-1}- x_2^{-1}}\int_{\tau_2}^{\tau_1}dr_1 g_n^{(n-1)}(r_1) \\ \gtrsim   \int_0^1dt_{n-1}\int_0^{t_{n-1}}dt_{n-2}\cdots\int_0^{t_3}dt_2\frac{1}{ x_1^{-1}- x_2^{-1}}\int_{\tau_2}^{\tau_1}dr_1 r_1^{p}.
    \end{multline}

    First, note that we may rewrite $\tau_i$ as
    \[
    \tau_i=  x_i^{-1}t_2 +  x_3^{-1}(t_3-t_2)+\dotsc+ x_{n-1}^{-1}(t_{n-1}-t_{n-2})+ x_n^{-1}(1-t_{n-1}),
    \]
    where by construction all summands are positive. Since
    $    \tau_1 - \tau_2 = t_2( x_1^{-1} -  x_2^{-1})$,
    it follows that
    \begin{align*}
        \frac{1}{ x_1^{-1}- x_2^{-1}}\int_{\tau_2}^{\tau_1}dr_1 r_1^{p} &= \frac{t_2}{p+1}\frac{1}{\tau_1 -\tau_2}   (\tau_1^{p+1}-\tau_2^{p+1})\\
         &= \frac{t_2}{p+1} \tau_1^{p} \frac{1-\big(\frac{\tau_2}{\tau_1}\big)^{p+1}}{1-\big(\frac{\tau_2}{\tau_1}\big)}.        
    \end{align*}
    Since
    \[
    \frac{1-r^{p+1}}{1-r} \geq 1, \quad  p\ge0,\, r\in(0,1),
    \]
    we therefore have that
    \[
    \frac{1}{ x_1^{-1}- x_2^{-1}}\int_{\tau_2}^{\tau_1}dr_1 r_1^{p} \geq \frac{t_2}{p+1} \tau_1^p \geq \frac{t_2^{p+1}}{p+1}  x_1^{-p}.
    \]
    Since the remaining integrals in~\eqref{eq:MultiIntegral} are independent of $ x_1$, we thus find that
    \begin{equation}\label{eq:gnlower}
    g_n[ x_1^{-1},\ldots,  x_n^{-1}] \gtrsim  x_1^{-p}. 
    \end{equation}
    Combining equations~\eqref{eq:DivDifasgn},~\eqref{eq:gnupper} and~\eqref{eq:gnlower} finishes the proof.
\end{proof}

By induction, we can now show the same bound while allowing for inputs with mixed signs.
\begin{thm}
\label{thm:weighted_divdiff_bound} Let $n\in\N_{\geq1}$ and $p\in\R_{\geq0}$ and let $f\in C^{\infty}(\mathbb{R})_{\R}$  
be such that $f'$ is of precise order $-p-1$ and $f$ is positive outside a compact region.
Then there exist $R,c_1,c_2\in\R_{>0}$ such that for all $ x_1,\ldots, x_n\in\R$ satisfying $R<| x_1|\leq\cdots\leq| x_n|$ we have
$$c_1| x_1|^{-p}\leq|f'\{ x_1,\ldots, x_n\}|\leq c_2| x_1|^{-p}.$$
The upper bound remains valid for any $f\in C^\infty(\R)$ such that $f'$ is of order at most $-p-1$.
\end{thm}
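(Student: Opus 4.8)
The plan is to reduce the mixed-sign case to the positive case (Proposition~\ref{prop:weighted_divdiff_same_sign_bound}) by induction on the number of negative arguments, peeling off one variable at a time via the recursive definition of divided differences. Order the inputs so that $R < |x_1| \le \cdots \le |x_n|$. If all $x_j > 0$, we are exactly in the situation of Proposition~\ref{prop:weighted_divdiff_same_sign_bound}. If all $x_j < 0$, use that $f$ is even: then $f'$ is odd, so $f'[-x_1,\ldots,-x_n] = (-1)^{n-1} f'[x_1,\ldots,x_n]$ (each divided difference of an odd function of order $n-1$ picks up a sign $(-1)^{n-1}$ under negation of all variables), hence $|f'\{x_1,\ldots,x_n\}| = |f'\{-x_1,\ldots,-x_n\}|$, and we again invoke Proposition~\ref{prop:weighted_divdiff_same_sign_bound}. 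So the genuine work is when the signs are mixed.

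For the mixed-sign case I would induct on $n$, with the base cases $n=1$ (trivial: $|f'\{x_1\}| = |x_1 f'(x_1)| \asymp |x_1|^{-p}$ by precise order) and the same-sign case just handled. For the inductive step with mixed signs, pick an index $k$ with $\sgn x_k \ne \sgn x_n$, and write
\begin{equation*}
f'[x_1,\ldots,x_n] = \frac{f'[x_1,\ldots,\widehat{x_k},\ldots,x_n] - f'[x_1,\ldots,x_{n-1}]}{x_n - x_k}.
\end{equation*}
Multiplying through by $(-1)^n x_1 \cdots x_n$ and using $x_n x_k/(x_n - x_k) = $ something comparable to $\min(|x_n|,|x_k|) = |x_k|$ in modulus (since $x_n, x_k$ have opposite signs, $|x_n - x_k| = |x_n| + |x_k|$, so $|x_n x_k/(x_n-x_k)| = |x_n||x_k|/(|x_n|+|x_k|) \asymp |x_k|$), one expresses $f'\{x_1,\ldots,x_n\}$ as a combination of $f'\{x_1,\ldots,\widehat{x_k},\ldots,x_n\}$ and $f'\{x_1,\ldots,x_{n-1}\}$ with bounded coefficients. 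Both of these have fewer arguments, so by the induction hypothesis each is $\asymp |x_1|^{-p}$ in modulus — crucially, the smallest-modulus argument $x_1$ survives in both sub-lists as long as $k \ne 1$. The upper bound then follows immediately by the triangle inequality. For the lower bound, the key point — exactly as in the proof of Lemma~\ref{lem: divdiffs_nonzero} — is that the two terms in the numerator have \emph{opposite signs} once we are far enough from zero (by the sign formula \eqref{eq:DivDifSign}, which applies since the hypotheses of Lemma~\ref{lem: divdiffs_nonzero} hold up to a global sign for $f'$ of precise negative order), so there is no cancellation: $|f'[\ldots,\widehat{x_k},\ldots] - f'[\ldots,x_{n-1}]| = |f'[\ldots,\widehat{x_k},\ldots]| + |f'[\ldots,x_{n-1}]|$, each of which is bounded below by a constant times $|x_1|^{-p}$.

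The one case the above does not directly cover is $k=1$, i.e.\ when the \emph{smallest-modulus} argument is the one with the minority sign; then deleting $x_1$ destroys the argument controlling the lower bound. To handle this I would instead delete a different opposite-sign variable if one exists, and if $x_1$ is the \emph{only} argument of its sign, relabel so that we split off $x_n$ (the largest) rather than $x_1$: write $f'[x_1,\ldots,x_n]$ using the recursion that removes $x_n$ versus $x_{n-1}$ from the two $(n-1)$-tuples, each of which still contains $x_1$; now the opposite-sign pair is $x_1$ versus $x_{n-1}$ (or $x_n$), $|x_n - x_1|$ and $|x_{n-1}-x_1|$ are both $\asymp |x_n|$ hmm — this needs a little care, and here I expect the genuine bookkeeping obstacle. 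The cleanest route is probably: among $x_1,\ldots,x_n$ there are $a$ of one sign and $b = n-a$ of the other with $1 \le a \le b$; induct on $\min(a,b)$. If $a \ge 2$, pick an opposite-sign $x_k$ with $k \ne 1$ (possible since at least two indices, hence at least one besides possibly $1$, have the minority... ) — if the minority sign has $\ge 2$ representatives we can always choose $k \ne 1$; if the minority sign has exactly one representative, that representative is $x_k$ for a unique $k$, and if $k \ne 1$ we proceed as before, while if $k = 1$ we use evenness/reflection: replace every $x_j$ by $-x_j$, which turns the minority into the majority sign and flips which extreme is "problematic", reducing to a case already done. I'd want to double-check that this reflection trick genuinely terminates, but modulo that, the structure is: base case = all same sign (done via Proposition~\ref{prop:weighted_divdiff_same_sign_bound}), inductive step = peel off an opposite-sign variable that is not the minimal one, using the no-cancellation observation from Lemma~\ref{lem: divdiffs_nonzero} for the lower bound and the $|x_nx_k/(x_n-x_k)| \asymp |x_k| \le |x_1|^{-p}$-wait, $\asymp$, fine, but we need $\lesssim$ for upper and the term structure for lower — both provided — bound for the weight factor. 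The main obstacle, then, is purely combinatorial: organizing the induction so the minimal-modulus argument is never the one removed, handled by the reflection symmetry coming from $f$ being even.
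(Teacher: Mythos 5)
Your decomposition (peeling off one variable via the divided-difference recursion after choosing $k$ with $\sgn x_k\neq\sgn x_n$), your use of $|x_n-x_k|=|x_n|+|x_k|$, and your invocation of the no-cancellation observation from Lemma~\ref{lem: divdiffs_nonzero} are exactly the right tools, and your upper bound argument is essentially the paper's (the paper handles the $k=1$ subcase by simply noting $|x_j|^{-p}\le|x_1|^{-p}$ for $j\ge 2$). However, your lower bound has a genuine gap which you yourself flag but never close, and the route you sketch toward closing it does not work.

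The gap: you insist on lower-bounding \emph{both} terms in the numerator of
\begin{equation*}
|f'\{x_1,\ldots,x_n\}|=|x_1\cdots x_n|\,\frac{|f'[x_1,\ldots,\widehat{x_k},\ldots,x_n]|+|f'[x_1,\ldots,x_{n-1}]|}{|x_n|+|x_k|},
\end{equation*}
and this is what creates the $k=1$ headache. But you only need \emph{one} nonnegative term, and the natural one to keep is $|f'[x_1,\ldots,x_{n-1}]|$ --- the term where $x_n$ (not $x_k$) has been deleted. This tuple always contains $x_1$ because $n>1$, no matter what $k$ is. Dropping the other term and using $|x_n|+|x_k|\le 2|x_n|$ gives $|f'\{x_1,\ldots,x_n\}|\ge\tfrac12|f'\{x_1,\ldots,x_{n-1}\}|$ directly, after which the induction closes with no case split. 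The whole combinatorial apparatus (minority/majority sign counting, relabeling which extreme is problematic, the reflection trick) evaporates.

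Moreover, your fallback --- "use that $f$ is even" --- is not available here: unlike Theorem~\ref{thm:main}, Theorem~\ref{thm:weighted_divdiff_bound} does \emph{not} assume $f$ even, only that $f'$ is of precise order $-p-1$ and $f$ is positive outside a compact. So both the all-negative base case and the "reflection trick" in your last paragraph rest on a hypothesis you do not have. The all-negative case can instead be handled by noting that Proposition~\ref{prop:weighted_divdiff_same_sign_bound}'s proof (which works with $g_n[x_1^{-1},\ldots,x_n^{-1}]$ and the two-sided bounds of Definition~\ref{def:precise order}) goes through verbatim on the negative half-line, or, if you prefer a formal reduction, by replacing $f$ with $\tilde f(x):=f(-x)$, which has the same precise order and positivity properties and satisfies $|\tilde f'\{-x_1,\ldots,-x_n\}|=|f'\{x_1,\ldots,x_n\}|$. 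Neither of these requires evenness.
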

\begin{proof}
By induction. The cases $n=1$ and $\sgn x_1=\dotsc=\sgn x_n$, $n>1$, immediately follow from Proposition~\ref{prop:weighted_divdiff_same_sign_bound}. 
Hence we assume $n>1$ and that there exists $k\ne n$ such that $\sgn x_k\neq\sgn x_n$. 
This implies that $|x_n-x_k|=|x_n|+|x_k|$, and so by the triangle inequality we obtain
\begin{align}\label{eqn: weighted_divdiff_decomp}
    |f'\{ x_1,\dotsc, x_n\}|&\leq| x_1\cdots x_n| \frac{|f'[ x_1,\dotsc, x_{k-1}, x_{k+1},\dotsc, x_n]|+|f'[ x_1,\dotsc, x_{n-1}]|}{| x_n|+| x_k|}.
\end{align}
Since for $a,b,c,d >0$ we have
\[
\frac{a+b}{c+d} \leq \frac{a}{c} + \frac{b}{d},
\]
it follows that
\begin{align*}
    |f'\{ x_1,\dotsc, x_n\}|
    &\le\frac{| x_1\cdots x_n|}{| x_k|}|f'[ x_1,\dotsc, x_{k-1}, x_{k+1},\dotsc, x_n]| + \frac{| x_1\cdots x_n|}{| x_n|}|f'[ x_1,\dotsc, x_{n-1}]|  \\ 
    &= |f'\{ x_1,\dotsc, x_{k-1}, x_{k+1},\dotsc, x_n\}| +  |f'\{ x_1,\dotsc,x_{n-1}\}|.
\end{align*} 
The upper bound now follows by induction, where in the case $k=1$ we use that $| x_j|^{-p} \le | x_1|^{-p}$ for all $2\le j\le n$.

For the lower bound, we moreover assume that $f$ is positive outside a compact region. As in the proof of Lemma~\ref{lem: divdiffs_nonzero}, we obtain
\begin{align}\label{eqn: weighted_divdiff_decomp}
    |f'\{ x_1,\dotsc, x_n\}|&=| x_1\cdots x_n| \frac{|f'[ x_1,\dotsc, x_{k-1}, x_{k+1},\dotsc, x_n]|+|f'[ x_1,\dotsc, x_{k},\dotsc, x_{n-1}]|}{| x_n|+| x_k|}.
\end{align}
We use $| x_n|+| x_k|\le 2| x_n|$ and~\eqref{eqn: weighted_divdiff_decomp} to find that
\begin{align*}
    |f'\{ x_1,\dotsc, x_n\}|&\ge\frac{| x_1\cdots x_n| }{| x_n|+| x_k|} |f'[ x_1,\dotsc, x_{k},\dotsc, x_{n-1}]| \\
    &\geq  \frac{| x_1\cdots x_n|}{2| x_n|}|f'[ x_1,\dotsc, x_{k},\dotsc, x_{n-1}]| \\
    &= \frac{1}{2}| x_1\cdots x_{n-1}||f'[ x_1,\dotsc, x_{n-1}]| \\
    &=\frac{1}{2}|f'\{ x_1,\dotsc, x_{n-1}\}|.
\end{align*}
The lower bound now follows by induction.
\end{proof}

\begin{cor}\label{cor:wdivdifs are bounded}
For any $g\in C^\infty(\R)$ of order at most $-1$ and for any $n\in\N$,  the weighted divided difference $(x_1,\ldots,x_n)\mapsto g\{x_1,\ldots,x_n\}$ is a bounded function on $\R^n$.
\end{cor}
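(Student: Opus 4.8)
The plan is to argue by induction on $n\ge 1$, after decomposing $\R^n$ into three regions according to the sizes of $\min_i|x_i|$ and $\max_i|x_i|$. Fix a threshold $R>0$ as follows: since $g$ is of order at most $-1$, any $C^\infty$ antiderivative $f$ of $g$ satisfies $f'=g$ of order at most $-0-1$, so the upper-bound half of Theorem~\ref{thm:weighted_divdiff_bound} (with $p=0$) provides $R,c_2>0$ with $|g\{x_1,\dots,x_n\}|\le c_2$ whenever $R<|x_1|\le\dots\le|x_n|$; by permutation invariance of weighted divided differences this bound holds on the whole region $\{\min_i|x_i|>R\}$. On the compact polydisc $[-2R,2R]^n$ the function $g\{x_1,\dots,x_n\}=(-1)^nx_1\cdots x_n\,g[x_1,\dots,x_n]$ is bounded, being continuous: indeed $g\in C^\infty(\R)\subseteq C^{n-1}(\R)$, so its $(n-1)$-st divided difference is continuous on $\R^n$. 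It therefore remains to bound $g\{\cdot\}$ on the mixed-scale region $\{\min_i|x_i|\le R,\ \max_i|x_i|>2R\}$, which is the only nontrivial part.

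For $n=1$ there is nothing to do: $g\{x\}=-xg(x)$ is bounded for $|x|\le R$ by continuity of $g$ and for $|x|>R$ by the estimate $|g(x)|\lesssim|x|^{-1}$. For the inductive step, assume the weighted divided difference of $g$ in $n-1$ variables is bounded on $\R^{n-1}$, say by $M$, and let $(x_1,\dots,x_n)$ lie in the mixed-scale region. By permutation invariance we may assume $|x_1|=\min_i|x_i|\le R$ and $|x_n|=\max_i|x_i|>2R$, so in particular $x_1\ne x_n$. The recursion in Definition~\ref{defn:divdif} then gives
\[
g\{x_1,\dots,x_n\}=(-1)^n\Big(\prod_{i=1}^n x_i\Big)\frac{g[x_1,\dots,x_{n-1}]-g[x_2,\dots,x_n]}{x_1-x_n}=\frac{(-1)^n x_n}{x_1-x_n}\Big(\prod_{i=1}^{n-1}x_i\Big)\big(g[x_1,\dots,x_{n-1}]-g[x_2,\dots,x_n]\big).
\]
Now $\big(\prod_{i=1}^{n-1}x_i\big)g[x_1,\dots,x_{n-1}]=(-1)^{n-1}g\{x_1,\dots,x_{n-1}\}$, and since $x_n\ne 0$,
\[
\Big(\prod_{i=1}^{n-1}x_i\Big)g[x_2,\dots,x_n]=\frac{x_1}{x_n}\Big(\prod_{i=2}^{n}x_i\Big)g[x_2,\dots,x_n]=\frac{x_1}{x_n}(-1)^{n-1}g\{x_2,\dots,x_n\}.
\]
By the induction hypothesis the first term has modulus at most $M$, and the second at most $M/2$ because $|x_1/x_n|\le R/(2R)=1/2$. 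Finally $|x_n/(x_1-x_n)|\le|x_n|/(|x_n|-|x_1|)\le|x_n|/(|x_n|-R)<2$ since $|x_n|>2R$, whence $|g\{x_1,\dots,x_n\}|\le 2(M+M/2)=3M$. Combining the three regional bounds proves the corollary.

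The main obstacle is precisely the mixed-scale region: there the prefactor $x_1\cdots x_n$ is unbounded while $g[x_1,\dots,x_n]$ merely decays, so a bound cannot be obtained by estimating the two factors separately. The device that makes it work is to peel off the largest variable via the divided-difference recursion, which produces the harmless factor $x_n/(x_1-x_n)$ together with the ratio $x_1/x_n$ of smallest to largest variable, bounded by $1/2$, thereby reducing everything to the $(n-1)$-variable bound. One should still check the degenerate sub-cases (coinciding $x_i$ among $x_2,\dots,x_{n-1}$, or one of them equal to $0$), but these are harmless: the displayed identities are equalities between functions continuous on all of $\R^n$, so they persist in the limit, and the factors $g\{x_1,\dots,x_{n-1}\}$, $g\{x_2,\dots,x_n\}$ are controlled by the induction hypothesis in any case.
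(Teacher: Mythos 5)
Your proof is correct and fills a real gap: the paper states the corollary immediately after Theorem~\ref{thm:weighted_divdiff_bound} with no proof, but that theorem only controls $g\{x_1,\dots,x_n\}$ on the region $R<|x_1|\le\cdots\le|x_n|$, whereas the corollary asserts boundedness on all of $\R^n$. As you rightly identify, the mixed-scale region (some $|x_i|\le R$, some $|x_j|>2R$) is not covered by that theorem, and there the prefactor $x_1\cdots x_n$ grows while $g[x_1,\dots,x_n]$ merely decays, so separate estimates of the two factors fail. Your three-region split, together with the inductive peel-off of the largest variable via the recursion in Definition~\ref{defn:divdif} (applicable since $x_1\neq x_n$ on the mixed region), produces the controlled factors $|x_1/x_n|\le 1/2$ and $|x_n/(x_1-x_n)|<2$ and correctly reduces to the $(n-1)$-variable bound; the base case $n=1$, the compactness argument on $[-2R,2R]^n$, and the handling of degenerate inputs (coinciding or vanishing $x_i$ for $2\le i\le n-1$) are all sound, since the displayed identities use only $x_n\neq0$ and $x_1\neq x_n$. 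A minor bonus of your route is that it bypasses the substitution $g_n$ of Lemma~\ref{lem: g_n}, which is delicate exactly at $p=0$ because $g_n$ may fail to lie in $C^{n-1}(\R)$ there, as the paper's remark following that lemma points out.
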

We note the independent interest of the above corollary, which we have not seen in the literature, but shares similarity with \cite[Definition 5 and Lemma 4.6]{HMvN2024} and \cite[(23)]{vNS22}.

\subsection{Bounds for weighted divided differences with singular inputs}
We here note the deviating behavior of the asymptotics of (weighted) divided differences when an input $\lambda_0$ satisfies $f'(\lambda_0)=0$.
\begin{lem}\label{lem:zero-index bounds}
Let $f\in C^\infty(\R)_\R$ be 
such that $f'$ is of precise order $-p-1$ for some $p\in\R_{\geq0}$.
Let $\lambda_0\in\R$ be such that $f'(\lambda_0)=0$, and let $n\in\N_{\geq1}$. Then there exist $R,c_1,c_2\in\R_{>0}$ 
such that for all $x\in\R\setminus[-R,R]$ we have
$$c_1|x|^{-p-1}\leq |f'\{\lambda_0,x\}|$$
and for all $x_1,\ldots,x_n\in\R$ satisfying $R<|x_1|\leq\cdots\leq|x_n|$ we have
$$|f'\{\lambda_0,x_1,\ldots,x_n\}|\leq c_2|x_1|^{-p-1}.$$
\end{lem}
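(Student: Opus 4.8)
The plan is to split the statement into its two halves: the lower bound, which concerns only the second-order weighted divided difference $f'\{\lambda_0,x\}$, is obtained by a direct computation using $f'(\lambda_0)=0$; the upper bound, for the order-$(n+1)$ weighted divided difference, is obtained by induction on $n$, the base case $n=1$ coinciding with (the upper half of) that same computation. Throughout I would fix $R$ large enough to dominate all the finitely many thresholds that appear — the threshold from Definition~\ref{def:precise order} with $k=0$, the threshold $R$ from Theorem~\ref{thm:weighted_divdiff_bound} for all orders up to $n$, the threshold from the previous induction step, and $2|\lambda_0|$ — so that in particular $|x-\lambda_0|\ge\tfrac12|x|$, hence $\tfrac{|x|}{|x-\lambda_0|}\le 2$, whenever $|x|>R$. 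I would also note explicitly that $\lambda_0\ne 0$ is needed (and implicit in the setting), since otherwise $f'\{\lambda_0,x\}\equiv 0$ and the lower bound is vacuously false.

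\emph{Base case / lower bound.} Since $f'(\lambda_0)=0$, the definitions give
$$f'\{\lambda_0,x\}=\lambda_0 x\,f'[\lambda_0,x]=\frac{\lambda_0 x\,f'(x)}{x-\lambda_0}.$$
Combining $c_1'|x|^{-p-1}\le |f'(x)|\le c_2'|x|^{-p-1}$ from Definition~\ref{def:precise order} with $\tfrac12\le\tfrac{|x|}{|x-\lambda_0|}\le2$ immediately yields
$$\tfrac{1}{2}|\lambda_0|c_1'\,|x|^{-p-1}\le |f'\{\lambda_0,x\}|\le 2|\lambda_0|c_2'\,|x|^{-p-1},\qquad |x|>R.$$
The left inequality is the claimed lower bound; the right inequality is the base case $n=1$ of the induction.

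\emph{Inductive step for the upper bound.} Assume the bound holds for $n-1$. Given $R<|x_1|\le\cdots\le|x_n|$, use permutation invariance of divided differences to place $\lambda_0$ and $x_n$ at the two ends, and apply the defining recursion
$$f'[\lambda_0,x_1,\ldots,x_n]=\frac{f'[\lambda_0,x_1,\ldots,x_{n-1}]-f'[x_1,\ldots,x_n]}{\lambda_0-x_n}.$$
Multiplying through by $|\lambda_0 x_1\cdots x_n|$ and bounding $\tfrac{|x_n|}{|x_n-\lambda_0|}\le 2$ gives, after regrouping the weight factors,
$$|f'\{\lambda_0,x_1,\ldots,x_n\}|\le 2\,|f'\{\lambda_0,x_1,\ldots,x_{n-1}\}|+\frac{2|\lambda_0|}{|x_n|}\,|f'\{x_1,\ldots,x_n\}|.$$
The first term is $\lesssim |x_1|^{-p-1}$ by the induction hypothesis. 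For the second, Theorem~\ref{thm:weighted_divdiff_bound} gives $|f'\{x_1,\ldots,x_n\}|\lesssim|x_1|^{-p}$, and $\tfrac1{|x_n|}\le\tfrac1{|x_1|}$, so the second term is also $\lesssim|x_1|^{-p-1}$. Summing the two estimates completes the induction.

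\emph{Main obstacle.} There is no genuine difficulty here: all of the analytic content has already been packaged into Theorem~\ref{thm:weighted_divdiff_bound}. The points needing care are the bookkeeping of the weight factors $x_1\cdots x_n$ when passing between the divided-difference recursion and the weighted divided differences, recording the hypothesis $\lambda_0\ne 0$, and choosing $R$ so as to dominate all relevant thresholds simultaneously. (One may also remark that the upper bound requires only ``$f'$ of order at most $-p-1$''; the full precise-order hypothesis is used only for the lower bound.)
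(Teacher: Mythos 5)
Your proof is correct, and it takes a genuinely different route from the paper's. The paper observes that for $F(x):=f'(x)/(x-\lambda_0)$ (extended by $F(\lambda_0):=f^{(2)}(\lambda_0)$), the vanishing of $f'(\lambda_0)$ gives the clean algebraic identity $f'\{\lambda_0,x_1,\ldots,x_n\}=-\lambda_0\,F\{x_1,\ldots,x_n\}$ — this is immediate from the partial-fraction formula \eqref{eqn: divdiff_written_out}, since the $\lambda_0$-term in the sum drops out. Since $F$ is of order at most $-p-2$, the upper bound then follows in a single stroke from the last sentence of Theorem~\ref{thm:weighted_divdiff_bound} applied to $F$ in place of $f'$. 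Your approach instead runs an induction on $n$ directly from the recursive definition of the divided difference, using permutation invariance to isolate $\lambda_0$ and the largest input $x_n$, and splitting the triangle-inequality bound into two weighted divided differences of one lower order — one containing $\lambda_0$ (handled by the inductive hypothesis) and one not (handled by Theorem~\ref{thm:weighted_divdiff_bound}). The paper's argument is shorter and pushes all analytic work into the already-proved theorem, at the cost of introducing the auxiliary function $F$; yours is more elementary and self-contained but requires more bookkeeping of the weight factors and the threshold $R$. You are also right that $\lambda_0\neq0$ is a tacit hypothesis (otherwise the lower bound is false), which the paper achieves in its applications by the shifting argument in Remark~\ref{rem:translational invariance}.
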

\begin{proof}
The lower bound follows directly from 
$$f'\{\lambda_0,x\}=\lambda_0x\frac{f'(x)}{x-\lambda_0}.$$
To obtain the upper bound, we set
$$F(x):=\frac{f'(x)}{x-\lambda_0}\quad (x\in\R\setminus\{\lambda_0\})$$
and $F(\lambda_0):=f^{(2)}(\lambda_0)$,
which implies, e.g.\ by~\eqref{eqn: divdiff_written_out},
$$f'\{\lambda_0,x_1,\ldots,x_n\}=-\lambda_0F\{x_1,\ldots,x_n\}.$$
The function $F\in C^\infty(\R)$ has precise order $-p-2$ and, although it is not the derivative of 
a function positive outside a compact, we may apply the last statement of Theorem \ref{thm:weighted_divdiff_bound}, which yields the desired upper bound.
\end{proof}


\subsection{Bounds for weighted divided differences for all inputs}
The aim of this section is to prove a two-sided bound for the (weighted) divided difference which works for all inputs, not necessarily those outside a compact region. Of course, such a lower bound can only be established if the divided differences do not vanish identically.

The first step towards this two-sided bound is the following remarkable property of the weighted divided difference. Although it illuminates divided differences themselves, we could not find this result in the literature (noting \cite[Lemma A.4]{PSST2017}). 
\begin{lem}\label{lem:limit}
Assume $f'\in C^{\infty}(\R)$ has precise order $-p-1$ for some $p\ge0$. For all $n,m\in\N_{\geq1}$ and all~$x_1,\ldots,x_n\in\R$ we have
$$\lim_{|y_1|,\ldots,|y_m|\to\infty}f'\{x_1,\ldots,x_n,y_1,\ldots,y_m\}=f'\{x_1,\ldots,x_n\}.$$
\end{lem}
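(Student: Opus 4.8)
The plan is to reduce the claim to the single-variable case $m=1$ and then iterate. So first I would prove: for fixed $x_1,\ldots,x_n$,
\[
\lim_{|y|\to\infty} f'\{x_1,\ldots,x_n,y\} = f'\{x_1,\ldots,x_n\},
\]
and then apply this $m$ times, peeling off $y_m$, then $y_{m-1}$, etc. One subtlety with the iteration is uniformity: after peeling off $y_m$ we have a function of $x_1,\ldots,x_n,y_1,\ldots,y_{m-1}$, and we need the limit in $y_{m-1},\ldots,y_1$ to hold; but since the single-variable statement will be proved for \emph{arbitrary} fixed values of the remaining arguments and (crucially) the error estimate will be uniform in them, the iteration goes through. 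Concretely I would phrase the single-variable step with an explicit modulus of continuity: there is a function $\epsilon(t)\to 0$ as $t\to\infty$, depending only on $n$ and $f'$, such that $|f'\{x_1,\ldots,x_n,y\} - f'\{x_1,\ldots,x_n\}| \le \epsilon(|y|)$ for \emph{all} $x_1,\ldots,x_n\in\R$; then the iterated limit is immediate.

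For the single-variable step, the natural tool is the recursion for divided differences together with the $g_n$-representation of Lemma \ref{lem: g_n}. Write $z_j := x_j^{-1}$ and $w := y^{-1}$, so $w\to 0$ as $|y|\to\infty$. By \eqref{eq:DivDifasgn},
\[
f'\{x_1,\ldots,x_n,y\} = g_{n+1}[z_1,\ldots,z_n,w], \qquad f'\{x_1,\ldots,x_n\} = g_n[z_1,\ldots,z_n].
\]
Since $g_{n+1}(x) = x\,g_n(x)$ (from the proof of Lemma \ref{lem: g_n}), and divided differences of a product obey a Leibniz-type rule, one gets
\[
g_{n+1}[z_1,\ldots,z_n,w] = w\, g_n[z_1,\ldots,z_n,w] + g_n[z_1,\ldots,z_n]
\]
(using $\mathrm{id}[z_1,\ldots,z_n,w]=0$ for $n\ge 1$ and $\mathrm{id}[w]=w$ in the Leibniz expansion, since the identity function has vanishing divided differences of order $\ge 2$). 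Hence
\[
f'\{x_1,\ldots,x_n,y\} - f'\{x_1,\ldots,x_n\} = w\, g_n[z_1,\ldots,z_n,w].
\]
Now $g_n$ is of order at most $-p$ in the relevant sense near $0$ (Lemma \ref{lem: g_n}(2)), in particular $g_n^{(n)}$ is bounded near $0$ — actually I need a bound on the $n$-th order divided difference $g_n[z_1,\ldots,z_n,w]$, and here I would invoke Corollary \ref{cor:wdivdifs are bounded}: since $g_n$ is of order at most $-1$, its weighted divided differences are bounded on all of $\R^{n+1}$, so $g_n[z_1,\ldots,z_n,w] = (-1)^{n+1}(z_1\cdots z_n w)^{-1} g_n\{z_1^{-1},\ldots\}$... wait, that introduces unwanted factors. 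Cleaner: use that $g_n\in C^{n-2}(\R)\cap C^\infty(\R\setminus\{0\})$ with $g_n^{(n-1)}$ controlled near $0$ by Lemma \ref{lem: g_n}, write $g_n[z_1,\ldots,z_n,w]$ via the integral form \eqref{eqn: divdiff_integral_form} as an average of $g_n^{(n)}$ — but $g_n$ need only be $C^{n-2}$, so $g_n^{(n)}$ may not exist at $0$.

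The cleanest route, and the one I would actually carry out: bound $g_n[z_1,\ldots,z_n,w]$ directly using that it is a divided difference of a \emph{continuous} function over a bounded set of nodes, via \eqref{eqn: divdiff_integral_form} applied at order $n-1$ to the first $n$ nodes combined with the recursion in $w$, or simply observe via \eqref{eqn: divdiff_integral_form} that $|g_n[z_1,\ldots,z_n,w]|$ is bounded by $\sup$ of $|g_n^{(n-1)}|$ over the segment spanned by the nodes divided by the spacing — no, this still has the spacing issue. I think the honest approach is: since we may assume the $z_j$ and $w$ are distinct (general case by continuity), and since as $|y|\to\infty$ we have $w\to 0$ while the $z_j$ are fixed, for $|y|$ large $w$ is bounded away from $\{z_1,\ldots,z_n\}$; then $g_{n+1}[z_1,\ldots,z_n,w]$, being an honest iterated divided difference with denominators bounded below, is continuous in $w$ near $0$ and $g_n\in C^{n-2}$ suffices to take the limit $w\to 0$ termwise in the explicit formula \eqref{eqn: divdiff_written_out}. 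Since $g_{n+1}(0)=0$ and $g_{n+1}$ has a well-defined limit representation, $\lim_{w\to 0} g_{n+1}[z_1,\ldots,z_n,w] = g_{n+1}[z_1,\ldots,z_n,0]$, and a short computation with \eqref{eqn: divdiff_written_out} plus $g_{n+1}(x)=xg_n(x)$, $g_{n+1}(0)=0$ shows this equals $g_n[z_1,\ldots,z_n]$. The main obstacle I anticipate is precisely making the limit \emph{uniform} in $x_1,\ldots,x_n$ so the iteration to general $m$ is clean — when some $x_j$ are small (so $z_j$ large), or when several $x_j$ nearly coincide, the naive bounds degrade, and one must either appeal to Corollary \ref{cor:wdivdifs are bounded} for a genuinely global bound on the weighted objects, or argue that the identity $f'\{x_1,\ldots,x_n,y\} - f'\{x_1,\ldots,x_n\} = y^{-1} g_n[x_1^{-1},\ldots,x_n^{-1},y^{-1}]$ combined with boundedness of $g_n\{\cdot\}$ (Corollary \ref{cor:wdivdifs are bounded}) forces the difference to vanish like a constant times $|y|^{-1}$ uniformly, since $g_n[x_1^{-1},\ldots,x_n^{-1},y^{-1}]$ times $y^{-1}$ can be re-expressed so that the problematic $x_j^{-1}$ factors cancel against the weights — I would check that bookkeeping carefully, as it is the crux.
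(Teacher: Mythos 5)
Your proposal shares the paper's core idea --- pass to the $g_n$ representation via \eqref{eq:DivDifasgn}, so the limit $|y_j|\to\infty$ becomes $w_j:=y_j^{-1}\to0$ --- and your Leibniz-rule identity
$g_{n+1}[z_1,\ldots,z_n,w]-g_n[z_1,\ldots,z_n]=w\,g_n[z_1,\ldots,z_n,w]$
is correct and is in fact a slicker way to see the single-variable step than what the paper does. The paper instead treats all $m$ limits jointly: it shows directly that $g_{n+m}[z_1,\ldots,z_n,0,\ldots,0]=g_n[z_1,\ldots,z_n]$, via the auxiliary maps $g_{n,k}:=g_{n,k-1}[\cdot,0]$ and a cited divided-difference identity, and then invokes continuity of $g_{n+m}[\cdots]$ in the $w$-slots (with the $z_j$ fixed) to pass to the limit. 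That is the step your approach misses, and it matters.

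The concrete gap is the one you yourself flag: by peeling off one $y$ at a time you need the single-variable convergence to be \emph{uniform} in the remaining arguments (otherwise you only get an iterated limit, not the joint limit asserted), and you never establish that uniformity. Your first attempted fix via Corollary \ref{cor:wdivdifs are bounded} applied to $g_n$ does not go through: $g_n(x)=-x^{n-2}f'(1/x)$ grows like $x^{n-2}$ as $|x|\to\infty$, so $g_n$ is not of order at most $-1$ and the corollary's hypothesis fails --- exactly the ``unwanted factors'' issue you noticed but did not resolve. Your second attempted fix (``denominators bounded below, then let $w\to0$ termwise'') works fine for fixed $z_1,\ldots,z_n$ and distinct nodes, but gives no uniformity when nodes nearly coincide or $|z_j|$ is large. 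The way out is not to iterate at all: either use the telescoping generalization $g_{n+m}[z,w_1,\ldots,w_m]-g_n[z]=\sum_{k=1}^m w_k\,g_{n+k-1}[z,w_1,\ldots,w_k]$ and argue that each divided-difference factor stays bounded as $(w_1,\ldots,w_m)\to0$ with $z$ fixed (here the regularity of $g_{n+k-1}$ near $0$, Lemma \ref{lem: g_n}, is what you need, not Corollary \ref{cor:wdivdifs are bounded}), or simply take the joint limit $(w_1,\ldots,w_m)\to(0,\ldots,0)$ of $g_{n+m}[z_1,\ldots,z_n,w_1,\ldots,w_m]$ directly by continuity, as the paper does --- then uniformity in the $x_j$ never enters because the lemma fixes them.
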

\begin{proof}
Recall the functions $g_n$ introduced in Lemma~\ref{lem: g_n} and note $g_{n}(0)=\lim_{x\to0}(-x^{n-2}f'(\frac1x))=0$ for all $n\ge 2$. Define auxiliary maps $g_{n,k}$ inductively in~${k\in\N}$ as $g_{n,0}:=g_n$ and $g_{n,k}:=g_{n,k-1}[\cdot,0]$. Then Lemma~\ref{lem: g_n} and~\cite[Lemma 3.4]{CaspersSukochevZanin2021} 
yield \begin{equation*}
    g_{n+m}[x_1,\ldots,x_n,0,\ldots,0]=g_{n+m,m}[x_1,\dotsc,x_n].
\end{equation*}
We show that $g_{n+m,m}=g_n$ by induction on $m$. Indeed, if $m=1$ then \begin{equation*}
    g_{n+1,1}(x)=g_{n+1,0}[x,0]=\frac{g_{n+1}(x)-g_{n+1}(0)}{x}=x^{-1}g_{n+1}(x)=g_n(x).
\end{equation*}
Now assume the statement holds for all $n\in\mathbb{N}_{\ge 1}$ and up to some $m\in\N_{\ge 1}$. Then \begin{equation*}
   g_{n+m+1,m+1}(x)=g_{n+m+1,m}[x,0]=g_{(n+1)+m,m}[x,0]=g_{n+1}[x,0]=g_{n+1,1}(x)=g_n(x), 
\end{equation*}
where the induction hypothesis was used twice.
It follows that
$$f'\{x_1,\ldots,x_n,y_1,\ldots,y_n\}=g_{n+m}[x_1^{-1},\ldots,x_n^{-1},y_1^{-1},\ldots,y_m^{-1}]\to g_n[x_1^{-1},\ldots,x_n^{-1}]=f'\{x_1,\ldots,x_n\},$$
as $|y_j|\to\infty$.
\end{proof}
\begin{lem}\label{lem:weighted_divdifs_bounds_small-large}
Assume $f'\in C^\infty(\R)$ has precise order $-p-1$ for some $p\ge0$. For all $n\in\N_{\geq1}$, for all~$x_1,\ldots,x_n\in\R$ that satisfy $f'\{x_1,\ldots,x_n\}\neq0$, and for all $m\in\N$, there exist $C_1,C_2>0$ and some~$R_{m,x_1,\ldots,x_n}\equiv R_x$ such that, for all $y_1,\ldots,y_m\in\R\setminus[-R_x,R_x]$, we have
\begin{align}\label{eq:wdivdif_smalllarge_constants}
C_1\leq|f'\{x_1,\ldots,x_n,y_1,\ldots,y_m\}|\leq C_2.
\end{align}
\end{lem}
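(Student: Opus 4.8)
The plan is to combine Lemma~\ref{lem:limit} with the continuity of weighted divided differences and the boundedness statement of Corollary~\ref{cor:wdivdifs are bounded}. The upper bound $C_2$ is in fact immediate and uniform in \emph{all} variables: by Corollary~\ref{cor:wdivdifs are bounded} (applied to $g=f'$, which has order at most $-p-1\le -1$), the function $f'\{\cdot,\ldots,\cdot\}$ of $n+m$ variables is bounded on $\R^{n+m}$, so we may take $C_2:=\supnorm{f'\{\cdot\}}$ on $\R^{n+m}$, independent of $x$, $m$, and the threshold $R_x$. So the real content is the lower bound, and there the threshold $R_x$ genuinely depends on $x_1,\ldots,x_n$ and $m$.

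For the lower bound, fix $x_1,\ldots,x_n$ with $f'\{x_1,\ldots,x_n\}\neq 0$, set $a:=|f'\{x_1,\ldots,x_n\}|>0$, and fix $m\in\N$. By Lemma~\ref{lem:limit} there is a radius $R_x\equiv R_{m,x_1,\ldots,x_n}>0$ such that $|f'\{x_1,\ldots,x_n,y_1,\ldots,y_m\}-f'\{x_1,\ldots,x_n\}|<a/2$ whenever $|y_1|,\ldots,|y_m|>R_x$; here one should note that the limit in Lemma~\ref{lem:limit} is a genuine joint limit as $\min_j|y_j|\to\infty$, so a single $R_x$ works for the whole region $\{|y_j|>R_x\ \forall j\}$. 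The reverse triangle inequality then gives $|f'\{x_1,\ldots,x_n,y_1,\ldots,y_m\}|>a/2$ on that region, so $C_1:=a/2$ works. (If $m=0$ the statement is trivial with $C_1=C_2=a$ and any $R_x$.)

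One subtlety worth spelling out: Lemma~\ref{lem:limit} as stated gives a limit for each fixed $m$, and the quantified form "$\forall\varepsilon\,\exists R$" over the region $\min_j|y_j|>R$ is exactly the meaning of that multivariable limit, so extracting the threshold $R_x$ is just unwinding the definition of $\lim_{|y_1|,\ldots,|y_m|\to\infty}$. If one prefers to avoid relying on the precise reading of that notation, one can instead iterate the one-variable case of Lemma~\ref{lem:limit}: first choose $R^{(1)}$ so that $|f'\{x_1,\ldots,x_n,y_1\}|>\tfrac12 a$ for $|y_1|>R^{(1)}$ (possible since the limit as $|y_1|\to\infty$ is $f'\{x_1,\ldots,x_n\}\neq0$), then, uniformly? — no, this second step would require continuity in the first batch of variables as well, which is more delicate; so the cleanest route is simply to read Lemma~\ref{lem:limit} as a joint limit and take $R_x$ from it directly.

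I do not expect a serious obstacle here: the lemma is essentially a corollary of the two preceding results, with the only care needed being (i) getting the upper bound uniformly from Corollary~\ref{cor:wdivdifs are bounded} rather than from the limit, and (ii) making explicit that the dependence $R_x=R_{m,x_1,\ldots,x_n}$ is forced because the rate of convergence in Lemma~\ref{lem:limit} depends on the fixed variables and on $m$. The main (mild) subtlety is just the bookkeeping of which constants depend on what, which is why the statement carries the decorated symbol $R_{m,x_1,\ldots,x_n}$.
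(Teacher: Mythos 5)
Your proof is correct and is essentially the paper's argument made explicit: the paper's proof is the single line ``Follows immediately from Lemma~\ref{lem:limit},'' and you have unpacked what that immediacy consists of, namely reading the joint limit as a $\forall\varepsilon\,\exists R$ statement on the region $\min_j|y_j|>R$ and applying the reverse triangle inequality. Your use of Corollary~\ref{cor:wdivdifs are bounded} for the upper bound is a small, harmless variant (it buys a $C_2$ uniform in $x_1,\dots,x_n$, whereas the limit alone gives one depending on them), but the key step is the same.
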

\begin{proof}
Follows immediately from Lemma \ref{lem:limit}.
\end{proof}

In order to get a clean lower bound for the weighted divided difference, uniform in its arguments, it is convenient to assume that zero is not in the spectrum. We note that this can be done without loss of generality, by a simple shifting argument: Because the spectrum is discrete, there exists $\epsilon\in\R$ such that $-\epsilon\notin\cup_{k=1}^\infty\{\lambda_k\}$. By Remark \ref{rem:translational invariance}, it suffices to prove the main theorem for~$f^\epsilon,\lambda^\epsilon$.

\begin{thm}\label{thm:no_vanishing_divdifs}
Let $f\in C^\infty(\R)$ be 
such that $f'$ is of precise order $-p-1$ for some $p\ge0$, and $f$ is positive outside a compact region.
Let $n\in\N_{\geq1}$ and assume that $\{\lambda_k\}_{k=1}^\infty\subseteq\R\setminus\{0\}$ is such that $f'[\lambda_{k_1},\ldots,\lambda_{k_m}]\neq0$ for every $m=1,\ldots,n$
and $k_1,\ldots,k_m\in\N_{\geq1}$. There exist $c_1,c_2>0$ such that, for all $k_1,\ldots,k_n$ with $|\lambda_{k_1}|\leq\cdots\leq|\lambda_{k_n}|$, we have
\begin{align}\label{eq:wdivdif two bounds}
c_1|\lambda_{k_1}|^{-p}\leq|f'\{\lambda_{k_1},\ldots,\lambda_{k_n}\}|\leq c_2 |\lambda_{k_1}|^{-p}.
\end{align}
\end{thm}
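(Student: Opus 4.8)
The plan is to bootstrap from the ``large inputs'' estimate of Theorem~\ref{thm:weighted_divdiff_bound} to a ``uniform in all inputs'' estimate by a compactness argument, using the hypothesis that divided differences never vanish on the spectrum to rule out degeneration. The upper bound is essentially already available: one applies Theorem~\ref{thm:weighted_divdiff_bound} whenever $|\lambda_{k_1}| > R$, and for the finitely many $\lambda_k$ with $|\lambda_k| \le R$ one simply notes that any weighted divided difference $f'\{\lambda_{k_1},\ldots,\lambda_{k_n}\}$ in which the smallest modulus input is among a fixed finite set is, by Corollary~\ref{cor:wdivdifs are bounded}, bounded above by a constant, and since $|\lambda_{k_1}|$ ranges over a finite set of nonzero values one can absorb a factor of $|\lambda_{k_1}|^p$; so $|f'\{\lambda_{k_1},\ldots,\lambda_{k_n}\}| \le c_2 |\lambda_{k_1}|^{-p}$ with $c_2$ depending on $R$ and $\min_k |\lambda_k|$.

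The substantive direction is the lower bound. First I would reduce to bounding $|f'\{\lambda_{k_1},\ldots,\lambda_{k_n}\}|$ from below by a \emph{positive constant} whenever $|\lambda_{k_1}| \le R$ for a suitable $R$ (for $|\lambda_{k_1}| > R$ Theorem~\ref{thm:weighted_divdiff_bound} already gives $|f'\{\cdots\}| \ge c_1 |\lambda_{k_1}|^{-p}$, and since the finitely many admissible small values of $|\lambda_{k_1}|$ are bounded, a uniform positive lower bound on $|f'\{\cdots\}|$ translates into the claimed $c_1|\lambda_{k_1}|^{-p}$ bound by adjusting the constant). So fix the finite set $S = \{k : |\lambda_k| \le R\}$; there are finitely many choices for the tuple of indices among $S$ that appear, and the remaining indices correspond to inputs with modulus $> R$. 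The key tool is Lemma~\ref{lem:weighted_divdifs_bounds_small-large}: for each fixed finite tuple $(x_1,\ldots,x_\ell)$ of spectrum elements with $f'\{x_1,\ldots,x_\ell\} \neq 0$ (guaranteed by the non-vanishing hypothesis, via $f'\{\cdots\} = (-1)^\ell x_1\cdots x_\ell f'[\cdots]$ and $\lambda_k \neq 0$), there is an $R_x$ and constants $C_1, C_2 > 0$ such that adjoining any further inputs of modulus $> R_x$ keeps $|f'\{\cdots\}|$ between $C_1$ and $C_2$. Taking $R$ to be the maximum of these finitely many $R_x$ over all sub-tuples drawn from $S$, every weighted divided difference $f'\{\lambda_{k_1},\ldots,\lambda_{k_n}\}$ with $|\lambda_{k_1}| \le R$ decomposes as ``a fixed small tuple from $S$'' plus ``large inputs of modulus $> R$'', so Lemma~\ref{lem:weighted_divdifs_bounds_small-large} gives $|f'\{\lambda_{k_1},\ldots,\lambda_{k_n}\}| \ge C_1$ with $C_1$ the minimum over the finitely many sub-tuples. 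Combining the two regimes and adjusting constants yields \eqref{eq:wdivdif two bounds}.

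I would be careful about one point: Lemma~\ref{lem:weighted_divdifs_bounds_small-large} is stated with the \emph{small} inputs $x_1,\ldots,x_n$ fixed and the \emph{large} inputs $y_1,\ldots,y_m$ varying freely over $\R \setminus [-R_x, R_x]$, which is exactly the structure we need since permutation invariance of divided differences lets us always list the bounded-modulus spectrum elements first; but I should check that the number $m$ of large inputs is allowed to be arbitrary (it is, as the lemma quantifies over all $m \in \N$) and that the threshold $R_x$ can be taken uniform over the finitely many admissible small tuples (it can, by taking a maximum). The main obstacle is really just organizing this case split cleanly — there is no hard analysis left, as all the analytic content is packaged in Theorem~\ref{thm:weighted_divdiff_bound}, Corollary~\ref{cor:wdivdifs are bounded}, and Lemma~\ref{lem:weighted_divdifs_bounds_small-large} — but one must make sure the constants genuinely do not depend on which indices $k_1,\ldots,k_n$ are chosen (only on $n$, $f$, and the spectrum), which is why the finiteness of $S$ and the uniformity of the $R_x$ over sub-tuples of $S$ are essential.
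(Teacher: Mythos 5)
Your upper‑bound argument is fine and matches the paper. The lower‑bound argument, however, has a genuine gap in the ``one‑shot'' threshold step, and the paper's proof is designed precisely to get around it.

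You set $S=\{k:|\lambda_k|\le R\}$ with $R$ the radius from Theorem~\ref{thm:weighted_divdiff_bound}, then define $R':=\max_{x\subset S} R_x$ where $R_x$ is the threshold from Lemma~\ref{lem:weighted_divdifs_bounds_small-large} applied to a sub‑tuple $x$ of spectrum elements from $S$, and claim every tuple with $|\lambda_{k_1}|\le R$ decomposes as ``a sub‑tuple from $S$'' plus ``inputs of modulus $>R'$''. That decomposition is not exhaustive: there can be spectrum elements with modulus in $(R,R']$, and such a $\lambda_{k_j}$ is neither in $S$ (so it cannot be part of the fixed small tuple you fed to Lemma~\ref{lem:weighted_divdifs_bounds_small-large}) nor of modulus $>R'$ (so Lemma~\ref{lem:weighted_divdifs_bounds_small-large} does not let you adjoin it as a ``large'' input). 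Enlarging $S$ to $\{k:|\lambda_k|\le R'\}$ reintroduces the same problem, since the new, larger set of sub‑tuples has a new, possibly larger $\max R_x$; nothing in Lemma~\ref{lem:limit} or Lemma~\ref{lem:weighted_divdifs_bounds_small-large} gives the uniformity in the base tuple that would make this process stabilize after one step, so the ``compactness'' you invoke is not available in the form you use it.

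The paper avoids this by iterating exactly $n$ times: it fixes $R$, then $R_2:=\max_{|\lambda_{k_1}|\le R}R_2^{k_1}$, then $R_3:=\max_{|\lambda_{k_1}|\le R,\,|\lambda_{k_2}|\le R_2}R_3^{k_1,k_2}$, and so on up to $R_n$, each $R_{j+1}$ depending only on the finitely many already‑constrained prefixes $(k_1,\dotsc,k_j)$, and then runs an $(n{+}1)$‑way case split on where in the ordered tuple the modulus first exceeds the corresponding threshold. The intermediate‑size inputs you cannot classify are precisely what is absorbed at the next level of the nesting, and the recursion terminates because the tuple has only $n$ slots. If you replace your single $\max$ over sub‑tuples of a fixed $S$ with this position‑indexed nested sequence of thresholds, your argument becomes the paper's.
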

\begin{proof}
Let $R$ be such that the defining equation of the precise order of $f'$ in Definition~\ref{def:precise order} holds for all derivatives up to order $n$ simultaneously on $\R\setminus[-R,R]$. With $k_1$ ranging over all values such that $|\lambda_{k_1}|\leq R$, Lemma \ref{lem:weighted_divdifs_bounds_small-large} yields an associated $R_2^{k_1}$. Define $R_2:=\max_{k_1,|\lambda_{k_1}|\leq R} R_2^{k_1}$. Given $k_1,k_2$, Lemma \ref{lem:weighted_divdifs_bounds_small-large} yields an associated $R_3^{k_1,k_2}$. Define $R_3:=\max_{k_1,k_2,|\lambda_{k_1}|\leq R,|\lambda_{k_2}|\leq R_2}R_3^{k_1,k_2}.$ Repeating this procedure yields $R_2,\ldots,R_n$.

We are now ready to prove \eqref{eq:wdivdif two bounds} with a case distinction over $n+1$ classes of indices $k_1,\ldots,k_n$.
First, suppose that 
$$R<|\lambda_{k_1}|\leq \cdots\leq|\lambda_{k_n}|.$$ 
For this class of indices, Theorem \ref{thm:weighted_divdiff_bound} directly implies \eqref{eq:wdivdif two bounds}.
Next, suppose that
$$|\lambda_{k_1}|\leq R,\quad R_2\leq|\lambda_{k_2}|\leq\cdots\leq|\lambda_{k_n}|.$$
In this case, by construction of $R_2$ (using Lemma \ref{lem:weighted_divdifs_bounds_small-large}) we have $C_1=C_1^{k_1}>0,C_2=C_2^{k_1}>0$ such that \eqref{eq:wdivdif_smalllarge_constants} holds. Defining $c_1^{k_1}:=C_1^{k_1}|\lambda_{k_1}|^p$ and $c_2^{k_1}:=C_2^{k_1}|\lambda_{k_1}|^p$, we obtain \eqref{eq:wdivdif two bounds} with constants that depend on $k_1$. But there are a finite amount of those, so after taking $c_1=\min_{k_1:|\lambda_{k_1}|\leq R}c^{k_1}_1$ and $c_2=\max_{k_1:|\lambda_{k_1}|\leq R}c^{k_1}_2$ we obtain \eqref{eq:wdivdif two bounds} for this class of indices.
Now, suppose that
$$|\lambda_{k_1}|\leq R,\quad |\lambda_{k_2}|\leq R_2,\quad R_3\leq|\lambda_{k_3}|\leq\cdots\leq|\lambda_{k_n}|.$$
This case is similar to the previous one, but now we need to take $c_1=\min_{k_1,k_2:|\lambda_{k_1}|\leq R,|\lambda_{k_2}|\leq R_2} c_1^{k_1,k_2}$ and $c_2=\max_{k_1,k_2:|\lambda_{k_1}|\leq R,|\lambda_{k_2}|\leq R_2} c_2^{k_1,k_2}$. Similarly, we obtain \eqref{eq:wdivdif two bounds}. In fact, all cases up to 
$$|\lambda_{k_1}|\leq R,\quad\ldots,\quad |\lambda_{k_{n-1}}|\leq R_{n-1},\quad R_n<|\lambda_{k_n}|$$
proceed similarly. The final class of indices, satisfying 
$$|\lambda_{k_1}|\leq R,\quad\ldots,\quad |\lambda_{k_{n}}|\leq R_{n},$$
contains but a finite amount of indices. We automatically have \eqref{eq:wdivdif two bounds} for $k_1,\ldots,k_n$-dependent constants, as the weighted divided differences are nonzero. Again, taking the minimum over $c_1^{k_1,\ldots,k_n}$ and the maximum over $c_2^{k_1,\ldots,k_n}$, and finally taking the minimum/maximum over all $n+1$ classes, the proof is complete.
\end{proof}

\section{Two-sided power counting}\label{sect:proof_power_counting}
In the following two subsections, we will prove our first main result, which is the following.

\begin{thm}\label{thm:main}
Let $f\in C^{\infty}(\mathbb{R})_\R$ be an even function such that $f'$ is of precise order $-p-1$ for some~${p \in\mathbb{R}_{\ge0}}$.
Let $\{\lambda_k\}_{k=1}^\infty$ be a sequence of real numbers for which there exist constants $K,c_1,c_2>0$ such that for all~${k\geq K}$ we have~${c_1k^{1/d}\leq |\lambda_k|\leq c_2k^{1/d}}$. Assume $f'[\lambda_{k_1},\ldots,\lambda_{k_n}]\neq0$ for all ${k_1,\ldots,k_n\in\N_{\geq1}}$ and all~${n\in\N_{\geq1}}$. Then for any Feynman ribbon graph $G=(G^0,n,G^1)$ whose vertices have valence $\geq3$, for all external indices~${i_1,\ldots,i_n\in\N_{\geq1}}$, there exist $M,c_3,c_4>0$ such that, for all $N\geq M$,
$$c_3 N^{\omega(G)}\leq|\Ampl_{N,i_1,\ldots,i_n}(G)|\leq c_4 N^{\omega(G)};$$
\begin{align}\label{eq:omg}
\omega(G)=U+\frac{p}{d}(E_{\textnormal{fi}}-V_{\textnormal{fi}}),
\end{align}
where $U$ is the number of unbroken faces of $G$, $E_{\textnormal{fi}}$ is the number of fully internal edges of $G$ (propagators bordered on both sides by unbroken faces) and $V_{\textnormal{fi}}$ is the number of fully internal vertices of $G$ (vertices bordered on all sides by unbroken faces).
\end{thm}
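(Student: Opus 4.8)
\emph{Reduction to a combinatorial sum.} The plan is first to peel away everything that is not genuinely divergent. By Remark \ref{rem:translational invariance} I would shift the spectrum so that $0\notin\{\lambda_k\}$, and (replacing $f$ by $-f$, which multiplies the amplitude by $(-1)^{V-E}$ and leaves $\omega(G)$ unchanged) assume $f>0$ outside a compact set; evenness together with the precise order then forces $f^{(m)}$ to have sign $(-1)^m$ near $+\infty$ and sign $+$ near $-\infty$, so that Lemma \ref{lem: divdiffs_nonzero} applies. Lemma \ref{lem:Ampl for weighted divdifs} rewrites $\Ampl_{N,i_1,\ldots,i_n}(G)$ as a fixed nonzero constant times a sum of quotients of weighted divided differences $f'\{\cdots\}$, and Theorem \ref{thm:no_vanishing_divdifs} (with $|\lambda_k|\asymp k^{1/d}$ and $0\notin\{\lambda_k\}$) gives $|f'\{i(\gamma_1),\ldots,i(\gamma_r)\}|\asymp M^{-p/d}$, where $M$ is the smallest of the index numbers of the faces bordering the cell in question. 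A vertex or edge bordering a broken face has $1\le M\le\max_j i_j$, hence contributes a factor $\Theta(1)$, so by the triangle inequality $|\Ampl_{N,i_1,\ldots,i_n}(G)|\le C\,\tilde\Sigma_N$ with
\[
\tilde\Sigma_N:=\sum_{j_1,\ldots,j_U=1}^N\ \prod_{v\ \textnormal{fully internal}}M(v)^{-p/d}\ \prod_{e\ \textnormal{fully internal}}M(e)^{p/d},\qquad M(v)=\min_{F\ni v}j_F,\ \ M(e)=\min_{F\ni e}j_F,
\]
the minima now over unbroken faces only. The whole problem then becomes showing $\tilde\Sigma_N\asymp N^{\omega(G)}$ together with an absence-of-cancellation statement for $\Ampl$ itself.

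\emph{Lower bound.} This half I expect to be straightforward: restricting $\tilde\Sigma_N$ to the cube $\{N/2<j_F\le N\ \forall F\}$ leaves $\asymp N^U$ tuples, on each of which $M(v),M(e)\asymp N$, so the restricted sum is $\asymp N^{U}N^{\frac pd(\Efi-\Vfi)}=N^{\omega(G)}$ and $\tilde\Sigma_N\gtrsim N^{\omega(G)}$. To pass to $|\Ampl|\gtrsim N^{\omega(G)}$ I would use that by Lemma \ref{lem: divdiffs_nonzero} all $f'\{\cdots\}$ with arguments of modulus $>R$ are positive, and by Lemma \ref{lem:limit} that for cells bordering a broken face the sign of the weighted divided difference stabilizes, as the running indices grow, to the nonzero sign of the weighted divided difference of its broken-face arguments alone; hence on the sub-sum with all running indices $>R_0$ every summand has one common sign, and this sub-sum already has modulus $\gtrsim N^{\omega(G)}$ by the cube estimate, while the complementary terms are $O(N^{\omega(G)-1})$ (freeze the small running index and apply the upper bound for the graph with one fewer unbroken face; this uses $\delta_F\ge0$, see below).

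\emph{Upper bound.} Here I would induct on $U$. For $U\ge1$, split $\tilde\Sigma_N$ into the tuples with some $j_F$ below a fixed constant (handled by freezing and the inductive hypothesis, as above, using $\delta_F:=\#\{\textnormal{f.i.\ edges bordering }F\}-\#\{\textnormal{f.i.\ vertices bordering }F\}\ge0$) and the rest, which I would split over the $U!$ orderings of the $j_F$. On $j_1\le\cdots\le j_U$ one has $M(v)=j_{\min A_v}$, $M(e)=j_{\min B_e}$ (with $A_v,B_e$ the label sets of unbroken faces bordering $v,e$), so this piece becomes $\sum_{1\le j_1\le\cdots\le j_U\le N}\prod_k j_k^{b_k}$ with $b_k=\tfrac pd(\#\{e:\min B_e=k\}-\#\{v:\min A_v=k\})$ and $\sum_k b_k=\tfrac pd(\Efi-\Vfi)$; summing the $j_k$ one at a time and using $\sum_{j=1}^N j^b\asymp N^{b+1}$ for $b\ge0$ yields the desired $\lesssim N^{U+\sum_k b_k}=N^{\omega(G)}$ — \emph{provided every $b_k\ge0$}.

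\emph{The main obstacle.} The two non-negativity claims ($\delta_F\ge0$ and $b_k\ge0$ in every ordering) are instances of a single graph-theoretic inequality: roughly, for every set $T$ of unbroken faces and every distinguished $F\in T$,
\[
\#\{\textnormal{f.i.\ }e:\ B_e\subseteq T,\ F\in B_e\}\ \ge\ \#\{\textnormal{f.i.\ }v:\ A_v\subseteq T,\ F\in A_v\}.
\]
This is exactly where the valence-$\ge3$ hypothesis must be used essentially: it fails for univalent vertices (a single internal edge between two of them gives $\Efi-\Vfi=-1$), whereas when all corners of a fully internal vertex lie in $F$ every incident edge has both sides in $F$, and the bound $3\,\#\{v\}\le\sum_v\deg v\le 2\,\#\{e\}$ forces the case $T=\{F\}$. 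Proving this inequality for general $T$ — presumably via an Euler-characteristic/degree count on the sub-configuration of fully internal cells bordered only by faces in $T$ — and then verifying that the inductive peeling genuinely closes is, I expect, the hard part; the reductions and the cancellation bookkeeping above should then be routine.
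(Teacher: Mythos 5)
Your reduction to a combinatorial sum and your lower-bound strategy mirror the paper's almost exactly (Lemma~\ref{lem: divdiffs_nonzero}, Lemma~\ref{lem:Ampl for weighted divdifs}, Theorem~\ref{thm:no_vanishing_divdifs}, and the sum-splitting argument of Lemma~\ref{lem: restricted_lower_sufficient}). You have also correctly identified that the entire theorem hangs on a single combinatorial fact, namely that for every total order on the unbroken faces and every unbroken face $F_k$,
\[
\#\{\textnormal{f.i.\ }e:\ \min B_e=k\}\ \ge\ \#\{\textnormal{f.i.\ }v:\ \min A_v=k\},
\]
which simultaneously yields your $b_k\ge0$ and $\delta_F\ge0$. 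But you do not prove it: your degree-count $3\,\#v\le\sum_v\deg v\le 2\,\#e$ disposes of the single-face case $T=\{F\}$ only, and the envisaged Euler-characteristic argument for general $T$ is left unwritten. This is exactly the nontrivial content of the proof, and the paper supplies it as Lemma~\ref{lem:injection fi vertices to fi edges}: there exists an \emph{injection} $\Gamma_G$ from fully internal vertices to fully internal edges such that $\Gamma_G(v)$ is incident to $v$ and borders the $\le$-minimum face among those bordering $v$. Restricting $\Gamma_G$ to $\{v:\min A_v=k\}$ immediately gives the displayed inequality (since $B_{\Gamma_G(v)}\subseteq A_v$ forces $\min B_{\Gamma_G(v)}=k$), and it also gives $\delta_F\ge0$ and the monotonicity $\omega(G_{\mathfrak b})\le\omega(G)-1$ that your ``peeling'' step and the paper's Lemma~\ref{lem: restricted_lower_sufficient} both need. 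The paper's construction of $\Gamma_G$ is a concrete algorithm (induct on the number of blue faces; walk the boundary of the minimum blue face cyclically, assigning each new blue vertex the subsequent blue edge, then recurse with that face un-blued), not a counting or Euler-characteristic argument; in particular the valence-$\ge3$ hypothesis is not used directly in the injection lemma itself, contrary to what your counterexample suggests (a pair of $1$-valent vertices joined by a single edge embeds on the sphere with a \emph{single} face, which in a graph with external legs would be broken, so the bad configuration you describe does not occur under the paper's standing set-up).

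A secondary remark: once the injection (equivalently, $b_k\ge0$ for all $k$ and all orders) is available, your upper-bound induction on $U$ and the split by the $U!$ orderings are unnecessary detours. The paper simply cancels each vertex factor against its paired edge factor via $\Gamma_G$ and then estimates the $E_{\textnormal{fi}}-V_{\textnormal{fi}}$ leftover edge factors crudely by $|\lambda_{i(\beta(e))}|^{p}\lesssim N^{p/d}$; equivalently, in your notation, $\sum_{1\le j_1\le\dots\le j_U\le N}\prod_k j_k^{b_k}\lesssim N^{U+\sum_k b_k}$ follows from a one-pass iterated summation precisely because each $b_k\ge0$, with no recursion on $U$ required. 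See also Remark~\ref{rem:apparent shortcut}, which is the paper's own warning against the variant of your $\tilde\Sigma_N$ estimate that would \emph{not} pair vertices with incident edges sharing the minimum face; your proposal avoids that pitfall but only by postulating the inequality you do not establish.
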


\subsection{Upper bound}\label{sct: upper_bound}
\begin{figure}[H]
\centering
\begin{subfigure}[t]{0.23\textwidth}
\centering
\begin{tikzpicture}[thick]
\draw (0,1) to (0,1.5);
\filldraw[color=blue!20!white] (0,-0.1) circle (1.1cm);
\draw (0,1) arc (90:-270:1.1cm);
\draw[color=blue] (0,1) to (-0.3,0);
\draw[color=blue] (-0.3,0) to (0.3,0);
\draw[color=blue] (-0.3,0) arc (50:-310:0.3cm);
\filldraw[color=blue!20!white] (0.6,0) circle (0.3cm);
\draw[color=blue] (0.3,0) arc (180:-180:0.3cm);
{\color{blue}\vertex{-0.3,0};
\vertex{0.3,0};}
\vertex{0,1};
\end{tikzpicture}
\caption{All unbroken faces blue. Automatically, the blue vertices and edges are precisely the fully internal vertices and edges.}
\end{subfigure}
~~
\begin{subfigure}[t]{0.23\textwidth}
\centering
\begin{tikzpicture}[thick]
\filldraw[color=blue!20!white] (0,-0.1) circle (1.1cm);
\filldraw[color=white] (-0.49283628,-0.22981333) circle (0.3cm);
\draw (0,1) to (0,1.5);
\draw (0,1) arc (90:-270:1.1cm);
\draw[color=blue] (0,1) to (-0.3,0);
\draw[color=blue] (-0.3,0) to (0.3,0);
\draw (-0.3,0) arc (50:-310:0.3cm);
\filldraw[color=blue!20!white] (0.6,0) circle (0.3cm);
\draw[color=blue] (0.3,0) arc (180:-180:0.3cm);
\vertex{-0.3,0};
{\color{blue}\vertex{0.3,0};}
\vertex{0,1};
\end{tikzpicture}
\caption{Two steps from induction base.}
\end{subfigure}
~~
\begin{subfigure}[t]{0.23\textwidth}
\centering
\begin{tikzpicture}[thick]
\filldraw[color=blue!20!white] (0,-0.1) circle (1.1cm);
\filldraw[color=white] (-0.49283628,-0.22981333) circle (0.3cm);
\filldraw[color=white] (0.6,0) circle (0.3cm);
\draw (0,1) to (0,1.5);
\draw (0,1) arc (90:-270:1.1cm);
\draw[color=blue] (0,1) to (-0.3,0);
\draw[color=blue] (-0.3,0) to (0.3,0);
\draw (-0.3,0) arc (50:-310:0.3cm);
\draw (0.3,0) arc (180:-180:0.3cm);
\vertex{-0.3,0};
\vertex{0.3,0};
\vertex{0,1};
\end{tikzpicture}
\caption{One step from induction base.}
\end{subfigure}
~~
\begin{subfigure}[t]{0.23\textwidth}
\centering
\begin{tikzpicture}[thick]
\draw (0,1) to (0,1.5);
\draw (0,1) arc (90:-270:1.1cm);
\draw (0,1) to (-0.3,0);
\draw (-0.3,0) to (0.3,0);
\draw (-0.3,0) arc (50:-310:0.3cm);
\draw (0.3,0) arc (180:-180:0.3cm);
\vertex{-0.3,0};
\vertex{0.3,0};
\vertex{0,1};
\end{tikzpicture}
\caption{Induction base.}
\end{subfigure}
\caption{The algorithm of Lemma \ref{lem:injection fi vertices to fi edges}.}
\label{fig:blue}
\end{figure}
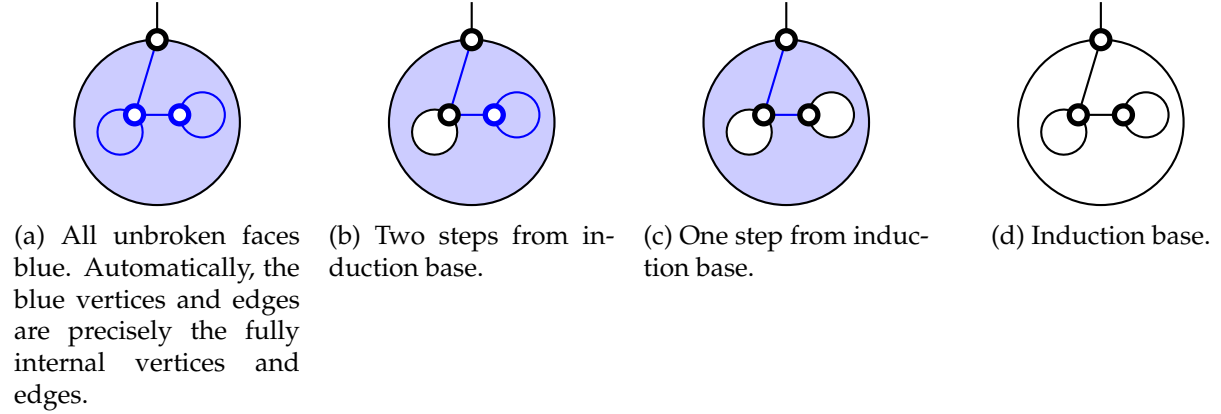
\begin{lem}\label{lem:injection fi vertices to fi edges}
Let $G$ be a Feynman ribbon graph. For any given total order on the set of unbroken faces, there exists an injective map $\Gamma_G$ from the fully internal vertices to the fully internal edges such that, for each fully internal $v\in G^0$,
\begin{itemize}
\item $\Gamma_G(v)$ is incident to $v$;
\item $\Gamma_G(v)$ borders a face which borders $v$ and is the minimum of the faces which border $v$.
\end{itemize}
\end{lem}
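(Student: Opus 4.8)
The plan is to split the construction into one independent matching problem per unbroken face and to solve each one by Hall's marriage theorem.

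\emph{Reductions.} Two observations about the ribbon structure are needed. (i) \emph{Every edge incident to a fully internal vertex is itself fully internal}: near a vertex $v$ the edge-end(s) of $e$ at $v$ are flanked by corners of $v$, and the (at most two) faces bordering $e$ are exactly the faces carrying those corners; if $v$ is fully internal these are unbroken, so $e$ borders only unbroken faces. (ii) For fully internal $v$ let $\mu(v)$ be the smallest, in the given total order, of the faces bordering $v$. If an edge $e$ is incident to a fully internal vertex $v$ and borders $\mu(v)$, then both faces of $e$ border $v$ and are unbroken, hence are $\ge\mu(v)$, while one of them equals $\mu(v)$; so the minimum of the two faces of $e$ equals $\mu(v)$. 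Consequently, if $e$ is incident to fully internal $v$ and $w$ and borders both $\mu(v)$ and $\mu(w)$, then $\mu(v)=\mu(w)$. Writing $W_f=\{v\text{ fully internal}:\mu(v)=f\}$ for each unbroken face $f$, this says that the sets of admissible edges for the different $W_f$'s are pairwise disjoint. Hence it suffices to produce, \emph{separately for each unbroken face} $f$, an injection sending each $v\in W_f$ to an edge incident to $v$ that borders $f$ (such an edge is fully internal by (i)); their union is the required $\Gamma_G$, it is injective by the disjointness just noted, and by construction $\Gamma_G(v)$ is incident to $v$ and borders $\mu(v)$.

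\emph{The per-face matching.} Fix an unbroken face $f$ and let $B_f$ be the subgraph of $G$ traced by the boundary walk of $f$; the edges admissible for $v\in W_f$ are precisely the $B_f$-edges at $v$, so by Hall's theorem the desired injection exists once we show that for every $S\subseteq W_f$ there are at least $|S|$ edges of $B_f$ incident to $S$. The geometric core is the claim that every vertex $v$ on $\partial f$ with $\deg_G(v)\ge 2$ meets at least two $B_f$-edge-ends: pick a corner $c$ of $v$ lying on $f$; since $\deg_G(v)\ge2$ it is flanked by two \emph{distinct} edge-ends $y\ne y'$ at $v$, and the side of the edge carrying $y$ that meets $c$ lies on $f$, so that edge borders $f$; likewise for $y'$, giving two distinct $B_f$-edge-ends at $v$. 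Now let $S\subseteq W_f$ and let $E_S$ be the set of $B_f$-edges incident to $S$; counting $B_f$-edge-ends at vertices of $S$ in two ways gives
\[
2|S|\ \le\ \sum_{v\in S}\#\{B_f\text{-edge-ends at }v\}\ =\ \sum_{e\in E_S}\#\{\text{edge-ends of }e\text{ at a vertex of }S\}\ \le\ 2|E_S|,
\]
so $|E_S|\ge|S|$, which is Hall's condition.

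\emph{Assembly and the obstacle.} Taking the union of the per-face injections and invoking the disjointness from (ii) produces the injective map $\Gamma_G$ with the two stated incidence/minimality properties. The crux is the degree bound $\deg_{B_f}(v)\ge 2$: it is exactly what feeds Hall's condition, and it is also where a hypothesis on valences must enter, since it uses $\deg_G(v)\ge 2$; a fully internal vertex of valence $1$ genuinely breaks the statement (e.g.\ the sphere carrying two vertices joined by a single edge, whose unique face is unbroken), so some such assumption is necessary, and under the valence-$\ge 3$ hypothesis in force in Theorem~\ref{thm:main} it holds automatically. One may alternatively make the assignment explicit, as in the algorithm of Figure~\ref{fig:blue}: process the unbroken faces in increasing order and, walking along $\partial f$, hand each still-unmatched vertex of $W_f$ the edge-side just traversed, taking care that an edge both of whose sides lie on $f$ is spent only once — but the Hall argument above absorbs that bookkeeping.
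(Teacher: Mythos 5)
Your proof is correct and takes a genuinely different route from the paper. The paper proves a ``blue'' generalization by induction on the number of blue faces: it peels off the minimum blue face $\mathcal{F}$, walks its boundary, hands each blue vertex the next blue edge encountered (with special handling for 1-vertices and doubly-bordering edges), and then applies the induction hypothesis to the graph with $\mathcal{F}$ de-colored. Your argument instead partitions the matching problem face-by-face via the observation that a fully internal edge's minimum bordering face determines $\mu$ at both of its endpoints (so the admissible edge-sets for distinct minimum faces are disjoint), and then solves each per-face problem by Hall's theorem with a clean degree-sum double-count. This buys transparency: the injectivity of the global map is an immediate consequence of the disjointness observation rather than of the induction, and the Hall argument absorbs the bookkeeping around self-loops and edges bordering $f$ on both sides that the paper's cycle-walk description leaves somewhat implicit. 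Your per-face bipartite decomposition is also new relative to the paper and is pleasant in its own right.

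Your remark that the statement, as written, requires a valence hypothesis is correct and worth flagging. The configuration of two degree-1 vertices joined by a single edge with no external legs has one unbroken face, two fully internal vertices, and only one fully internal edge, so no injection exists; this also defeats the paper's rule ``assign each edge bordering $\mathcal{F}$ that attaches to a 1-vertex to that vertex'', since the single edge would be assigned to both endpoints. Under the valence-$\geq 3$ hypothesis in force wherever the lemma is invoked (Theorem~\ref{thm:main}) this does not arise, and your argument in fact works already under valence $\geq 2$, but the lemma statement itself should carry some such hypothesis.

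Two minor points you may wish to tighten when writing this up. First, your step (ii) uses that both faces of an edge $e$ incident to $v$ border $v$; this is true for exactly the reason you give in (i), but since (i) is phrased only as a statement about unbrokenness it is worth stating the ``both faces of $e$ appear as corners of $v$'' fact once, explicitly, and citing it in both places. Second, in the Hall inequality you should note explicitly that every edge-end of $e\in E_S$ counted on the right-hand side is automatically a $B_f$-edge-end (since $e\in B_f$), so that the middle equality really is a double count of the same set of incidences; as written this is correct but compressed.
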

\begin{proof}
One may color any subset of the unbroken faces of $G$ blue, and color a vertex or edge blue if all faces that it borders are blue. For all such colorings, we shall prove the generalization of the lemma in which `fully internal' is replaced by `blue'. The proof proceeds by induction to the number of blue faces, cf.\ Figure \ref{fig:blue}. The induction base (no blue faces) is trivial. If $G$ has blue faces, we let $\mathcal{F}$ be a minimum among the blue faces. Now assign to each blue vertex $v$ bordering~$\mathcal{F}$ a neighboring blue edge $e=:\Gamma_G(v)$ as follows. All edges bordering $\mathcal{F}$ that attach to a 1-vertex are assigned to that vertex. All other edges that border $\mathcal{F}$ on two sides are ignored, and the remaining edges bordering $\mathcal{F}$ are organized in cycles, so that one may simply follow them clockwise (or anti-clockwise), finding that each new blue vertex is always followed by a new blue edge. We note that this assignment satisfies the conditions of the lemma. All other blue vertices can also be assigned according to the conditions of the lemma, by applying the induction hypothesis, after removing the blue color from $\mathcal{F}$ and from its bordering vertices and edges.
\end{proof}
\begin{thm}\label{thm:upper}
Assume the conditions of Theorem \ref{thm:main}.
Asymptotically in $N\in\N$, we have the following upper bound of the amplitude:
\begin{align*}
|\Ampl_{N,i_1,\ldots,i_n}(G)|&\lesssim N^{U+\frac{p}{d}(E_\textnormal{fi}-V_\textnormal{fi})}.
\end{align*}
\end{thm}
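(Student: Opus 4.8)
The plan is to start from Lemma~\ref{lem:Ampl for weighted divdifs}, which expresses $\Ampl_{N,i_1,\ldots,i_n}(G)$ as a sign times $\lambda_{i_1}^{-1}\cdots\lambda_{i_n}^{-1}$ times a sum over the \emph{running} indices $i_{n+1},\ldots,i_{n+U}\in\{1,\ldots,N\}$ of a product of weighted divided differences of $f'$ (one per vertex) divided by another such product (one per internal edge). For fixed external indices the prefactor is a constant, so it suffices to bound that sum by a constant times $N^{U+\frac pd(\Efi-\Vfi)}$. Two harmless reductions come first: replacing $f$ by $-f$ does not change $|\Ampl_{N,i_1,\ldots,i_n}(G)|$, so using that $f$ is even we may arrange that $f$ is positive outside a compact set; and by the shifting argument recorded before Theorem~\ref{thm:no_vanishing_divdifs} (via Remark~\ref{rem:translational invariance}) we may assume $0\notin\{\lambda_k\}_k$. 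Then all eigenvalues are bounded away from $0$ and Theorem~\ref{thm:no_vanishing_divdifs} applies at every order.

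Next I would bound a single summand. For every vertex $v$, Corollary~\ref{cor:wdivdifs are bounded} gives $|f'\{i(\alpha^v_1),\ldots,i(\alpha^v_{\deg(v)})\}|\lesssim1$, and if $v$ is fully internal Theorem~\ref{thm:no_vanishing_divdifs} sharpens this to $|f'\{i(\alpha^v_1),\ldots,i(\alpha^v_{\deg(v)})\}|\lesssim|\lambda_{m(v)}|^{-p}$, where $m(v)$ is the index of a face bordering $v$ whose eigenvalue has least modulus among all faces bordering $v$. For every internal edge $e$, the non-vanishing hypothesis together with Theorem~\ref{thm:no_vanishing_divdifs} gives $|f'\{i(\beta^e_1),i(\beta^e_2)\}|^{-1}\lesssim|\lambda_{m(e)}|^{p}$, where $m(e)$ is whichever of $\beta^e_1,\beta^e_2$ carries the eigenvalue of smaller modulus; if $e$ is not fully internal then one of its two faces is broken, so $m(e)\in\{i_1,\ldots,i_n\}$, its modulus is bounded by a constant, and this factor is $\lesssim1$. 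All implied constants here are uniform in the running indices, so a single summand is $\lesssim\prod_v|\lambda_{m(v)}|^{-p}\prod_e|\lambda_{m(e)}|^{p}$, with $v$ running over the fully internal vertices and $e$ over the fully internal edges.

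The combinatorial core is to apply Lemma~\ref{lem:injection fi vertices to fi edges} with the total order on the unbroken faces taken, \emph{for the summand at hand}, to be increasing modulus of the eigenvalue carried by the face (ties broken by a fixed auxiliary labeling). This yields an injection $\Gamma_G$ from the fully internal vertices to the fully internal edges such that $\Gamma_G(v)$ borders the minimal face bordering $v$, which in this order is precisely a face realizing $m(v)$; hence $|\lambda_{m(\Gamma_G(v))}|\le|\lambda_{m(v)}|$ and, since $p\ge0$, $|\lambda_{m(v)}|^{-p}\,|\lambda_{m(\Gamma_G(v))}|^{p}\le1$. Pairing each fully internal vertex $v$ with $\Gamma_G(v)$, these $\Vfi$ paired contributions are each $\le1$; by injectivity exactly $\Efi-\Vfi$ fully internal edges lie outside the image of $\Gamma_G$, and each borders only running faces with index in $\{1,\ldots,N\}$, so by $|\lambda_k|\lesssim N^{1/d}$ it contributes $|\lambda_{m(e)}|^{p}\lesssim N^{p/d}$. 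Hence each summand is $\lesssim N^{\frac pd(\Efi-\Vfi)}$, uniformly — only finitely many total orders on the $U$ unbroken faces occur, so the constant can be taken as the maximum over them. Summing over the $U$ running indices, each in $\{1,\ldots,N\}$, gives at most $N^U$ summands, and therefore $|\Ampl_{N,i_1,\ldots,i_n}(G)|\lesssim N^{U+\frac pd(\Efi-\Vfi)}$.

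The main obstacle I expect is the bookkeeping in the combinatorial step: one must reconcile the abstract injection $\Gamma_G$ of Lemma~\ref{lem:injection fi vertices to fi edges}, whose defining property is stated for a \emph{fixed} total order, with the summand-dependent order induced by eigenvalue moduli, and check that the minimal bordering face picked by $\Gamma_G(v)$ really is the one governing the asymptotics of the weighted divided difference at $v$, so that the cancellation $|\lambda_{m(v)}|^{-p}|\lambda_{m(\Gamma_G(v))}|^{p}\le1$ is genuinely available at every fully internal vertex. Everything else — the two-sided estimates of Theorem~\ref{thm:no_vanishing_divdifs}, the boundedness of Corollary~\ref{cor:wdivdifs are bounded}, and the coarse bound $|\lambda_k|\lesssim N^{1/d}$ for $k\le N$ — enters only routinely.
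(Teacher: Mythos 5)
Your proof is correct and follows essentially the same route as the paper: pass to weighted divided differences via Lemma~\ref{lem:Ampl for weighted divdifs}, bound each vertex factor above and each internal-edge factor below using Theorem~\ref{thm:no_vanishing_divdifs} (and Corollary~\ref{cor:wdivdifs are bounded}) with a summand-dependent total order on faces, then invoke Lemma~\ref{lem:injection fi vertices to fi edges} to pair each fully internal vertex with an incident fully internal edge so that the $|\lambda|^{\pm p}$ factors cancel, leaving $\Efi-\Vfi$ factors bounded by $N^{p/d}$ and $N^U$ summands. The only cosmetic difference is that the paper makes the cancellation exact by choosing the edge label $\beta(\Gamma_G(v)):=\alpha(v)$ before applying the lower bound (valid since the lower bound of Theorem~\ref{thm:no_vanishing_divdifs} holds for either bordering index), whereas you use the minimal bordering index of the edge and the monotonicity $|\lambda_{m(\Gamma_G(v))}|\le|\lambda_{m(v)}|$; the two are equivalent, and your explicit $f\mapsto-f$ reduction to make $f$ positive outside a compact is a small (and welcome) piece of care that the paper leaves implicit.
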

\begin{proof}
Throughout the proof we fix $i_1,\ldots,i_n$, and so we trivially have $|\lambda_{i_1}|,\ldots,|\lambda_{i_n}|\lesssim 1$ as $N\to\infty$. By Lemma \ref{lem:Ampl for weighted divdifs},
\begin{align}\label{eq:first bound amplitude}
|\Ampl_{N,i_1,\ldots,i_n}(G)|&\lesssim\sum_{i_{n+1},\ldots,i_{n+U}=1}^N\frac{\prod_{v\in G^0} |f'\{i(\alpha^v_1),\ldots,i(\alpha^v_{\deg(v)})\}|}{\prod_{e\in G^1_{\textnormal{int}}} |f'\{i(\beta^e_1),i(\beta^e_2)\}|},
\end{align}
where $i(\alpha)\equiv i_\alpha$. 

For each choice of $i_{n+1},\ldots,i_{n+U}\in\{1,\ldots,N\}$, we equip the faces of $G$ with a total order that corresponds to $i_1,\ldots,i_{n+U}$ in the following way: whenever $|\lambda_{i(\alpha)}|<|\lambda_{i(\beta)}|$, then the face with index number $\alpha$ is strictly less than the face with index number $\beta$. The ordering on faces $\alpha,\beta$ such that~${|\lambda_{i(\alpha)}|=|\lambda_{i(\beta)}|}$ is chosen arbitrarily. Lemma \ref{lem:injection fi vertices to fi edges} supplies an injection $\Gamma_G:G^0_{\textnormal{fi}}\to G^1_\textnormal{int}$, where~$G^0_{\textnormal{fi}}$ is the set of fully internal vertices of $G$. 

{As argued in the paragraph before Theorem \ref{thm:no_vanishing_divdifs}, we may assume that $0\notin\{\lambda_k\}_{k=1}^\infty$. We shall apply Theorem \ref{thm:no_vanishing_divdifs} in the following four cases. }

\begin{itemize}
\item For each $v\in G^0$ which is not fully internal, Theorem \ref{thm:no_vanishing_divdifs} yields
$$|f'\{i(\alpha^v_1),\ldots,i(\alpha^v_{\deg(v)})\}|\leq c_2\max\{|\lambda_k|^{-p}~:~k\in\N_{\geq1}\}\equiv C_1,$$
for a constant $C_1$ which is independent of $i_{n+1},\ldots,i_{n+U}$ and $N$.

\item For each $v\in G^0$ which is fully internal, Theorem \ref{thm:no_vanishing_divdifs} yields
$$|f'\{i(\alpha^v_1),\ldots,i(\alpha^v_{\deg(v)})\}|\leq C_2|\lambda_{i(\alpha(v))}|^{-p},$$
where $\alpha(v)$ is the index number of a minimal face bordering $v$ (in the total order introduced above), and $C_2$ is some constant which is independent of $i_{n+1},\ldots,i_{n+U}$ and $N$.
\item For each $e\in G^1_\textnormal{int}$ which is not fully internal, Theorem \ref{thm:no_vanishing_divdifs} yields
$$|f'\{i(\beta^e_1),i(\beta^e_2)\}| 
\geq c_1\min\{|\lambda_i|^{-p}~:~i\in\{i_1,\ldots,i_n\}\}\equiv C_3,$$
for $C_3>0$ independent of $i_{n+1},\ldots,i_{n+U},N$.
\item For every edge $e$ we choose an index number $\beta(e)$ such that $\beta(\Gamma_G(v))=\alpha(v)$ for every fully internal vertex $v$, using injectivity of $\Gamma_G$. For each fully internal edge $e\in G^1_\textnormal{int}$, Theorem \ref{thm:no_vanishing_divdifs} yields
$$|f'\{i(\beta^e_1),i(\beta^e_2)\}|\geq C_4 |\lambda_{i(\beta(e))}|^{-p},$$
for $C_4>0$ independent of $i_{n+1},\ldots,i_{n+U},N$.
\end{itemize}

Our choice of $\beta(e)$ ensures that every factor in the numerator of \eqref{eq:first bound amplitude} is canceled against a factor in the denominator, and we end up with 
\begin{align*}
|\Ampl_{N,i_1,\ldots,i_n}(G)|&\lesssim\sum_{i_{n+1},\ldots,i_{n+U}=1}^N\frac{1}{\prod_{e\in G^1_{\textnormal{int}}\setminus\Gamma_G(G^0_{\textnormal{fi}})} |\lambda_{i(\beta(e))}|^{-p}}\\
&=\sum_{i_{n+1},\ldots,i_{n+U}=1}^N\prod_{e\in G^1_{\textnormal{int}}\setminus\Gamma_G(G^0_{\textnormal{fi}})} |\lambda_{i(\beta(e))}|^{p},
\end{align*}
noting that $p$ is nonnegative.
We now use that 
$$|\lambda_{i(\beta(e))}|\lesssim |\lambda_{N}|\lesssim N^{1/d}.$$
We obtain
\begin{align*}
|\Ampl_{N,i_1,\ldots,i_n}(G)|&\lesssim\sum_{i_{n+1},\ldots,i_{n+U}=1}^N\prod_{e\in G^1_{\textnormal{int}}\setminus\Gamma_G(G^0_{\textnormal{fi}})} N^{p/d}\\
&=\sum_{i_{n+1},\ldots,i_{n+U}=1}^N N^{(p/d)(E_\textnormal{fi}-V_\textnormal{fi})}\\
&=N^{U+(p/d)(E_\textnormal{fi}-V_\textnormal{fi})},
\end{align*}
as we wanted to show.
\end{proof}

\begin{rem}\label{rem:apparent shortcut}
We explain here a problem with an apparent shortcut to the above proof. Indeed, given positive~${n_1,\ldots,n_U}$ it is not hard to show (using $\lambda_j\sim j^{1/d}$) that
\begin{align}\label{eq:sums like integrals}
    \sum_{j_1=1}^N\cdots\sum_{j_U=1}^N \lambda_{j_1}^{n_1}\cdots\lambda_{j_U}^{n_U}=\O(N^{U+(n_1+\ldots+n_U)/d}),
\end{align} 
exactly as in the case with integrals instead of sums. If one skips the graph-theoretical Lemma \ref{lem:injection fi vertices to fi edges} and applies the estimates of Theorem \ref{thm:weighted_divdiff_bound} for arbitrary bordering indices of the fully internal vertices and edges, one can estimate the amplitude by the sum on the left-hand side of \eqref{eq:sums like integrals} where indeed the powers automatically add up to $n_1+\ldots+n_U=p(\Efi-\Vfi)$ as required! However, the $n_1,\ldots,n_U$ may not be positive, which invalidates \eqref{eq:sums like integrals}. The following example shows that such an estimate of the amplitude by \eqref{eq:sums like integrals} really is too coarse in general. We surely have
\begin{align}\label{eq:example diagram}
\raisebox{-20pt}{
\begin{tikzpicture}[thick]
	\draw (0.45,-.7) to (1,0);
	\draw (1.55,-.7) to (1,0);
	\draw (1,0.054) arc (-90:270:0.3cm);
	\draw (1,-0.035) arc (-90:270:.7cm);
	\node at (1,-.55) {$i_1$};
	\node at (0,1) {$i_2$};
	\vertex{1,0};
\end{tikzpicture}
}
=\lambda_{i_1}^{-1}\lambda_{i_2}^{-1}\sum_{k=1}^N\sum_{l=1}^N\frac{f'\{i_1,k,l,k,i_1,i_2\}}{f'\{i_1,k\}f'\{k,l\}}\lesssim
\sum_{k=1}^N\sum_{l=1}^N\frac{l^{-p/d}}{k^{-p/d}k^{-p/d}}=\sum_{k=1}^N k^{2p/d}\sum_{l=1}^N l^{-p/d}.
\end{align}
Naively adding up the orders gives the correct result, $\O(N^{2+\frac{p}{d}})$. But because $-p/d$ is negative, we cannot add up the powers: Assuming $p/d>1$, the sequence $(\sum_{l=1}^N l^{-p/d})_{N\in\N}$ is convergent with nonzero limit. Hence, $\sum_{l=1}^N l^{-p/d}$ is of order $N^0$, not of order $N^{1-p/d}$. The right-hand side of \eqref{eq:example diagram} is therefore $\O(N^{1+2p/d})$, and as we now know we have a better bound for the left-hand side. The trick is simply to choose the indices so that each vertex contribution is canceled by an edge contribution.
\end{rem}

\subsection{Lower bound}\label{sect: lower}
In this subsection we prove a lower bound on the divergence for general Feynman ribbon graphs {and functions whose divided differences do not vanish on the spectrum.} 
This lower bound is the last remaining part of our first main theorem, Theorem \ref{thm:main}.


\begin{thm}\label{thm:lower bound}
Assume the conditions of Theorem~\ref{thm:main}. Then, asymptotically in $N\in\N$, we have
\begin{align}\label{eqn: ampl_lower_bound}
|\Ampl_{N,i_1,\ldots,i_n}(G)|&\gtrsim N^{U+\frac{p}{d}(E_{\textnormal{fi}}-V_{\textnormal{fi}})}.
\end{align}
\end{thm}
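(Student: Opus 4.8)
The plan is to mirror the structure of the upper bound proof (Theorem~\ref{thm:upper}), but now choosing the running indices $i_{n+1},\ldots,i_{n+U}$ \emph{cleverly} rather than estimating over all of them. First, using Lemma~\ref{lem:Ampl for weighted divdifs}, we reduce to bounding from below the sum
\begin{align*}
\sum_{i_{n+1},\ldots,i_{n+U}=1}^N\frac{\prod_{v\in G^0} f'\{i(\alpha^v_1),\ldots,i(\alpha^v_{\deg(v)})\}}{\prod_{e\in G^1_{\textnormal{int}}} f'\{i(\beta^e_1),i(\beta^e_2)\}}.
\end{align*}
The crucial point enabled by Lemma~\ref{lem: divdiffs_nonzero} (positivity of weighted divided differences far from zero) is that, \emph{asymptotically}, all terms in this sum carry the same sign, so there are no cancellations to worry about; it therefore suffices to bound a well-chosen sub-sum from below. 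I would restrict the sum to those tuples $(i_{n+1},\ldots,i_{n+U})$ lying in a dyadic-type regime $i_{n+j}\in(N/2,N]$ for all $j$, which contributes on the order of $N^U$ such tuples, and on which every running index satisfies $|\lambda_{i_{n+j}}|\asymp N^{1/d}$.

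On such tuples I would apply the two-sided estimate of Theorem~\ref{thm:no_vanishing_divdifs}. For each fully internal vertex $v$, the minimal bordering face has index $\asymp N^{1/d}$ in modulus (all its faces are unbroken, hence running), so $|f'\{i(\alpha^v_1),\ldots\}|\gtrsim |\lambda_{\min}|^{-p}\asymp N^{-p/d}$; for a non-fully-internal vertex the weighted divided difference is bounded below by a positive constant (its minimal face is a fixed external index, bounded). Dually, each fully internal edge contributes $|f'\{i(\beta^e_1),i(\beta^e_2)\}|\lesssim N^{-p/d}$ in the denominator, and non-fully-internal edges are bounded above by a constant. Counting exponents, the product over the $V_{\f}$ fully internal vertices gives $N^{-(p/d)V_{\f}}$ and the reciprocal of the product over the $E_{\f}$ fully internal edges gives $N^{(p/d)E_{\f}}$, so each retained term is $\gtrsim N^{(p/d)(E_{\f}-V_{\f})}$. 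Multiplying by the number $\asymp N^U$ of retained tuples yields the desired lower bound $N^{U+\frac{p}{d}(E_{\f}-V_{\f})}$, provided $E_{\f}-V_{\f}\ge 0$ (which is noted to hold for any graph, so no term is suppressed).

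The main obstacle is ensuring that discarding the rest of the sum is legitimate, i.e.\ that the retained sub-sum genuinely dominates and is not swamped by an opposite-sign contribution from the discarded terms. This is exactly where the sign control from Lemma~\ref{lem: divdiffs_nonzero} is essential: one must check that for $N$ large enough \emph{every} term whose running indices are all large has the same sign (the sign $(-1)^{\#\text{odd-degree vertices}}\lambda_{i_1}^{-1}\cdots\lambda_{i_n}^{-1}$ factored out in Lemma~\ref{lem:Ampl for weighted divdifs} times a product of positive weighted divided differences), and that the finitely many terms with some small running index contribute only $O(N^{U-1+\frac{p}{d}(E_{\f}-V_{\f})})$ or an error of strictly lower order, hence cannot cancel the leading behaviour — this uses Theorem~\ref{thm:no_vanishing_divdifs} again together with the crude counting $\sum$ over a subset of indices of size $O(N^{U-1})$. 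A secondary technical point is matching, for the lower bound, each retained vertex factor in the numerator with a denominator factor of the same order; here the injection $\Gamma_G$ from Lemma~\ref{lem:injection fi vertices to fi edges} is again the right bookkeeping device, guaranteeing that the leftover $E_{\f}-V_{\f}$ edge factors in the denominator each contribute $N^{-p/d}$, i.e.\ a factor $N^{p/d}$ upstairs, with nothing left uncancelled that could spoil the count.
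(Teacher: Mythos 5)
Your overall strategy coincides with the paper's: restrict to large running indices, count $\asymp N^U$ such tuples, pair vertex and edge contributions via the injection $\Gamma_G$ from Lemma~\ref{lem:injection fi vertices to fi edges}, and apply the two-sided estimate of Theorem~\ref{thm:no_vanishing_divdifs}. But there is a genuine gap in the sign argument on which the whole approach hinges. You invoke Lemma~\ref{lem: divdiffs_nonzero} to claim the summands are eventually positive, but that lemma only applies when \emph{all} arguments of the weighted divided difference lie far from zero. In the amplitude, vertices and edges bordering a broken face carry the fixed external eigenvalues $\lambda_{i_1},\ldots,\lambda_{i_n}$, which may be arbitrarily small; for such mixed-input weighted divided differences, Lemma~\ref{lem: divdiffs_nonzero} says nothing about the sign. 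The paper handles exactly this via Lemma~\ref{lem:limit}: for fixed small arguments, the weighted divided difference converges to a nonzero limit as the remaining arguments diverge, whence the sign stabilizes (to the sign of the limit, not necessarily $+$). That convergence result, plus the nonvanishing hypothesis on $f'[\cdot,\ldots,\cdot]$, is what actually gives constant sign on the restricted sum. You cannot get this from Lemma~\ref{lem: divdiffs_nonzero} alone.

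A second, related soft spot is your dismissal of the terms where some running index is small. These are not "finitely many"; there are $O(N^{U-1})$ of them, and more importantly, when you fix a running index to a small value you effectively artificially break that face, which changes not only $U$ but also $E_{\f}$ and $V_{\f}$. Your stated bound $O(N^{U-1+\frac{p}{d}(E_{\f}-V_{\f})})$ implicitly assumes $E_{\f}$ and $V_{\f}$ stay the same, which is false. The paper's Lemma~\ref{lem: restricted_lower_sufficient} resolves this with a full $2^U$-fold sum splitting over subsets $\mathfrak{b}$ of artificially broken unbroken faces, applies the upper bound Theorem~\ref{thm:upper} to each nonempty class to get $O(N^{\omega(G_{\mathfrak{b}})})$, and then uses the inequality $\omega(G_{\mathfrak{b}}) \le \omega(G) - 1$ for $\mathfrak{b}\neq\emptyset$. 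That inequality is not automatic — it needs the same kind of injection-type bookkeeping as Lemma~\ref{lem:injection fi vertices to fi edges} — and your sketch elides it. Once these two points are repaired (sign constancy via Lemma~\ref{lem:limit} and the careful sum splitting), the rest of your argument — Lemma~\ref{lem:amount of terms}-style counting, the $\Gamma_G$ pairing, and the two-sided edge/vertex estimates — goes through exactly as in the paper.
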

We spend the rest of this section proving the above theorem, so we fix $f,p,\{\lambda_k\}_{k=1}^\infty,$ $c_1,c_2$, $G=(G^0,n,G^1)$, and $i_1,\ldots,i_n$, as in Theorem~\ref{thm:main}. We first reduce the proof to the case where the spectrum is located sufficiently far away from zero. Towards this, we introduce the notion of a \emph{restricted amplitude}, 
\begin{equation}\label{eqn: restricted_ampl}
    \Ampl_{N,i_1,\dotsc,i_n}^{\ge i_R}(G):=\sum_{i_{n+1},\ldots,i_{n+U}=i_R}^N\frac{\prod_{v\in G^0} f'[i(\alpha^v_1),\ldots,i(\alpha^v_{\deg(v)})]}{\prod_{e\in G^1_{\textnormal{int}}} f'[i(\beta^e_1),i(\beta^e_2)]},
\end{equation}
where the running indices are restricted to being sufficiently large. Note that the natural analogue of Lemma~\ref{lem:Ampl for weighted divdifs} holds for restricted amplitudes. 
We moreover introduce \textbf{artificially broken} graphs $G_{\mathfrak{b},\gamma}$, which are Feynman ribbon graphs $G$ together with a subset $\mathfrak{b}$ of the set of unbroken faces (called artificially broken faces), and an association $\gamma:\mathfrak{b}\to\N_{\geq1}$ of an index to each artificially broken face. By interpreting these indices as external and the elements of $\mathfrak{b}$ as broken, and by summing over the remaining unbroken indices, the definitions \eqref{eqn: restricted_ampl} and \eqref{eq:omg} naturally extend to artificially broken graphs.
\begin{lem}\label{lem: restricted_lower_sufficient}
    Assume the conditions of Theorem~\ref{thm:main} and fix a Feynman ribbon graph $G$. Then there exists some $i_R\in \N$ such that all terms of \eqref{eqn: restricted_ampl} have the same sign. Moreover, if there exist~$M,c_3'>0$ such that for all~${N\ge M}$ we have \begin{equation}\label{eqn: assume_lower_restr}|\Ampl_{N,i_1,\dotsc,i_k,\gamma}^{\ge i_R}(G_{\mathfrak{b}})|\ge c_3'N^{\omega(G_{\mathfrak{b}})}
    \end{equation} 
    for all artificially broken Feynman ribbon graphs $G_{\mathfrak{b},\gamma}$ obtained from $G$ with 
    external indices $i_1,\dotsc,i_k< i_R$, 
     then there also exists some $c_3>0$ such that \begin{equation*}|\Ampl_{N,i_1,\dotsc,i_n}(G)|\ge c_3N^{\omega(G)}
    \end{equation*} 
    for all $N\ge M$.
\end{lem}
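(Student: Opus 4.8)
\emph{Strategy.} The plan is to split each summation over a running (unbroken-face) index of $\Ampl_{N,i_1,\dotsc,i_n}(G)$ at the threshold $i_R$, which expresses the amplitude as a \emph{finite} sum of restricted amplitudes of artificially broken graphs; the crux is then that the single term in which no unbroken face is artificially broken --- namely $\Ampl^{\ge i_R}_{N,i_1,\dotsc,i_n}(G)$, the object controlled by hypothesis \eqref{eqn: assume_lower_restr} --- dominates every other term by a full power of $N$. First I would make two harmless reductions. As in the paragraph preceding Theorem~\ref{thm:no_vanishing_divdifs}, Remark~\ref{rem:translational invariance} lets us assume $0\notin\{\lambda_k\}_{k=1}^\infty$; and, replacing $f$ by $-f$ if necessary (which only multiplies every amplitude by one overall sign), we may assume $f$ is positive outside a compact set --- legitimate since $f'$ has constant sign outside a compact set, so the even function $f$ is monotone, hence of one sign, there. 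With these reductions, Theorems~\ref{thm:no_vanishing_divdifs} and~\ref{thm:upper} are available. Fix $i_R>\max(i_1,\dotsc,i_n)$, to be enlarged finitely often below.

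\emph{Constant sign of the restricted amplitude.} By the analogue of Lemma~\ref{lem:Ampl for weighted divdifs} for restricted amplitudes, the $(i_{n+1},\dotsc,i_{n+U})$-summand of \eqref{eqn: restricted_ampl} equals $(-1)^{\#\text{odd-degree vertices}}\lambda_{i_1}^{-1}\cdots\lambda_{i_n}^{-1}$ times a ratio, over the vertices and internal edges of $G$, of weighted divided differences $f'\{\cdots\}$. For a vertex or edge bordering only unbroken faces --- all of whose bordering indices are then running, hence $\ge i_R$ --- the corresponding weighted divided difference has, for $i_R$ large, a fixed sign $\sigma\in\{\pm1\}$: this is Lemma~\ref{lem: divdiffs_nonzero} combined with the observation after Definition~\ref{def:precise order} that a function of precise order $-p-1$ satisfies its hypotheses up to a global sign. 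For a vertex or edge that also borders a broken face, its bordering indices split into a \emph{fixed part} drawn from $\{i_1,\dotsc,i_n\}$ (independent of the running indices) and a running part; the weighted divided difference of that fixed part alone is nonzero --- here $0\notin\{\lambda_k\}$ and the non-vanishing hypothesis of Theorem~\ref{thm:main} are used --- so by Lemmas~\ref{lem:limit} and~\ref{lem:weighted_divdifs_bounds_small-large}, for $i_R$ large the full weighted divided difference has the same nonzero sign as that fixed part. Multiplying, the sign of each summand of \eqref{eqn: restricted_ampl} does not depend on $(i_{n+1},\dotsc,i_{n+U})$, which proves the first assertion.

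\emph{Decomposition and the key estimate.} Assume $N\ge i_R$ (this is automatic in the end, since the hypothesis forces $M\ge i_R$). Splitting $\sum_{i_\mu=1}^{N}=\sum_{i_\mu=1}^{i_R-1}+\sum_{i_\mu=i_R}^{N}$ for each unbroken face $\mu$ of $G$ and sorting the resulting terms by the set $\mathfrak b\subseteq\{\text{unbroken faces of }G\}$ of faces receiving an index below $i_R$ and by the values $\gamma(b)\in\{1,\dotsc,i_R-1\}$ of those indices, the definitions give at once
\[
\Ampl_{N,i_1,\dotsc,i_n}(G)=\sum_{\mathfrak b}\ \sum_{\gamma\colon\mathfrak b\to\{1,\dotsc,i_R-1\}}\Ampl^{\ge i_R}_{N,i_1,\dotsc,i_n,\gamma}(G_{\mathfrak b}),
\]
a finite sum whose $\mathfrak b=\emptyset$ term is $\Ampl^{\ge i_R}_{N,i_1,\dotsc,i_n}(G)$, of modulus $\ge c_3'N^{\omega(G)}$ for $N\ge M$ by the $\mathfrak b=\emptyset$ case of \eqref{eqn: assume_lower_restr}. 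For $\mathfrak b\ne\emptyset$, the argument proving Theorem~\ref{thm:upper} applies verbatim to restricted amplitudes of artificially broken graphs --- artificial breaking only relabels some faces as broken and fixes their (bounded) indices, and restricting the running indices to $\ge i_R$ merely improves the estimates --- so $|\Ampl^{\ge i_R}_{N,\dotsc,\gamma}(G_{\mathfrak b})|\lesssim N^{\omega(G_{\mathfrak b})}$, uniformly over the finitely many $\gamma$. The decisive point is that $\omega(G_{\mathfrak b})\le\omega(G)-|\mathfrak b|$: apply Lemma~\ref{lem:injection fi vertices to fi edges} with a total order on the unbroken faces of $G$ in which every face of $\mathfrak b$ precedes every face not in $\mathfrak b$; then a fully internal vertex $v$ bordering some face of $\mathfrak b$ has its minimal bordering face in $\mathfrak b$, so $\Gamma_G(v)$ borders a face of $\mathfrak b$ and is itself fully internal, whence $\Gamma_G$ restricts to an injection from the fully internal vertices bordering $\mathfrak b$ into the fully internal edges bordering $\mathfrak b$. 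Denoting the sizes of these sets $\Delta V\le\Delta E$, and using $U^{\mathfrak b}=U-|\mathfrak b|$, $\Vfi^{\mathfrak b}=\Vfi-\Delta V$, $\Efi^{\mathfrak b}=\Efi-\Delta E$, equation \eqref{eq:omg} gives $\omega(G_{\mathfrak b})=U-|\mathfrak b|+\tfrac{p}{d}\big((\Efi-\Vfi)-(\Delta E-\Delta V)\big)\le\omega(G)-|\mathfrak b|$ as $p/d\ge0$; in particular $\omega(G_{\mathfrak b})\le\omega(G)-1$ for nonempty $\mathfrak b$.

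\emph{Conclusion and main obstacle.} Combining the last two estimates, for $N\ge M$,
\[
|\Ampl_{N,i_1,\dotsc,i_n}(G)|\ \ge\ c_3'N^{\omega(G)}-\sum_{\mathfrak b\ne\emptyset}\ \sum_\gamma|\Ampl^{\ge i_R}_{N,\dotsc,\gamma}(G_{\mathfrak b})|\ \ge\ c_3'N^{\omega(G)}-C\,N^{\omega(G)-1}
\]
for a constant $C$, which exceeds $\tfrac12 c_3'N^{\omega(G)}$ once $N$ is large; hence $|\Ampl_{N,i_1,\dotsc,i_n}(G)|\gtrsim N^{\omega(G)}$, and adjusting the constant to absorb the finitely many smaller $N\ge M$ is routine. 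I expect the one non-routine ingredient to be the graph inequality $\omega(G_{\mathfrak b})\le\omega(G)-|\mathfrak b|$: ordering the unbroken faces so that $\mathfrak b$ comes first, and feeding that order into the injection of Lemma~\ref{lem:injection fi vertices to fi edges}, is exactly what makes the combinatorial device built for the upper bound do double duty here. The remaining steps --- the splitting of sums, the uniform upper bound, the constant-chasing --- are bookkeeping.
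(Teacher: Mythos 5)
Your proof follows the same overall strategy as the paper's: split each running sum at the threshold $i_R$, decompose $\Ampl_{N,i_1,\dotsc,i_n}(G)$ into a finite sum of restricted amplitudes $\Ampl^{\ge i_R}$ of artificially broken graphs, use Lemma~\ref{lem:limit} to get sign-coherence, combine the hypothesis with Theorem~\ref{thm:upper} to control each piece, and then observe that the $\mathfrak{b}=\emptyset$ term dominates. The structure, the auxiliary lemmas invoked, and the dominant-term conclusion are all as in the paper.

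The one place where you go noticeably beyond the paper is the inequality $\omega(G_{\mathfrak{b}})\le\omega(G)-|\mathfrak{b}|$. The paper simply asserts $\omega(G_{\mathfrak{b}})\le U-1+\tfrac{p}{d}(\Efi-\Vfi)$ for $\mathfrak{b}\ne\emptyset$, which implicitly requires $\Efi^{\mathfrak{b}}-\Vfi^{\mathfrak{b}}\le\Efi-\Vfi$. You supply a proof of this by feeding a total order in which the faces of $\mathfrak{b}$ come first into Lemma~\ref{lem:injection fi vertices to fi edges}, so that $\Gamma_G$ restricts to an injection from the fully internal vertices touching $\mathfrak{b}$ into the fully internal edges touching $\mathfrak{b}$, giving $\Delta V\le\Delta E$; that is a correct and clean way to justify the step. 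Your explicit handling of the WLOG reductions ($0\notin\{\lambda_k\}$, $f$ positive outside a compact, achieved by possibly replacing $f$ by $-f$ and observing this only flips the amplitude by the fixed global sign $(-1)^{V+E}$) and your explicit triangle-inequality finish ($|\Ampl|\ge c_3'N^{\omega(G)}-CN^{\omega(G)-1}$) are also slightly more careful than the paper's corresponding sentences, though they carry the same content.

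So: correct proof, same route, with a couple of the paper's terse assertions fleshed out --- most usefully the reuse of Lemma~\ref{lem:injection fi vertices to fi edges} to prove the monotonicity of $\omega$ under artificial breaking.
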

\begin{proof}
%
Let $m:=\max_{v\in G^0}\deg(v)$. Using Lemma~\ref{lem:limit}, we may fix some $R_I:=R_{i_1,\dotsc,i_n}$ such that for all~$R_I<|\lambda_{i_{n+1}}|\le\dotsc\le |\lambda_{n+U}|$ we have \begin{equation*}
    \sgn f'\{\lambda_{i_{j_1}},\dotsc \lambda_{i_{j_l}}, \lambda_{i_{k_1}},\dotsc\lambda_{i_{k_{l'}}}\}=\sgn f'\{\lambda_{i_{j_1}},\dotsc \lambda_{i_{j_l}}\}
\end{equation*} 
for any choice of $\{j_1,\dotsc,j_l\}\subseteq\{1,\dotsc,n\}$, $\{k_1,\dotsc,k_{l'}\}\subseteq\{n+1,\dotsc,n+U\}$, $1\le l'\le m-l$.
Fix some~$i_R\in\N$ such that $|\lambda_i|>R_I$ for all $i\ge i_R$. We can split up the sum of a graph amplitude as 
\begin{align}
    \sum_{i_{n+1},\ldots,i_{n+U}=1}^N &= \sum_{i_{n+1},\ldots,i_{n+U}=i_R}^N\nonumber \\
    &+ \sum_{i_{n+1}=1}^{i_{R}-1} \sum_{i_{n+2},\ldots,i_{n+U}=i_R}^N + \sum_{i_{n+2}=1}^{i_{R}-1} \sum_{i_{n+1}, i_{n+3},\ldots,i_{n+U}=i_R}^N +\cdots\nonumber\\
    &+\sum_{i_{n+1}, i_{n+2}=1}^{i_{R}-1} \sum_{i_{n+3},\ldots,i_{n+U}=i_R}^N+\cdots\label{eqn: sum_split}\\
    &+\cdots\nonumber\\
    &+ \sum_{i_{n+1},\ldots,i_{n+U}=1}^{i_R-1}.\nonumber
\end{align}
A succinct way to write this is
\begin{align}
    \sum_{i_{n+1},\ldots,i_{n+U}=1}^N &=\sum_{\bfr\subseteq\mathcal U}\sum_{\substack {i_b=1\\(b\in\bfr)}}^{i_R-1}\sum_{\substack{i_u=i_R\\(u\in\mathcal U\setminus\bfr)}}^N,
\end{align}
where 
$(b\in\bfr)$ and $(u\in \mathcal{U}\setminus\bfr)$ indicate the repeated sums, and $\mathcal U$ is the set of unbroken faces of $G$. By slight abuse of notation, we will use~$b,u$ to refer to either faces in $\mathcal{U}$ or their associated index numbers.
Note that each of the internal (repeated) sums
\begin{align}\label{eq:artificially broken}
\sum_{\substack{i_u=i_R\\(u\in\mathcal U\setminus\bfr)}}^N\frac{\prod_{v\in G^0} f'\{i(\alpha^v_1),\ldots,i(\alpha^v_{\deg(v)})\}}{\prod_{e\in G^1_{\textnormal{int}}} f'\{i(\beta^e_1),i(\beta^e_2)\}}
\end{align}
describes the restricted amplitude of an artificially broken graph $G_{\bfr,\gamma}$, 
for which the associated fixed indices are given by ${\gamma: \bfr\to \{1,\dotsc,i_R-1\}}$, $\gamma(b):=i_{b}$. Note that by construction of $i_R$, all summands in~\eqref{eq:artificially broken} have the same sign $\sigma_{b,\gamma}\in\{\pm1\}$. Hence we obtain 
\begin{equation*}
    \Ampl_{N,i_1,\ldots,i_n}(G) 
    =\sum_{\bfr\subseteq \mathcal{U}}\sum_{\gamma:\bfr\to\{1,\dotsc,i_R-1\}}\sigma_{b,\gamma}|\Ampl_{N,{i_1},\dotsc,i_n,\gamma}^{\ge i_R}(G_{\bfr})|.
\end{equation*}
Now assume that there exist~$M,c_3'>0$ such that for all~${N\ge M}$ we have~\eqref{eqn: assume_lower_restr}. Combined with Theorem~\ref{thm:upper}, this yields a two-sided bound on all restricted amplitudes in the double sum above, 
and hence there exist constants $c_{b,\gamma}>0$ such that
\begin{equation*}
    |\Ampl_{N,i_1,\ldots,i_n}(G)| \ge \left|\sum_{\bfr\subseteq \mathcal{U}}\sum_{\gamma:\bfr\to\{1,\dotsc,i_R-1\}}\sigma_{b,\gamma}c_{b,\gamma}N^{\omega(G_{\bfr})}\right|.
\end{equation*}

Note that if $\bfr\neq\emptyset$, we have $\omega(G_{\bfr})\le U-1+(p/d)(E_{\f}-V_{\f})=\omega(G)-1$, hence the asympotic behaviour of the amplitude of $G$ is dominated by the restricted amplitude corresponding to~$G_{\bfr}$ with $\bfr=\emptyset$, which concludes the proof.
\end{proof}
We use the remainder of this section to show that~\eqref{eqn: assume_lower_restr} holds for an artificially broken graph~$G_{\mathfrak{b},\gamma}$. However, as the choices of $\mathfrak{b}$ and $\gamma$ play no further role, we suppress them in the notation. Likewise, the external indices of the artificially broken graph (given by $i_1,\ldots,i_k$ and $\gamma(b)=i_b$ ($b\in\mathfrak{b}$)) are denoted $i_1,\ldots,i_n$, and the amount of remaining unbroken faces by $U$.

We employ the first conclusion of Lemma \ref{lem: restricted_lower_sufficient}. As all terms in the sum 
\begin{equation*}
    |\Ampl^{\ge i_R}_{N,i_1,\dotsc,i_n}(G)|=|\lambda_{i_1}^{-1}\cdots\lambda_{i_n}^{-1}|\sum_{i_{n+1},\ldots,i_{n+U}=i_R}^N\frac{\prod_{v\in G^0} |f'\{i(\alpha^v_1),\ldots,i(\alpha^v_{\deg(v)})\}|}{\prod_{e\in G^1_{\textnormal{int}}} |f'\{i(\beta^e_1),i(\beta^e_2)\}|} 
\end{equation*}
are positive, we can further restrict our running indices to any subset. In particular, for $N$ large enough such that $\lfloor N/2\rfloor\ge i_R$ we have \begin{align}\label{eq:Ampl}
|\Ampl^{\ge i_R}_{N,i_1,\ldots,i_n}(G)|\gtrsim\sum_{\substack{i_{n+1},\ldots,i_{n+U}\in\{\lfloor N/2\rfloor,\ldots,N\}\\ |\lambda_{i_{n+1}}|\leq\cdots\leq|\lambda_{i_{n+U}}|}}\frac{\prod_{v\in G^0} |f'\{i(\alpha^v_1),\ldots,i(\alpha^v_{\deg(v)})\}|}{\prod_{e\in G^1_{\textnormal{int}}}|f'\{i(\beta^e_1),i(\beta^e_2)\}|}.
\end{align}

\begin{lem}\label{lem:amount of terms}
    There exist $M,c_1>0$ such that, for all $N\geq M$,
    $$\sum_{\substack{i_{n+1},\ldots,i_{n+U}\in\{\lfloor N/2\rfloor,\ldots,N\}\\ |\lambda_{i_{n+1}}|\leq\cdots\leq|\lambda_{i_{n+U}}|}}1\geq c_1N^U.$$
    Moreover, there exist $M,c_2>0$ such that for each $N\geq M$ and for each term in the sum, we have $$|\lambda_{i_{n+1}}|\geq c_2 N^{1/d}.$$
\end{lem}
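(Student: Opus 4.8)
The plan is to prove the two statements separately; both are soft, relying only on elementary counting together with the eigenvalue asymptotics \eqref{eq:eigenvalues asymptotics assumption}.

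For the cardinality bound I would show that imposing the non-decreasing condition on the values $(|\lambda_{i_{n+1}}|,\ldots,|\lambda_{i_{n+U}}|)$ costs at most a factor $U!$. Set $s:=\#\{\lfloor N/2\rfloor,\ldots,N\}$, so that $s\ge N/2$ for every $N\ge 1$. Any tuple $(i_{n+1},\ldots,i_{n+U})\in\{\lfloor N/2\rfloor,\ldots,N\}^U$ can be reordered into a tuple that is non-decreasing in the $|\lambda_{\cdot}|$-values (picking, say, the lexicographically first such reordering when there are ties), and this reordered tuple is one of those counted in the lemma. Since each non-decreasing tuple arises in this way from at most $U!$ tuples, the number of non-decreasing tuples is at least $s^U/U!\ge (N/2)^U/U!$, so the first claim holds with, e.g., $M=1$ and $c_1=1/(2^U U!)$. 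Note that this argument uses nothing about the $\lambda_k$ beyond the size of the summation set, hence is insensitive to coincidences among the $|\lambda_k|$.

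For the lower bound on $|\lambda_{i_{n+1}}|$ I would only use that every summand has $i_{n+1}\ge\lfloor N/2\rfloor$ (all running indices lie in $\{\lfloor N/2\rfloor,\ldots,N\}$). Taking $M:=\max(4,2K)$ with $K$ as in \eqref{eq:eigenvalues asymptotics assumption}, the lower bound in \eqref{eq:eigenvalues asymptotics assumption} gives, for $N\ge M$, that $|\lambda_{i_{n+1}}|\ge c_1 i_{n+1}^{1/d}\ge c_1\lfloor N/2\rfloor^{1/d}\ge c_1 4^{-1/d}N^{1/d}$, where $c_1$ is here the constant from \eqref{eq:eigenvalues asymptotics assumption}; this proves the second claim with $c_2:=c_1 4^{-1/d}$.

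I do not anticipate a real obstacle here. The only points needing care are (i) keeping the counting argument free of any tacit assumption that $k\mapsto|\lambda_k|$ is injective on the range — the sorting/permutation formulation handles this cleanly, and indeed the ``worst case'' for the count is precisely when all $|\lambda_k|$ are distinct — and (ii) the notational clash, since $c_1,c_2$ denote both the lemma's constants and those of \eqref{eq:eigenvalues asymptotics assumption}; in the final write-up I would rename the latter (e.g.\ to $C_1,C_2$) to avoid confusion.
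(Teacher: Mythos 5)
Your argument is correct and uses essentially the same elementary counting strategy as the paper: the paper lower-bounds the count by the number of $U$-element subsets of $\{\lfloor N/2\rfloor,\ldots,N\}$ (each of which admits a non-decreasing ordering), yielding $\binom{\lfloor N/2\rfloor}{U}\gtrsim N^U$, whereas you lower-bound it by $s^U/U!$ via the many-to-one sorting map — the same idea packaged slightly differently, with both giving $\gtrsim N^U$. Your treatment of the second claim matches the paper's word for word in substance.
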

\begin{proof}
First, there are at least $\vect{\lfloor N/2\rfloor}{U}$
 options to choose $i_{n+1},\ldots,i_{n+U}$ out of $\{\lfloor N/2\rfloor,\ldots,N\}$ such that $|\lambda_{i_{n+1}}|\leq\cdots\leq|\lambda_{i_{n+U}}|$. Clearly,
${\lfloor N/2\rfloor \choose U}\gtrsim{N/4 \choose U}\gtrsim \frac{(N/4)!}{(N/4-U)!}\geq (N/4-U)^U\gtrsim N^U.$
Second, we notice that for $i\geq \lfloor N/2\rfloor $ we have $|\lambda_i|\gtrsim \lfloor N/2\rfloor^{1/d}\gtrsim N^{1/d}$ by assumption on the growth of the eigenvalues.
%
\end{proof}

We come to the core of the proof of Theorem \ref{thm:lower bound}, the lower bound part of our first power counting theorem. It remains to combine \eqref{eq:Ampl} with the graph-theoretical Lemma \ref{lem:injection fi vertices to fi edges}, the above lemmas, and the results of Section \ref{sct:estimates}.
\begin{proof}[Proof of Theorem \ref{thm:lower bound}]
By Theorem~\ref{thm:no_vanishing_divdifs}, if both indices $\beta_1^e$ and $\beta_2^e$ are running indices, we have 
$$|f'\{i(\beta_1^e),i(\beta_2^e)\}|\lesssim |\lambda_{i(\beta_1^e)\wedge i(\beta_2^e)}|^{-p}.$$
If at least one of the indices is fixed, we have the bound $f'\{i(\beta_1^e),i(\beta_2^e)\}\lesssim 1$. Thus in the denominator of~\eqref{eq:Ampl} we obtain $E_{\textnormal{fi}}$ factors of $|\lambda_{i_{n+1}}|^{-p}$.

Furthermore, by Theorem~\ref{thm:no_vanishing_divdifs} we know that
\begin{align}\label{eq:vertex bound}
|f'\{i(\alpha^v_1),\ldots,i(\alpha^v_{\deg(v)})\}|\gtrsim |\lambda_{i}|^{-p}\quad \textnormal{for all }i\in\{i(\alpha^v_1),\ldots,i(\alpha^v_{\deg(v)})\}.
\end{align}
If one of the indices is fixed we thus have the bound $|{f'\{i(\alpha^v_1),\ldots,i(\alpha^v_{\deg(v)})\}|\gtrsim 1}$. In the numerator of \eqref{eq:Ampl} we therefore obtain $V_{\textnormal{fi}}$ factors of $|\lambda_i|^{-p}$, for possibly different $i$'s. 

We employ the injection $\Gamma_G$ from Lemma \ref{lem:injection fi vertices to fi edges} from fully internal vertices to fully internal edges. For any fully internal vertex $v$ the edge $e=\Gamma_G(v)$ yields $|\lambda_{i(\beta_1^e)\wedge i(\beta_2^e)}|^{-p}$ in the denominator. 
For the vertex-bound \eqref{eq:vertex bound} above we can choose the index arbitrarily, hence we also obtain the factor~$|\lambda_i|^{-p}$ in the numerator for the same $i$ (where $i=i(\beta_1^e)\wedge i(\beta_2^e)$). 
This happens for any fully internal vertex~$v$, so all $V_{\textnormal{fi}}$ factors of $|\lambda_i|^{-p}$ in the numerator cancel against~$V_{\textnormal{fi}}$ of the $E_{\textnormal{fi}}$ factors of $|\lambda_i|^{-p}$ in the denominator. We are left with only $E_{\textnormal{fi}}-V_{\textnormal{fi}}$ factors of $|\lambda_i|^{-p}$ in the denominator, i.e., $E_{\textnormal{fi}}-V_{\textnormal{fi}}$ factors of $|\lambda_i|^{p}$. Note that $E_{\textnormal{fi}}-V_{\textnormal{fi}}\geq0$, and that $|\lambda_i|^{p}\geq |\lambda_{i_{n+1}}|^{p}$. From~\eqref{eq:Ampl} we thus obtain
\begin{align*}
|\Ampl^{\ge i_R}_{N,i_1,\ldots,i_n}(G)|&\gtrsim\sum_{\substack{i_{n+1},\ldots,i_{n+U}\in\{\lfloor N/2\rfloor,\ldots,N\}\\
|\lambda_{i_{n+1}}|\leq\cdots\leq|\lambda_{i_{n+U}}|}}|\lambda_{i_{n+1}}|^{(E_{\textnormal{fi}}-V_{\textnormal{fi}})p}.
\end{align*}
By Lemma \ref{lem:amount of terms},  we have $|\lambda_{i_{n+1}}|\gtrsim N^{1/d}$ for all $i_{n+1}$ in the above sum. Since $p(E_{\textnormal{fi}}-V_{\textnormal{fi}})\geq0$, we obtain
\begin{align*}
|\Ampl^{\ge i_R}_{N,i_1,\ldots,i_n}(G)|
&\gtrsim \sum_{\substack{i_{n+1},\ldots,i_{n+U}\in\{\lfloor N/2\rfloor,\ldots,N\}\\
|\lambda_{i_{n+1}}|\leq\cdots\leq|\lambda_{i_{n+U}}|}}N^{(E_{\textnormal{fi}}-V_{\textnormal{fi}})p/d}\\
&=N^{(E_{\textnormal{fi}}-V_{\textnormal{fi}})p/d}\sum_{\substack{i_{n+1},\ldots,i_{n+U}\in\{\lfloor N/2\rfloor,\ldots,N\}\\
|\lambda_{i_{n+1}}|\leq\cdots\leq|\lambda_{i_{n+U}}|}}1.
\end{align*}
An application of Lemma \ref{lem:amount of terms} to the above estimate yields
\begin{align*}
|\Ampl^{\ge i_R}_{N,i_1,\ldots,i_n}(G)|
&\gtrsim N^{U+\frac{p}{d}(E_{\textnormal{fi}}-V_{\textnormal{fi}})},
\end{align*}
which by Lemma~\ref{lem: restricted_lower_sufficient} is sufficient to conclude the proof.
\end{proof}

\begin{rem}\label{rem:2-point 2-loop}
From Theorem \ref{thm:main} we may derive, at each loop order, the set of relevant diagrams. For example, the set of 2-point 2-loop diagrams (with vertices of valence $\geq3$) with maximal order of divergence is
\begin{align*}
~&
\raisebox{-20pt}{
\begin{tikzpicture}[thick]
\draw (-1,0) to (1,0);
\draw (-.7,.7) arc (180:0:.7cm);
\draw (0,0) to[out=160,in=-90] (-.7,.7);
\draw (0,0) to[out=20,in=-90] (.7,.7);
\draw (-.3,.6) arc (180:0:.3cm);
\draw (0,0) to[out=110,in=-90] (-.3,.6);
\draw (0,0) to[out=70,in=-90] (.3,.6);
\vertex{0,0};
\node at (-1.15,0) {\footnotesize $1$};
\node at (1.15,0) {\footnotesize$2$};
\end{tikzpicture}}
&+&
\raisebox{-20pt}{
\begin{tikzpicture}[thick]
\draw (-1,0) to (1,0);
\draw (-.7,.7) arc (180:0:.7cm);
\draw (0,0) to[out=160,in=-90] (-.7,.7);
\draw (0,0) to[out=20,in=-90] (.7,.7);
\draw (0,0) to (0,.5);
\draw (0,.5) arc (-90:270:.3cm);
\vertex{0,0};
\vertex{0,.5};
\node at (-1.15,0) {\footnotesize $1$};
\node at (1.15,0) {\footnotesize$2$};
\end{tikzpicture}}
&+&
\raisebox{-20pt}{
\begin{tikzpicture}[thick]
\draw (-1,0) to (1,0);
\draw (-.7,.7) arc (180:0:.7cm);
\draw (0,0) to[out=160,in=-90] (-.7,.7);
\draw (0,0) to[out=20,in=-90] (.7,.7);
\draw (-.3,.8) arc (180:360:.3cm);
\draw (0,1.4) to[out=-160,in=90] (-.3,.8);
\draw (0,1.4) to[out=-20,in=90] (.3,.8);
\vertex{0,0};
\vertex{0,1.4};
\node at (-1.15,0) {\footnotesize $1$};
\node at (1.15,0) {\footnotesize$2$};
\end{tikzpicture}}
&+&
\raisebox{-20pt}{
\begin{tikzpicture}[thick]
\draw (-1,0) to (1,0);
\draw (-.7,.7) arc (180:0:.7cm);
\draw (0,0) to[out=160,in=-90] (-.7,.7);
\draw (0,0) to[out=20,in=-90] (.7,.7);
\draw (0,1.4) to (0,.9);
\draw (0,.9) arc (90:450:.3cm);
\vertex{0,0};
\vertex{0,1.4};
\vertex{0,.9};
\node at (-1.15,0) {\footnotesize $1$};
\node at (1.15,0) {\footnotesize$2$};
\end{tikzpicture}}
\\
+&
\raisebox{-20pt}{
\begin{tikzpicture}[thick]
\draw (-1,-.5) to (1,-.5);
\draw (0,-.5) to (0,0);
\draw (-.7,.7) arc (180:0:.7cm);
\draw (0,0) to[out=160,in=-90] (-.7,.7);
\draw (0,0) to[out=20,in=-90] (.7,.7);
\draw (-.3,.6) arc (180:0:.3cm);
\draw (0,0) to[out=110,in=-90] (-.3,.6);
\draw (0,0) to[out=70,in=-90] (.3,.6);
\vertex{0,0};
\vertex{0,-.5};
\node at (-1.15,-.5) {\footnotesize $1$};
\node at (1.15,-.5) {\footnotesize$2$};
\end{tikzpicture}}
&+&
\raisebox{-20pt}{
\begin{tikzpicture}[thick]
\draw (-1,-.5) to (1,-.5);
\draw (0,-.5) to (0,0);
\draw (-.7,.7) arc (180:0:.7cm);
\draw (0,0) to[out=160,in=-90] (-.7,.7);
\draw (0,0) to[out=20,in=-90] (.7,.7);
\draw (0,0) to (0,.5);
\draw (0,.5) arc (-90:270:.3cm);
\vertex{0,0};
\vertex{0,.5};
\vertex{0,-.5};
\node at (-1.15,-.5) {\footnotesize $1$};
\node at (1.15,-.5) {\footnotesize$2$};
\end{tikzpicture}}
&+&
\raisebox{-20pt}{
\begin{tikzpicture}[thick]
\draw (-1,-.5) to (1,-.5);
\draw (0,-.5) to (0,0);
\draw (-.7,.7) arc (180:0:.7cm);
\draw (0,0) to[out=160,in=-90] (-.7,.7);
\draw (0,0) to[out=20,in=-90] (.7,.7);
\draw (-.3,.8) arc (180:360:.3cm);
\draw (0,1.4) to[out=-160,in=90] (-.3,.8);
\draw (0,1.4) to[out=-20,in=90] (.3,.8);
\vertex{0,0};
\vertex{0,1.4};
\vertex{0,-.5};
\node at (-1.15,-.5) {\footnotesize $1$};
\node at (1.15,-.5) {\footnotesize$2$};
\end{tikzpicture}}
&+&
\raisebox{-20pt}{
\begin{tikzpicture}[thick]
\draw (-1,-.5) to (1,-.5);
\draw (0,-.5) to (0,0);
\draw (-.7,.7) arc (180:0:.7cm);
\draw (0,0) to[out=160,in=-90] (-.7,.7);
\draw (0,0) to[out=20,in=-90] (.7,.7);
\draw (0,1.4) to (0,.9);
\draw (0,.9) arc (90:450:.3cm);
\vertex{0,0};
\vertex{0,1.4};
\vertex{0,.9};
\vertex{0,-.5};
\node at (-1.15,-.5) {\footnotesize $1$};
\node at (1.15,-.5) {\footnotesize$2$};
\end{tikzpicture}}
\\
+&
\raisebox{-20pt}{
\begin{tikzpicture}[thick]
\draw (-1,0) to (1,0);
\draw (-.7,.7) arc (180:0:.7cm);
\draw (0,0) to[out=160,in=-90] (-.7,.7);
\draw (0,0) to[out=20,in=-90] (.7,.7);
\draw (0,1.4) to (0,0);
\vertex{0,0};
\vertex{0,1.4};
\node at (-1.15,0) {\footnotesize $1$};
\node at (1.15,0) {\footnotesize$2$};
\end{tikzpicture}}
&+&
\raisebox{-20pt}{
\begin{tikzpicture}[thick]
\draw (-1,0) to (1,0);
\draw (-.7,.7) arc (180:0:.7cm);
\draw (0,0) to[out=160,in=-90] (-.7,.7);
\draw (0,0) to[out=20,in=-90] (.7,.7);
\draw (-.7,.7) to (.7,.7);
\vertex{0,0};
\vertex{-.7,.7};
\vertex{.7,.7};
\node at (-1.15,0) {\footnotesize $1$};
\node at (1.15,0) {\footnotesize$2$};
\end{tikzpicture}}
&+&
\raisebox{-20pt}{
\begin{tikzpicture}[thick]
\draw (-1,-.5) to (1,-.5);
\draw (0,-.5) to (0,0);
\draw (-.7,.7) arc (180:0:.7cm);
\draw (0,0) to[out=160,in=-90] (-.7,.7);
\draw (0,0) to[out=20,in=-90] (.7,.7);
\draw (0,1.4) to (0,0);
\vertex{0,0};
\vertex{0,1.4};
\vertex{0,-.5};
\node at (-1.15,-.5) {\footnotesize $1$};
\node at (1.15,-.5) {\footnotesize$2$};
\end{tikzpicture}}
&+&
\raisebox{-20pt}{
\begin{tikzpicture}[thick]
\draw (-1,-.5) to (1,-.5);
\draw (0,-.5) to (0,0);
\draw (-.7,.7) arc (180:0:.7cm);
\draw (0,0) to[out=160,in=-90] (-.7,.7);
\draw (0,0) to[out=20,in=-90] (.7,.7);
\draw (-.7,.7) to (.7,.7);
\vertex{0,0};
\vertex{0,-.5};
\vertex{-.7,.7};
\vertex{.7,.7};
\node at (-1.15,-.5) {\footnotesize $1$};
\node at (1.15,-.5) {\footnotesize$2$};
\end{tikzpicture}}
\\
+&
\raisebox{-15pt}{
\begin{tikzpicture}[thick]
\draw (-1.2,0) to (-.8,0);
\draw (-.8,0) to[out=90,in=90] (.8,0);
\draw (-.8,0) to[out=-90,in=-90] (.8,0);
\draw (-.8,0) to[out=45,in=90] (0,0);
\draw (-.8,0) to[out=-45,in=-90] (0,0);
\draw (.8,0) to (1.2,0);
\vertex{-.8,0};
\vertex{.8,0};
\node at (-1.35,0) {\footnotesize $1$};
\node at (1.35,0) {\footnotesize$2$};
\end{tikzpicture}}
&+&
\raisebox{-15pt}{
\begin{tikzpicture}[thick]
\draw (-1.2,0) to (-.8,0);
\draw (-.8,0) to[out=90,in=90] (.8,0);
\draw (-.8,0) to[out=-90,in=-90] (.8,0);
\draw (-.8,0) to (-0.3,0);
\draw (-.3,0) to[out=45,in=90] (.5,0);
\draw (-.3,0) to[out=-45,in=-90] (.5,0);
\draw (.8,0) to (1.2,0);
\vertex{-.8,0};
\vertex{.8,0};
\vertex{-.3,0};
\node at (-1.35,0) {\footnotesize$1$};
\node at (1.35,0) {\footnotesize$2$};
\end{tikzpicture}}
&+&
\raisebox{-15pt}{
\begin{tikzpicture}[thick]
\draw (-1.2,0) to (-.8,0);
\draw (-.8,0) to[out=90,in=90] (.8,0);
\draw (-.8,0) to[out=-90,in=-90] (.8,0);
\draw (0,.5) to[out=-160,in=180] (0,-.1);
\draw (0,.5) to[out=-20,in=0] (0,-.1);
\draw (.8,0) to (1.2,0);
\vertex{-.8,0};
\vertex{.8,0};
\vertex{0,.5};
\node at (-1.35,0) {\footnotesize$1$};
\node at (1.35,0) {\footnotesize$2$};
\end{tikzpicture}}
&+&
\raisebox{-15pt}{
\begin{tikzpicture}[thick]
\draw (-1.2,0) to (-.8,0);
\draw (-.8,0) to[out=90,in=90] (.8,0);
\draw (-.8,0) to[out=-90,in=-90] (.8,0);
\draw (0,.5) to (0,.05);
\draw (0,.05) arc (90:450:.2cm);
\draw (.8,0) to (1.2,0);
\vertex{-.8,0};
\vertex{.8,0};
\vertex{0,.5};
\vertex{0,.05};
\node at (-1.35,0) {\footnotesize$1$};
\node at (1.35,0) {\footnotesize$2$};
\end{tikzpicture}}
\\
+&
\raisebox{-15pt}{
\begin{tikzpicture}[thick]
\draw (-1.2,0) to (-.8,0);
\draw (-.8,0) to[out=90,in=90] (.8,0);
\draw (-.8,0) to[out=-90,in=-90] (.8,0);
\draw (-.8,0) to[out=0,in=-90] (0,.5);
\draw (.8,0) to (1.2,0);
\vertex{0,.5};
\vertex{-.8,0};
\vertex{.8,0};
\node at (-1.35,0) {\footnotesize $1$};
\node at (1.35,0) {\footnotesize$2$};
\end{tikzpicture}}
&+&
\raisebox{-15pt}{
\begin{tikzpicture}[thick]
\draw (-1.2,0) to (-.8,0);
\draw (-.8,0) to[out=90,in=90] (.8,0);
\draw (-.8,0) to[out=-90,in=-90] (.8,0);
\draw (.8,0) to[out=180,in=-90] (0,.5);
\draw (.8,0) to (1.2,0);
\vertex{0,.5};
\vertex{-.8,0};
\vertex{.8,0};
\node at (-1.35,0) {\footnotesize $1$};
\node at (1.35,0) {\footnotesize$2$};
\end{tikzpicture}}
&+&
\raisebox{-15pt}{
\begin{tikzpicture}[thick]
\draw (-1.2,0) to (-.8,0);
\draw (-.8,0) to[out=90,in=90] (.8,0);
\draw (-.8,0) to[out=-90,in=-90] (.8,0);
\draw (.8,0) to (1.2,0);
\draw (-.4,.4) arc (180:360:.4cm);
\vertex{-.4,.4};
\vertex{.4,.4};
\vertex{-.8,0};
\vertex{.8,0};
\node at (-1.35,0) {\footnotesize $1$};
\node at (1.35,0) {\footnotesize$2$};
\end{tikzpicture}}
&+&
\quad"1\leftrightarrow2"
\\
+&
\raisebox{-15pt}{
\begin{tikzpicture}[thick]
\draw (-1.2,0) to (-.8,0);
\draw (-.8,0) to[out=90,in=90] (.8,0);
\draw (-.8,0) to[out=-90,in=-90] (.8,0);
\draw (-.8,0) to (.8,0);
\draw (.8,0) to (1.2,0);
\vertex{-.8,0};
\vertex{.8,0};
\node at (-1.35,0) {\footnotesize $1$};
\node at (1.35,0) {\footnotesize$2$};
\end{tikzpicture}}
&+&
\raisebox{-15pt}{
\begin{tikzpicture}[thick]
\draw (-1.2,0) to (-.8,0);
\draw (-.8,0) to[out=90,in=90] (.8,0);
\draw (-.8,0) to[out=-90,in=-90] (.8,0);
\draw (.8,0) to (1.2,0);
\draw (0,.5) to (0,-.5);
\vertex{0,.5};
\vertex{0,-.5};
\vertex{-.8,0};
\vertex{.8,0};
\node at (-1.35,0) {\footnotesize $1$};
\node at (1.35,0) {\footnotesize$2$};
\end{tikzpicture}}.
\end{align*}
Each has an order of divergence of $2+\frac{p}{d}$ (and no less).
One notices these diagrams are automatically planar. They are moreover connected and, though not necessarily 1PI, satisfy a similar connectivity property. Namely, their dual graphs stay connected after removing the vertices that correspond to the broken faces of the original graph. For instance, the diagram
$$\begin{tikzpicture}[thick]
\draw (-1.5,0) to (-1,0);
\draw (-1,0) to[out=90,in=90] (0,0);
\draw (-1,0) to[out=-90,in=-90] (0,0);
\draw (1,0) to[out=90,in=90] (0,0);
\draw (1,0) to[out=-90,in=-90] (0,0);
\draw (1,0) to (1.5,0);
\vertex{-1,0};
\vertex{0,0};
\vertex{1,0};
\node at (-1.65,0) {\footnotesize $1$};
\node at (1.65,0) {\footnotesize$2$};
\end{tikzpicture}$$
does not appear, because its two unbroken faces do not share an edge.
\end{rem}

\section{Power counting when allowing eigenvalues with vanishing derivative}\label{sct:main2}
In our first main theorem, Theorem \ref{thm:main}, we assumed that $f'[\lambda_{k_1},\ldots,\lambda_{k_n}]\neq0$. With $\{\lambda_k\}_{k\geq1}$ in general position, this assumption is typically satisfied. However, there will be at least one~${x\in\R}$ such that $f'(x)=0$, which means that the assumption of nonvanishing divided differences is sensitive to infinitesimal changes in $\{\lambda_k\}_{k\geq1}$. In fact, we shall show that the actual order of divergence of many graphs is sensitive to the question of whether or not $f'$ vanishes on $\{\lambda_k\}_{k\geq1}$.

We will refer to faces with an index $i_0$ such that $f'(\lambda_{i_0})=0$ as \emph{0-faces} or \emph{singular faces}, and similarly for indices and eigenvalues.

As in Section~\ref{sect: lower}, given a subset $\bfr\subseteq\mathcal U$ of the set of unbroken faces of a Feynman ribbon graph~$G$, we let~$G_{\bfr}$ be the graph obtained from $G$ by artificially declaring the faces in $\mathfrak{b}$ to be broken. 
Elements of $\mathfrak{b}$ are called 0-faces.
\begin{thm}\label{thm:main2}
Let $f\in C^{\infty}(\mathbb{R})_\R$ be an even function with $f'$ of precise order $-p-1$ for some $p \in\mathbb{R}_{\ge0}$.
Let~$\{\lambda_k\}_{k=1}^\infty$ be a sequence of real numbers with constants $K,c_1,c_2>0$ such that $c_1k^{1/d}\leq |\lambda_k|\leq c_2k^{1/d}$ for all $k\geq K$. Assume that for all $k,l\in\N_{\geq1}$ we have $f'[\lambda_{k},\lambda_{l}]\neq0$. Then, for any Feynman ribbon graph $G=(G^0,n,G^1)$ whose vertices have valence $\geq3$, for all external indices $i_1,\ldots,i_n\in\N_{\geq1}$, there exist $M,c_4>0$ such that for all~$N\geq M$,
$$|\Ampl_{N,i_1,\ldots,i_n}(G)|\leq c_4 N^{\tilde\omega(G)},$$
$$\tilde \omega(G):=\max_{\bfr\subseteq\mathcal U}\omega_{\bfr}(G_{\bfr}):=\max_{\bfr\subseteq\mathcal U}(U^{\bfr}+\frac{p}{d}(E^{\bfr}_{\textnormal{fi}}-V^{\bfr}_{\textnormal{fi}})+\frac{p+1}{d}(E^{\bfr}_{10}-V^{\bfr}_{10})),$$
where $U^{\bfr}$ is the number of unbroken faces of $G_{\bfr}$, $E^{\bfr}_{\textnormal{fi}}$ is the number of fully internal edges of $G_{\bfr}$ (propagators bordered on both sides by unbroken faces) and $V^{\bfr}_{\textnormal{fi}}$ is the number of fully internal vertices of $G_{\bfr}$ (vertices bordered on all sides by unbroken faces). Respectively, $E^{\bfr}_{10}$ and $V^{\bfr}_{10}$ are the number of edges and vertices of~$G_{\bfr}$ that border exactly one 0-face and for the rest unbroken faces.
\end{thm}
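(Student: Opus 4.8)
The plan is to prove this purely upper bound by partitioning the sum over the running indices of $G$ according to which of them land on a singular eigenvalue, and reducing each resulting piece to an estimate for an artificially broken graph $G_{\bfr}$ of the kind already handled in Section~\ref{sect:proof_power_counting}, but now refined by the sharper decay near $0$-faces from Lemma~\ref{lem:zero-index bounds}. The structural fact that makes this work is that, since $f'$ has precise order $-p-1$, it vanishes only on a compact set, so there are only \emph{finitely many} singular eigenvalues; hence the $0$-indices form a finite set and, for power-counting purposes, behave like fixed external indices. As before, by Remark~\ref{rem:translational invariance} we may assume $0\notin\{\lambda_k\}_{k=1}^\infty$.

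Fix $i_1,\dotsc,i_n$ and $G$. For a tuple $(i_{n+1},\dotsc,i_{n+U})$ let $\bfr\subseteq\mathcal U$ be the set of unbroken faces $u$ with $f'(\lambda_{i_u})=0$. Splitting $\sum_{i_{n+1},\dotsc,i_{n+U}=1}^N$ over all such $\bfr$, and using that $\{i_b:b\in\bfr\}$ then ranges over a finite set, each piece equals, up to a bounded factor, the amplitude of the artificially broken graph $G_{\bfr}$ (in the sense of Section~\ref{sect: lower}): the elements of $\bfr$ are declared broken and assigned those finitely many singular indices, and the $0$-faces of $G_{\bfr}$ are the elements of $\bfr$ together with the broken faces of $G$ carrying a singular external index. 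It therefore suffices to show, for each fixed $\bfr$, that $|\Ampl(G_{\bfr})|\lesssim N^{\omega_{\bfr}(G_{\bfr})}$; summing over the finitely many $\bfr$ and taking the maximum then gives $\lesssim N^{\tilde\omega(G)}$.

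For fixed $\bfr$ we rewrite the amplitude via weighted divided differences (Lemma~\ref{lem:Ampl for weighted divdifs}) and classify every vertex and internal edge of $G_{\bfr}$ by the kinds of faces it borders. A vertex or edge bordered only by unbroken faces of $G_{\bfr}$ is estimated as in Theorem~\ref{thm:upper}: by Theorem~\ref{thm:no_vanishing_divdifs}, applied to the cofinite subsequence of non-singular eigenvalues (which still satisfies its hypotheses, $f'[\lambda_k,\lambda_l]\neq0$ and $f'(\lambda_k)\neq0$ holding there), such a quantity is comparable to $|\lambda_i|^{-p}$ for $i$ the smallest-modulus bordering index. A vertex or edge bordering \emph{exactly one} $0$-face and otherwise only unbroken faces of $G_{\bfr}$ is estimated by Lemma~\ref{lem:zero-index bounds}: the vertex weighted divided difference is $\lesssim|\lambda_i|^{-p-1}$ for $i$ the smallest-modulus non-singular bordering index (there is at least one, by valence $\geq3$), while the edge weighted divided difference is $\gtrsim|\lambda_i|^{-p-1}$ for $i$ its unique non-singular bordering index. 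Every remaining vertex or edge — one touching a genuinely broken non-singular face, or two or more $0$-faces — is bounded above and below by positive constants, via Corollary~\ref{cor:wdivdifs are bounded} for the numerators and $f'[\lambda_k,\lambda_l]\neq0$ (together with nonvanishing of $f'$ far from $0$) for the denominators. Small-modulus indices, where Theorem~\ref{thm:weighted_divdiff_bound} and Lemma~\ref{lem:zero-index bounds} do not literally apply, are absorbed into these estimates using Corollary~\ref{cor:wdivdifs are bounded} and the fact that $|\lambda_k|$ is bounded away from $0$.

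It remains to cancel, in the quotient, the growing denominator factors against the decaying numerator factors and collect the powers. For the $|\lambda|^{\pm p}$ factors this is done exactly as in Theorem~\ref{thm:upper}, applying Lemma~\ref{lem:injection fi vertices to fi edges} to $G_{\bfr}$ to obtain an injection $\Gamma_1:V^{\bfr}_{\textnormal{fi}}\to E^{\bfr}_{\textnormal{fi}}$ sending each fully internal vertex to an incident fully internal edge bordering its minimal face. For the $|\lambda|^{\pm(p+1)}$ factors we need the analogue: an injection $\Gamma_2:V^{\bfr}_{10}\to E^{\bfr}_{10}$ sending each such vertex $v$ to an incident edge of $E^{\bfr}_{10}$ bordering the same $0$-face as $v$ and bordering the minimal non-singular face of $v$ (this also shows $E^{\bfr}_{10}-V^{\bfr}_{10}\geq0$, just as $E^{\bfr}_{\textnormal{fi}}-V^{\bfr}_{\textnormal{fi}}\geq0$). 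Given these, one chooses the index numbers so that $\Gamma_1,\Gamma_2$ are power-matching, so every vertex factor cancels against an edge factor; one is then left with $E^{\bfr}_{\textnormal{fi}}-V^{\bfr}_{\textnormal{fi}}$ denominator factors $|\lambda_\bullet|^{p}\lesssim N^{p/d}$ and $E^{\bfr}_{10}-V^{\bfr}_{10}$ denominator factors $|\lambda_\bullet|^{p+1}\lesssim N^{(p+1)/d}$, and summing the remaining $U^{\bfr}$ running indices up to $N$ yields $N^{\omega_{\bfr}(G_{\bfr})}$. I expect the construction and injectivity of $\Gamma_2$ — the extension of Lemma~\ref{lem:injection fi vertices to fi edges} to the mixed setting with $0$-faces — to be the main obstacle: the boundary of a $0$-face may be traversed several times by the same edge or vertex and is broken up by edges whose other side is genuinely broken, so the cyclic-following argument must be set up carefully, using that a vertex incident to such an edge automatically borders that broken face and hence is not of type $V^{\bfr}_{10}$; one must also check the borderline configurations (a vertex meeting one $0$-face and one other broken face, an edge meeting two $0$-faces) so that each vertex and edge of $G_{\bfr}$ lands in exactly one class, with the constant-bounded classes truly contributing no power of $N$.
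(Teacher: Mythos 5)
Your overall strategy matches the paper's quite closely: split the sum over running indices into pieces where the remaining indices can be estimated with the various bounds of Section \ref{sct:estimates}, encode each piece as an artificially broken graph $G_\bfr$, and cancel powers via an extension of Lemma~\ref{lem:injection fi vertices to fi edges}. One genuine difference: you split only according to which running indices are singular, whereas the paper splits by small versus large modulus (running indices below or above a threshold $i_R$), then varies a map $\gamma$ over all small values and proves a graph-theoretic maximization $\omega_{\bfr,\gamma}(G_\bfr)\le\omega_\bfr(G_\bfr)$. Your split is cleaner and dispenses with that maximization step; the cost is that you must handle non-singular small-modulus running indices in place, which, as you note, is harmless because in that regime $|\lambda|^{-p}$ and $|\lambda|^{-p-1}$ are comparable to constants, so Corollary~\ref{cor:wdivdifs are bounded} for numerators and nonvanishing of $f'[\lambda_k,\lambda_l]$ plus finiteness for denominators suffice.

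There are, however, two issues. First, you cannot invoke Theorem~\ref{thm:no_vanishing_divdifs} as you do. That theorem's hypothesis is $f'[\lambda_{k_1},\dotsc,\lambda_{k_m}]\ne 0$ for \emph{all} orders $m=1,\dotsc,n$, whereas Theorem~\ref{thm:main2} only assumes nonvanishing for $m=2$; restricting to non-singular eigenvalues secures $m=1$ but not $m\ge 3$. So the ``comparable to $|\lambda_i|^{-p}$'' lower bound for vertices is unavailable. Fortunately the upper bound is all you need for the numerator, and that comes from Theorem~\ref{thm:weighted_divdiff_bound} and Corollary~\ref{cor:wdivdifs are bounded}, which have no nonvanishing hypothesis; but as written your argument uses a result whose hypotheses fail under the theorem's assumptions, which should be repaired. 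Second, the injection $\Gamma_2$ you describe does not exist in general. Take a vertex $v$ whose cyclic face sequence is $(b,u_1,u_2,u_3)$ with $b$ the $0$-face and $u_1,u_2,u_3$ unbroken, and suppose $|\lambda_{i(u_2)}|<|\lambda_{i(u_1)}|,|\lambda_{i(u_3)}|$. The $10$-edges incident to $v$ are those separating $b$ from $u_1$ and from $u_3$; neither borders the minimal face $u_2$. Hence there is no incident $E^\bfr_{10}$ edge bordering both the $0$-face and the minimal non-singular face of $v$. The vertex bound of Lemma~\ref{lem:zero-index bounds} is genuinely in terms of the \emph{smallest} non-singular modulus, so you cannot relax the ``minimal face'' requirement on $\Gamma_2(v)$; the matching cancellation that works so smoothly in Theorem~\ref{thm:upper} (where the edge lower bound may be taken with respect to \emph{either} of its two faces) does not transfer verbatim. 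You rightly flag this as the main obstacle, but the specific $\Gamma_2$ you propose is too restrictive to exist. Some more careful bookkeeping is needed here — for instance, one may try to split the vertex factor $|\lambda_{\min}|^{-p-1}=|\lambda_{\min}|^{-1}\cdot|\lambda_{\min}|^{-p}$ and cancel the $|\lambda_{\min}|^{-p}$ part against a fully internal incident edge while absorbing the extra $|\lambda_{\min}|^{-1}$, or organize the pairing around the boundary of each $0$-face without insisting on hitting the minimal face.
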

\begin{proof}
Using the precise order of $f'$, there exists $R>0$ such that, for all $x\in\mathbb{R}\setminus[-R,R]$ and all~$0\le k\le \max_{v\in G^{0}}\;\deg(v)$, equation \eqref{eqn: precise_order} is satisfied and the conclusion of Lemma \ref{lem:zero-index bounds} holds. Let~$i_R$ be the lowest index such that $|\lambda_i| > R$ for $i\ge i_R$. For brevity we assume -- without loss of generality -- that $f'(\lambda_0)=0$ and $f'(\lambda_k)\neq0$ for~${k\geq 1}$. As in the proof of Lemma~\ref{lem: restricted_lower_sufficient}, a sum splitting argument yields
\begin{equation*}
    \Ampl_{N,i_1,\ldots,i_n}(G) = \sum_{\bfr\subseteq \mathcal{U}}\sum_{\gamma:\bfr\to\{0,\dotsc,i_R-1\}}\Ampl^{\ge i_R}_{N,{i_1},\dotsc,i_n,\gamma}(G_{\bfr}).
\end{equation*}
We claim that it suffices to show that \begin{align}\label{eqn: ampl_Gbgamma}
|\Ampl^{\ge i_R}_{N,{i_1},\dotsc,i_n,\gamma}(G_{\bfr})|&\lesssim N^{\omega_{\bfr,\gamma}(G_{\bfr})},\\
\omega_{\bfr,\gamma}(G_{\bfr})&:=(U^{\bfr}+\frac{p}{d}(E_{\textnormal{fi}}^b-V_{\textnormal{fi}}^b)+\frac{p+1}{d}(E_{10}^{\bfr,\gamma}-V_{10}^{\bfr,\gamma})),\nonumber
\end{align}
as the following graph-theoretical argument shows that $\omega_{\bfr,\gamma}(G_{\bfr})\le \omega_{\bfr}(G_{\bfr})$ for all such $\gamma$. In particular, if we let $\gamma_0$ denote the map that sends all artificially broken faces to the zero index, then $\omega_{\bfr,\gamma_0}(G_{\bfr})= \omega_{\bfr}(G_{\bfr})$, hence this bound is sharp.

More generally, we have $\omega_{\bfr}(G_{\bfr,\gamma})\leq\omega_{\bfr}(G_{\bfr,\gamma'})$ whenever $\gamma'$ is obtained from $\gamma$ by setting \begin{equation*}
    \gamma'(b)=\begin{cases}
        0,&b=b', \\
        \gamma(b),&\text{else},
    \end{cases}
\end{equation*}
for some $b'$ with $\gamma(b')\neq0$. Indeed, as the values $U^{\bfr}$, $\Efi^b$, and $\Vfi^b$ do not depend on $\gamma$, it suffices to show \begin{equation*}
E_{10}^{\bfr,\gamma}-V_{10}^{\bfr,\gamma}\leq E_{10}^{\bfr,\gamma'}-V_{10}^{\bfr,\gamma'}.
\end{equation*}
Towards this, first note that $E_{10}^{\bfr,\gamma'}-E_{10}^{\bfr,\gamma}$ equals the number of edges that border the face $b'$ on the one side and border an unbroken face (i.e.\ an element of $\mathcal{U}\setminus\bfr$) on the other. Moreover, $V_{10}^{\bfr,\gamma'}-V_{10}^{\bfr,\gamma}$ equals the number of vertices that border $b'$ and for the rest only unbroken faces. 
For each distinct vertex of the latter kind there is at least one distinct edge of the former kind, implying that $$ V_{10}^{\bfr,\gamma'}-V_{10}^{\bfr,\gamma}\leq E_{10}^{\bfr,\gamma'}-E_{10}^{\bfr,\gamma},$$which concludes the argument.

It remains to prove~\eqref{eqn: ampl_Gbgamma}. The strategy is similar to the proof of Theorem \ref{thm:upper}, but with different bounds for the edges and vertices, in which we distinguish three different types of indices and their corresponding eigenvalues.
We distinguish between (i) eigenvalues $\lambda_0$ such that $f'(\lambda_0)=0$, (ii) eigenvalues $\lambda_i$ for which $f'(\lambda_i)\neq0$ and the index $i<R$ is fixed, and (iii) eigenvalues $\lambda_k$ for which $f'(\lambda_k)\neq0$ and the index $k\geq R$ is running. Analogously to the proof of Theorem \ref{thm:upper}, noting Remark \ref{rem:apparent shortcut} and its resolution, the above estimate follows from the bounds on edges and vertices that are summarized and proved in Lemma \ref{lem:final bounds} below, where we stress that all constants that arise are independent of the running indices.
\end{proof}

\begin{lem}\label{lem:final bounds}
Let $n\geq3$. Let $f,\{\lambda_k\}_{k=1}^\infty$ be as in Theorem \ref{thm:main2}. Let $\lambda_0\in\R$ be such that $f'(\lambda_0)=0$. There exist $c_1,c_2>0$ such that we have
\begin{multicols}{2}
\begin{enumerate}
\item $|f'\{\lambda_{k_1},\lambda_{k_2}\}|>c_1|\lambda_{k_1}|^{-p}$,
\item $|f'\{\lambda_{i},\lambda_{k}\}|>c_1$,
\item $|f'\{\lambda_{i_1},\lambda_{i_2}\}|>c_1$,
\item $|f'\{\lambda_0,\lambda_k\}|> c_1|\lambda_k|^{-p-1}$,
\item $|f'\{\lambda_0,\lambda_i\}|>c_1$,
\item $|f'\{\lambda_0,\lambda_0\}|>c_1$,
\item $|f'\{\lambda_{k_1},\ldots,\lambda_{k_n}\}|\leq c_2|\lambda_{k_1}|^{-p}$,
\item $|f'\{\lambda_0,\lambda_{k_2},\ldots,\lambda_{k_n}\}|\leq c_2|\lambda_{k_2}|^{-p-1}$,
\item $|f'\{\lambda_{j_1},\ldots,\lambda_{j_n}\}|\leq c_2$,
\item[\vspace{\fill}] 
\item[\vspace{\fill}] 
\item[\vspace{\fill}]
\end{enumerate}
\end{multicols}
\noindent
for all $i,i_1,i_2\leq i_R$, $k,k_1,\ldots,k_n\geq i_R$ such that $f'(\lambda_j)\neq0$ for $j\in\{i,i_1,i_2,k,k_1,\ldots,k_n\}$ and such that $|\lambda_{k_1}|\leq\cdots\leq|\lambda_{k_n}|$, and for all $j_1,\ldots,j_n\in\N$.
\end{lem}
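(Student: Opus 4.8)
The plan is to prove the nine estimates by sorting them into three groups, each reducing to a result already available. Before starting I would record two harmless reductions. First, as in the discussion preceding Theorem~\ref{thm:no_vanishing_divdifs}, I would assume $0\notin\{\lambda_k\}_{k=1}^\infty$ (shifting argument), so that every eigenvalue occurring below is nonzero. Second, I would assume $f$ is positive outside a compact region: replacing $f$ by $-f$ or by $f$ plus a constant changes neither the vanishing of $f'(\lambda_0)$, nor the nonvanishing of the divided differences $f'[\lambda_k,\lambda_l]$, nor the absolute values $|f'\{\cdots\}|$, and an even $f$ whose $f'$ is of precise order $-p-1$ can always be brought into that form this way. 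I would also fix the threshold $R$ and index $i_R$ exactly as in the proof of Theorem~\ref{thm:main2}, so that $|\lambda_j|>R$ whenever $j\ge i_R$, the precise-order inequalities for $f'$ hold on $\R\setminus[-R,R]$ up to order $\max_{v\in G^0}\deg(v)$, and the conclusions of Lemma~\ref{lem:zero-index bounds} hold there; and I would note that $f''(\lambda_0)=f'[\lambda_0,\lambda_0]\neq0$ by the nonvanishing hypothesis of Theorem~\ref{thm:main2}, applied to the index that $\lambda_0$ carried before relabeling.

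The first group consists of the items whose arguments all have modulus $>R$, i.e.\ indices $\ge i_R$. Here items~(1) and~(7) are the lower and upper bounds of Theorem~\ref{thm:weighted_divdiff_bound} (now applicable since $f'$ is of precise order $-p-1$ and $f$ is positive outside a compact region); items~(4) and~(8) are the lower and upper bounds of Lemma~\ref{lem:zero-index bounds}, with $\lambda_0$ as the distinguished singular input and the large $\lambda_{k_j}$ as the remaining inputs; and item~(9) is Corollary~\ref{cor:wdivdifs are bounded} applied to $g=f'$, which has order at most $-p-1\le-1$ because $p\ge0$, so that $f'\{\cdot,\ldots,\cdot\}$ is bounded on all of $\R^n$. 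Each of these is a direct invocation.

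The second group is items~(3),~(5),~(6), which involve only finitely many index tuples, all with entries in $\{0,1,\ldots,i_R\}$. For these I would simply compute: $f'\{\lambda_{i_1},\lambda_{i_2}\}=\lambda_{i_1}\lambda_{i_2}f'[\lambda_{i_1},\lambda_{i_2}]$, which is nonzero by the nonvanishing hypothesis; $f'\{\lambda_0,\lambda_i\}=-\lambda_0\lambda_i f'(\lambda_i)/(\lambda_0-\lambda_i)$ (using $f'(\lambda_0)=0$), which is nonzero since $f'(\lambda_i)\neq0$ and $\lambda_0\ne\lambda_i$, $\lambda_0,\lambda_i\ne0$; and $f'\{\lambda_0,\lambda_0\}=\lambda_0^2 f''(\lambda_0)$, which is nonzero by the remark above. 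Taking $c_1$ to be the minimum of these finitely many nonzero moduli settles the group.

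The hard part, and really the only point needing a genuine uniformity argument, is item~(2): here $i$ ranges over finitely many indices $\le i_R$ with $f'(\lambda_i)\neq0$, but $k$ ranges over the infinite set $\{k:k\ge i_R\}$. For each fixed $i$ we have $f'\{\lambda_i\}=-\lambda_i f'(\lambda_i)\neq0$, so Lemma~\ref{lem:weighted_divdifs_bounds_small-large} (with $n=1$, $m=1$, which rests on the stability Lemma~\ref{lem:limit}) produces $R_i,C_1^i>0$ with $|f'\{\lambda_i,y\}|\ge C_1^i$ for all $|y|>R_i$, hence $|f'\{\lambda_i,\lambda_k\}|\ge C_1^i$ whenever $|\lambda_k|>R_i$; the finitely many exceptional indices $k\ge i_R$ with $|\lambda_k|\le R_i$ (finite since $|\lambda_k|\to\infty$) have $f'\{\lambda_i,\lambda_k\}=\lambda_i\lambda_k f'[\lambda_i,\lambda_k]\neq0$, again by the nonvanishing hypothesis. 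Minimizing the $C_1^i$ and the exceptional moduli over the finitely many $i$ (and $k$) gives the uniform $c_1$ for~(2), and the upper constant $c_2$ is just the maximum of those arising in items~(7),~(8),~(9). I would finish with the remark that the weight $(p+1)/d$ attached to $E^{\bfr}_{10},V^{\bfr}_{10}$ in Theorem~\ref{thm:main2} is traceable exactly to the extra power $-p-1$ in items~(4) and~(8), which in turn is forced by $f'(\lambda_0)=0$.
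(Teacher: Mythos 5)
Your proposal is correct and coincides with the paper's proof for eight of the nine items: (1),(7) via Theorem~\ref{thm:weighted_divdiff_bound}; (4),(8) via Lemma~\ref{lem:zero-index bounds}; (9) via Corollary~\ref{cor:wdivdifs are bounded}; and (3),(5),(6) by observing that the finitely many relevant weighted divided differences are each nonzero. The only deviation is item~(2): the paper simply invokes Theorem~\ref{thm:no_vanishing_divdifs} (applied to the sub-sequence of non-singular eigenvalues, whose growth assumption and nonvanishing hypotheses are inherited), which immediately gives $|f'\{\lambda_i,\lambda_k\}|\gtrsim \min(|\lambda_i|,|\lambda_k|)^{-p}\gtrsim 1$ uniformly since $i\le i_R$. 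You instead reprove this directly from Lemma~\ref{lem:weighted_divdifs_bounds_small-large} (with $n=m=1$), handling the finitely many exceptional $k$ separately. Both are valid; the paper's route is shorter because Theorem~\ref{thm:no_vanishing_divdifs} already packaged the argument you re-derived, while your version makes the finite-plus-asymptotic case split more explicit. Your preliminary remark that the positivity hypothesis of Theorem~\ref{thm:weighted_divdiff_bound} and Theorem~\ref{thm:no_vanishing_divdifs} can be arranged by adding a constant and/or flipping the sign of $f$ (neither of which changes $f'$ or $|f'\{\cdots\}|$) is a worthwhile observation the paper leaves implicit, as is your note that $f''(\lambda_0)\ne 0$ (needed for item (6)) comes from the hypothesis $f'[\lambda_k,\lambda_l]\ne 0$ applied to the eigenvalue relabeled as $\lambda_0$.
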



\begin{proof}
The estimates 1.\ and 7.\ follow from Theorem~\ref{thm:weighted_divdiff_bound}, 4.\ and 8.\ follow from Lemma~\ref{lem:zero-index bounds}, and 9.\ was shown in Corollary~\ref{cor:wdivdifs are bounded}. Estimates 3., 5., and 6.\ follow from the fact that there is a finite amount of indices $i<i_R$, and for each $i_1,i_2<i_R$ the (weighted) divided difference is nonzero. {Estimate 2.\ follows from Theorem~\ref{thm:no_vanishing_divdifs}.}
%
\end{proof}
We conclude this section with some remarks and examples illustrating the influence of 0-faces on the behaviour of amplitudes. 

\begin{rem}
The following explains the absence of a lower bound in Theorem \ref{thm:main2}. Suppose $i_1,i_2,i_3$ are such that $f'[i_1,i_2,i_3]=0$, $f'(\lambda_{i_1})\neq0$, and $f'(\lambda_{i_3})\neq0$. We then compute the amplitude
\begin{align*}
\raisebox{-19.5pt}{
\begin{tikzpicture}[thick]
\draw (-0.5,0.5) to (0,0);
\draw (-0.5,-.5) to (0,0);
\draw (0,0) to[out=60,in=120] (1,0);
\draw (0,0) to[out=-60,in=-120] (1,0);
\draw (1,0) to (1.5,0);
\node at (0,0.5) {$i_1$};
\node at (-.5,0) {$i_2$};
\node at (0,-.5) {$i_3$};
\vertex{0,0};
\vertex{1,0};
\end{tikzpicture}
}
=-\lambda_{i_1}^{-1}\lambda_{i_2}^{-1}\lambda_{i_3}^{-1}\sum_{k=1}^N\frac{f'\{i_1,i_2,i_3,k\}f'\{i_1,i_3,k\}}{f'\{i_1,k\}f'\{i_3,k\}}.
\end{align*}
Contrary to the situation where $f'[i_1,i_2,i_3]\neq0$, the factor $|f'\{i_1,i_2,i_3,k\}|$ obtained from the 4-vertex is proportional to $|\lambda_k|^{-1}$ as $k\to\infty$.
The other factors are proportional to $1$ as usual. We obtain
\begin{align*}
\raisebox{-19.5pt}{
\begin{tikzpicture}[thick]
\draw (-0.5,0.5) to (0,0);
\draw (-0.5,-.5) to (0,0);
\draw (0,0) to[out=60,in=120] (1,0);
\draw (0,0) to[out=-60,in=-120] (1,0);
\draw (1,0) to (1.5,0);
\node at (0,0.5) {$i_1$};
\node at (-.5,0) {$i_2$};
\node at (0,-.5) {$i_3$};
\vertex{0,0};
\vertex{1,0};
\end{tikzpicture}}
\sim\sum_{k=1}^N|\lambda_k|^{-1}\sim\sum_{k=1}^Nk^{-\frac{1}{d}}<\O(N).
\end{align*}
The order depends on $d$ but (since $d\geq0$) at least it is smaller than $\O(N)$ in the sense that there exists no $c$ such that the lower bound is $\geq cN$. For $d<1$ the graph is in fact finite.
\end{rem}

\begin{rem}
Theorem \ref{thm:main2} shows that the order of divergence depends on whether the external modes are in~$(f')^{-1}\{0\}$. For example, the calculated divergence of
\begin{align*}
\raisebox{-20pt}{
\begin{tikzpicture}[thick]
\draw (-1,0) to (1,0);
\draw (-.3,.6) arc (180:0:.3cm);
\draw (0,0) to[out=110,in=-90] (-.3,.6);
\draw (0,0) to[out=70,in=-90] (.3,.6);
\vertex{0,0};
\node at (-.6,.5) {$i$};
\node at (-.6,-.4) {$j$};
\end{tikzpicture}}
\end{align*}
is $N^{1}$ if $f'(\lambda_i)\neq0$ and $N^{1+(p+1)/d}$ if $f'(\lambda_i)=0$. The latter is only an upper bound, but is actually attained for a function $f$ that is defined outside a compact as $f(x)=1/|x|^{p}$ and satisfies $(f')^{-1}\{0\}=\{0\}$. 
\end{rem}

\begin{rem}
Even though there exist only finitely many eigenmodes $\lambda_k$ with $f'(\lambda_k)=0$, these singular modes can boost the order of divergence not only when occurring as external indices. For instance, assuming non-singular external indices, we have the divergences
\begin{align*}
\raisebox{-20pt}{
\begin{tikzpicture}[thick]
\draw (-1,0) to (1,0);
\draw (-.7,.7) arc (180:0:.7cm);
\draw (0,0) to[out=160,in=-90] (-.7,.7);
\draw (0,0) to[out=20,in=-90] (.7,.7);
\draw (-.3,.6) arc (180:0:.3cm);
\draw (0,0) to[out=110,in=-90] (-.3,.6);
\draw (0,0) to[out=70,in=-90] (.3,.6);
\vertex{0,0};
\end{tikzpicture}}
=\O(N^{2+\frac{p}{d}})
\quad\text{and}\quad
\raisebox{-20pt}{
\begin{tikzpicture}[thick]
\draw (-1,0) to (1,0);
\draw (-.7,.7) arc (180:0:.7cm);
\draw (0,0) to[out=160,in=-90] (-.7,.7);
\draw (0,0) to[out=20,in=-90] (.7,.7);
\draw (-.3,.6) arc (180:0:.3cm);
\draw (0,0) to[out=110,in=-90] (-.3,.6);
\draw (0,0) to[out=70,in=-90] (.3,.6);
\vertex{0,0};
\node at (0,.5) {$0$};
\end{tikzpicture}}
=\O(N^{1+\frac{p+1}{d}}).
\end{align*}
The latter is larger than the former precisely if $d<1$. More generally, at second loop order, the boost of UV-divergence by internal singular indices is not apparent for $d\geq1$. Indeed, if $d\geq1$, then the diagrams of Remark \ref{rem:2-point 2-loop} remain precisely those of maximal order, also in the more general setting of Theorem \ref{thm:main2}. 
However, for any $d\in\N$, at loop order $L=d+2$ the maximal diagrams become those where one of the faces is artificially broken by a singular index, such as
\begin{align}
\raisebox{-35pt}{
\begin{tikzpicture}[thick]
\draw (-1.5,0) to (-1,0);
\draw (-1,0) arc (-180:180:1cm);
\draw (-1,0) to (0.5,.866);
\draw (0.5,.866) to (0.5,-.866);
\draw (-1,0) to (0.5,-.866);
\draw (0.5,.866) to (.75,1.3);
\draw (0.5,-.866) to (.75,-1.3);
\vertex{-1,0};
\vertex{0.5,.866};
\vertex{0.5,-.866};
\end{tikzpicture}}
~~
=\O(N^{4+3p/d})\quad\text{and}\quad
\raisebox{-35pt}{
\begin{tikzpicture}[thick]
\draw (-1.5,0) to (-1,0);
\draw (-1,0) arc (-180:180:1cm);
\draw (-1,0) to (0.5,.866);
\draw (0.5,.866) to (0.5,-.866);
\draw (-1,0) to (0.5,-.866);
\draw (0.5,.866) to (.75,1.3);
\draw (0.5,-.866) to (.75,-1.3);
\node at (0,0) {$0$};
\vertex{-1,0};
\vertex{0.5,.866};
\vertex{0.5,-.866};
\end{tikzpicture}}
=\O(N^{3+3(p+1)/d}).\label{eq:3-point 4-loop}
\end{align}
The latter is larger than the former precisely if $d<3$. Graphs similar to the examples above yield divergences increased by breaking faces to singular modes for any $d<L-1$.
\end{rem}
\begin{rem}\label{rem:conjectures}
In the setting of Theorem \ref{thm:main2} and for $d<3$, the second graph of \eqref{eq:3-point 4-loop} has maximal order among the 3-point 4-loop diagrams, but not every 3-point 4-loop diagram that has maximal order in the setting of Theorem \ref{thm:main} has maximal order in the setting of Theorem \ref{thm:main2}. That being said, we conjecture that, among the $n$-points $L$-loop graphs, every graph with maximal order in the setting of Theorem \ref{thm:main2} has maximal order in the setting of Theorem \ref{thm:main}. We moreover conjecture that the Ward identity, in the sense of \cite{vNvS22b}, holds in both cases when restricting to the graphs of maximal order of divergence.
This might help generalize the results of \cite{vNvS22b} to higher loop, which is a pressing open problem.
\end{rem}

\begin{rem}\label{rem:UV/IR}
Smooth even functions $f$ satisfy $f'(0)=0$, and if $\{\lambda_k\}$ is the spectrum of a typical Dirac operator, these eigenvalues correspond to modes of zero momentum. The fact that UV-divergences can become higher as external momenta vanish is reminiscent of noncommutative quantum field theory, cf.\ \cite{MRS2000}. For instance, in the naive version of noncommutative $\phi^4_4$, such behavior was shown to lead to UV/IR mixing and, consequently, nonrenormalizability \cite{CR2000,CR2001,GW2005a}. This prompted the Grosse--Wulkenhaar model which solved the UV/IR problem \cite{GW2005b} and proved incredibly successful \cite{DGMR2007,GW2014}.



In light of this, it seems instructive to determine whether UV/IR-mixing is present in the spectral action matrix model and its relatives. This is a difficult question to answer,
requiring a careful renormalization analysis and the passage to continuous spectrum.
\end{rem}

\begin{rem}
If $D$ has continuous spectrum, one may hope to define amplitudes in a (UV- and IR-) regulator independent way, probably by using multiple operator integrals. Roughly, sums like $\sum_{k=1}^N |\lambda_k|^r\asymp N^{(r/d)+1}$ should be exchanged for integrals like $\int_{[-M,M]^d} \mathrm{d}^d k\|\slashed{k}\|^r\asymp M^{r+d}$ for $M=N^{1/d}$. We expect similar power counting formulas to hold.
\end{rem}
\begin{rem}
It would be great to have a power counting formula when $f$ is a polynomial, like in \cite{BCEG}. Although similar techniques may be employed, we expect the result to be quite different.
\end{rem}
\bibliographystyle{abbrv}
\bibliography{bibliography}
\end{document}